\numberwithin{equation}{section}
\theoremstyle{plain}
                          \newtheorem{thm}{Theorem}
                          \newtheorem{lem}{Lemma}
\theoremstyle{remark}         
\theoremstyle{definition} \newtheorem{defn}{Definition}
                          \newtheorem{exam}{Example}
\newcommand\om{\Omega}
\newcommand\real{\mathbb{R}}
\newcommand\A{\mathcal{A}}
\newcommand\B{\mathcal{B}}
\newcommand\C{\mathcal{C}}
\newcommand\bx{\bm{x}}
\newcommand\bF{\bm{F}}
\newcommand\bnu{\bm{\nu}}
\newcommand\bthe{\bm\theta}
\newcommand\bgam{\bm\gamma}
\newcommand\bphi{\bm\phi}
\newcommand\tgam{\tilde{\gamma}}
\newcommand\bi{\begin{itemize}}
\newcommand\ei{\end{itemize}}
\def\I{{\mathbf{1}}}
\def\woMR#1{\w@MR#1MR#1MR\relax}%
\def\w@MR#1MR#2MR#3\relax{#2}
\def\@MR#1 #2\relax#3{%
 \href{http://www.ams.org/mathscinet-getitem?mr=#1}%
 {\MRfixed{#3}}}%
\def\MRfixed{MR\woMR}%
\title{Markov adaptive P\'olya trees and multi-resolution adaptive shrinkage in nonparametric modeling}
\author{Li Ma}
\begin{document}
\maketitle

\doublespacing

\begin{abstract}
We introduce a hierarchical nonparametric model for probability measures based on a multi-resolution transformation of probability distributions. The model allows a varying amount of shrinkage to be applied to data features of different scales and/or at different locations in the sample space, and the varying shrinkage level is locally adaptive to the empirical behavior of the data. Moreover, the model's hierarchical design---through a latent Markov tree structure---allows borrowing of information across locations and scales in setting the adaptive shrinkage level. Inference under the model proceeds efficiently using general recipes for conjugate hierarchical models. We illustrate the work of the model in density estimation and evaluate its performance through simulation under several schematic scenarios carefully designed to be representative of a variety of applications. We compare its performance to those of several state-of-the-art nonparametric models---the P\'olya tree, the optional P\'olya tree, and the Dirichlet process mixture of normals. In addition, we establish several important theoretical properties for the model including absolute continuity, full nonparametricity, and posterior consistency. 
\end{abstract}
\newpage

\section{Introduction}
\label{sec:intro}
\vspace{-0.8em}

In his seminal works that jump-started modern Bayesian nonparametric inference, \cite{ferguson:1973,ferguson:1974} formalized the notion of a Dirichlet process (DP) and introduced a tail-free process that contains the DP as a special case. This tail-free process was later named the P\'olya tree (PT) due to its relationship to the P\'olya urn model \citep{mauldin:1992}, and was popularized in the 1990's by a sequence of works \citep{lavine:1992,mauldin:1992,lavine:1994} that investigated its various theoretical properties.

The PT produces probability measures through a multi-resolution generative procedure. 
The sample space is recursively bisected into smaller and smaller sets, and for each set $A$ that arises during the partition, the probability assigned to $A$ is randomly split between its two children $A_l$ and $A_r$ through the drawing of a Beta random variable corresponding to the proportion of mass assigned to $A_l$.

There is an impressive resemblance between the inference schema of the PT and that of other multi-resolution inference methods such as wavelet denoising \citep{draper:1999}. In particular, inference with the PT and wavelet analysis both adopt a ``divide-and-conquer'' strategy: a nonparametric quantity of interest is transformed into a collection of coefficients defined on a multi-resolution tree. These coefficients characterize the shape of the nonparametric quantity at different locations and scales. More specifically, for wavelets, each coefficient---called a wavelet coefficient (WC)---specifies the local contrast of the function value, while for the PT, each coefficient---which we shall refer to (and define formally later) as the probability assignment coefficient (PAC)---characterizes how probability mass is assigned locally. Estimating the underlying nonparametric quantity then proceeds through inferring the WCs and PACs. \ref{tab:mr} summarizes the analogy between multi-resolution density estimation using PT and wavelet-based function estimation.

\begin{table}[t]
    \begin{tabular}{ c | cc }
     & Wavelet denoising & PT-type density estimation \\
    \hline
    Quantity of interest & mean function & probability density\\
    Data & random function(s) with noise & i.i.d.\ observations from the density\\
    Unit of inference  & wavelet coefficient & probability assignment coefficient\\
    Unit model & Gaussian experiment & binomial experiment\\
  \hline\hline
    \end{tabular}

\vspace{0.8em}

\caption{Analogy between wavelet denoising and PT-type density estimation}
\label{tab:mr}
\vspace{-0.3em}
\end{table}

It is well-known in wavelet analysis that a key to effective estimation is appropriate shrinkage on the WCs  to differentiate ``signals'' from ``noise'' \citep{donoho:1994}. Numerous shrinkage methods, both frequentist and Bayesian, have been proposed for achieving data-adaptive shrinkage for wavelet denoising \citep{vidakovic:1999}. We note that the same is true for multi-resolution density estimation---appropriate shrinkage on the PACs is also critical for effective inference. However, existing methods such as the standard PT allows no adaptivity in shrinkage. In particular, the PT model places independent Beta priors on the PACs, which applies a prespecified, fixed amount of shrinkage determined by the prior Beta variance to the PACs. This is analogous to placing independent Gaussian priors with fixed variances on the WCs in a wavelet analysis. The following example illustrates how the lack of adaptivity in shrinkage can result in poor inference with the PT, especially when the underlying distribution contains structures of different scales.
\vspace{-0.5em}

\begin{exam}
\label{ex:motive}
We simulate 750 i.i.d data from the following mixture distribution on $[0,1]$
\vspace{-1.7em}

\[
0.1\,{\rm U}(0,1) + 0.3\,{\rm U}(0.25,0.5) + 0.4\,{\rm Beta}_{(0.25,0.5)}(2,2) + 0.2\,{\rm Beta}(6000,4000)
\]
\vspace{-2.7em}

\noindent where ${\rm Beta}_{(0.25,0.5)}(2,2)$ represents a Beta$(2,2)$ translated and scaled to be supported on the interval (0.25,0.5)---that is, the distribution with density $8(4x-1)(1-2x)$ on $(0.25,0.5)$. \ref{fig:ex1_motivating} illustrates the pdf (red dashed).
The ``hump'' on the interval $(0.25,0.5)$ constitutes a distributional structure of a relatively large scale or low resolution, while the spike given by Beta$(6000,4000)$ constitutes a small-scale or high-resolution feature.

\begin{figure}[ht]
  \centering
  \includegraphics[width=39em]{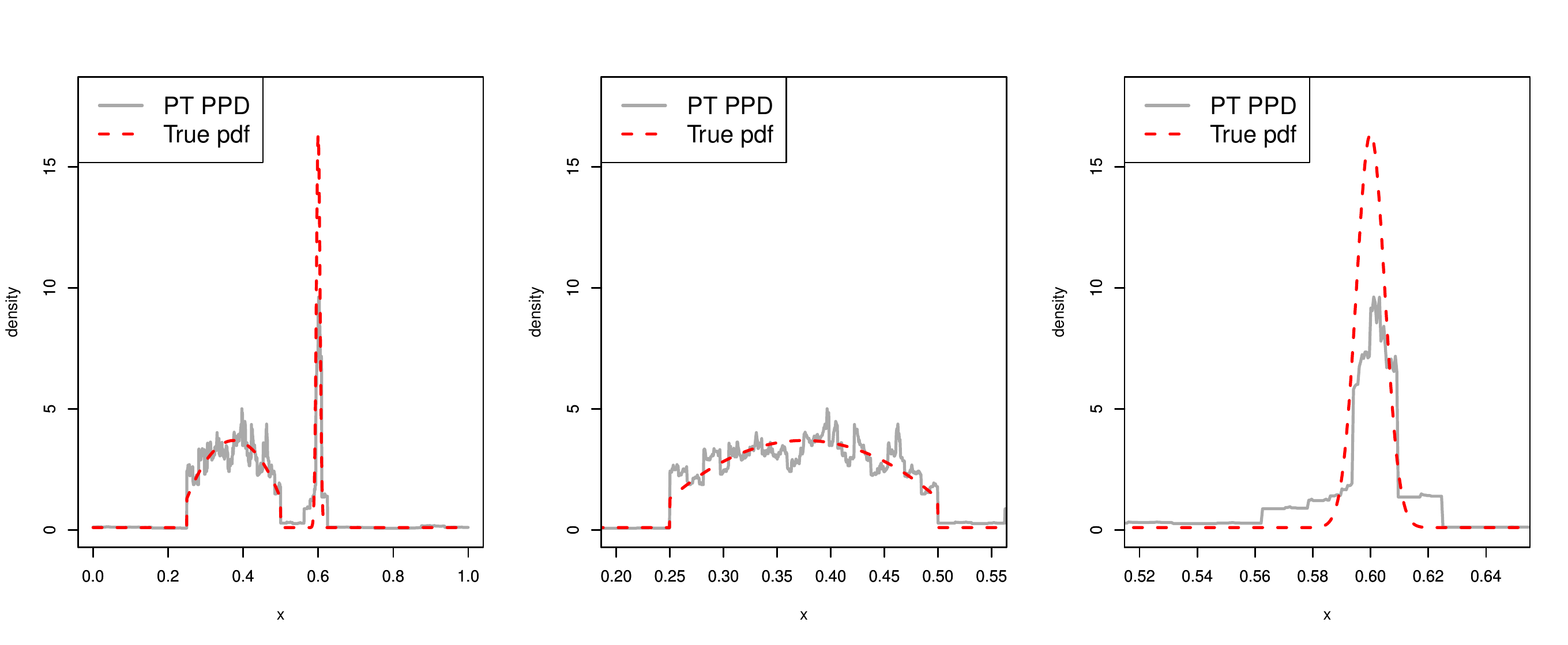}
  \vspace{-1.7em}

  \caption{The true density (red dashed) and the PPD of the PT (gray solid) for Example~\ref{ex:motive}. The middle and right plots give the zoom-in views of the low- and high-resolution features.}
  \label{fig:ex1_motivating}
\end{figure}

Let us place a PT prior on the underlying distribution corresponding to a Beta($k^2,k^2$) prior on for PACs at level $k$, which is the most common specification in applications of the PT. (See \cite{lavine:1992,walker:1999a,hanson:2002,hanson:2006,holmes:2009} for example.) The gray solid curves in \ref{fig:ex1_motivating} shows the posterior predictive density (PPD) of the PT. The middle and right plots give the zoom-in views of the large-scale and small-scale features respectively. We see that overall the PT induces a decent amount of shrinkage for capturing the low-resolution feature (middle), as the general shape of that feature is adequately recovered in the PPD. For the high-resolution feature, however, it results in too much shrinkage and thus under-estimation of the mode (right). Interestingly, if we zoom into higher resolutions within the low-resolution feature (middle), 
the PPD shows jumpy patterns of overfitting, indicating that more shrinkage is needed there at higher resolutions to ensure proper smoothness.
\end{exam}

The above example represents a typical situation---the appropriate amount of shrinkage varies across locations and scales. Only an oracle can {\em a priori} choose the prior variance of the PACs that give the optimal amount of shrinkage at all resolutions and locations. Popular convenient choices such as $k^2$ for the Beta parameters---or any other fixed function of $k$ for that matter---cannot provide the right amount of shrinkage at all locations and scales, just as independent Gaussian priors with pre-specified variances for the WCs are inadequate in wavelet denoising \citep{chipman:1997,vidakovic:1998,clyde:2002,clyde:2000}. 

Our main goal is to incorporate adaptive shrinkage into the multi-resolution nonparametric modeling framework adopted in the PT and related models, which we shall refer to generally as ``PT-type'' models. Both classical thresholding and Bayesian shrinkage methods can be adopted, and we shall take a hierarchical Bayesian approach focusing on constructing hyperpriors in the form of generative models on the shrinkage parameters, i.e.\ the prior variance of the PACs.
Our modeling approach provides a natural way to introducing dependency into the shrinkage levels for different PACs, thereby achieving borrowing of information across locations and scales in determining the level of shrinkage for each PAC. 
In particular, we construct a joint generative model using a latent variable representation that takes the form of a Markov tree \citep{crouse:1998} to achieve {\em stochastically} increasing shrinkage (defined later), which is critical for achieving effective locally adaptive smoothing in density estimation. This is in contrast to the {\em deterministically} increasing shrinkage imposed by the PT with prespecified increasing Beta parameters.

The rest of the work is organized as follows. In \ref{sec:method}, we present the core of our methodology. We begin by viewing the PT from a hierarchical modeling viewpoint, and interpret inference under the PT from a shrinkage perspective. Then we construct a simple hyperprior on the variance of PACs that treats the PACs independently. Finally we extend the hyperprior through a latent variable representation to a Markov tree hyperprior that incorporates dependency in the shrinkage levels. We provide guidelines for prior specification and establish both methodological and theoretical properties of the model. Specifically, we show that the model is fully nonparametric (i.e.\ with full prior support) and enjoys posterior consistency. Moreover, posterior inference under this model can be carried out conveniently using general recipes for conjugate hierarchical models. In particular, the full posterior can be analytically derived and sampled from exactly without resorting to MCMC.  In \ref{sec:numerical_examples} we illustrate how our method works in density estimation and evaluate its performance under various scenarios where the underlying density possesses a variety of features. We also compare our method to two state-of-the-art PT-type multi-resolution models---the PT and the optional P\'olya tree (OPT) \citep{wongandma:2010}---as well as to the very popular Dirichlet process mixture (DPM) of normals \citep{escobar:1995}. We conclude in \ref{sec:discussion} with brief remarks on applications to hypothesis testing and computational efficiency. 

We close this introduction by noting that adaptive shrinkage has been extensively studied in the context of wavelet denoising from both frequentist and Bayesian perspectives. The literature is too enormous to be enumerated. A far-from-exhaustive list of notable examples from the frequentist perspective include \cite{donoho:1994,donoho:1995a,donoho:1995b,donoho:1995c,nason:1995,abramovich:1995,nason:1996,johnstone:1997,donoho:1998,cai:1998,hall:1998,kolaczyk:1999,johnstone:1999,cai:1999,antoniadis:2001,johnstone:2005}, and from the Bayesian perspective include  \cite{chipman:1997,abramovich:1998,clyde:1998,crouse:1998,vidakovic:1998,vannucci:1999,moulin:1999,chang:2000,clyde:2000,romberg:2001,brown:2001,clyde:2002,portilla:2003,morris:2006}. For the particular application of density estimation, there is also a body of literature on wavelet-based methods. See for example \cite{vannucci:1998,donoho:1996,koo:2000,herrick:2001}. 
\vspace{-1.5em}

\section{Method}
\label{sec:method}
\vspace{-1em}

\subsection{Multi-resolution representation of probability distributions}
\vspace{-0.5em}

We start by introducing some basic concepts and definitions that form the building blocks for multi-resolution modeling of probability distributions. Throughout this work, we let $\om$ denote the sample space, which can be finite or a (possibly unbounded) Euclidean rectangle such as an interval in $\real$ or a rectangle in $\real^{p}$. Let $\mu$ be the natural measure associated with $\om$, which is the counting measure if $\om$ is finite and the Lebesgue measure if $\om$ is Euclidean.

Let $\A^1,\A^{2},\ldots,\A^{k},\ldots$ be a {\em sequence of nested dyadic partitions} of $\om$. That is, each $\A^{k}=\{A_{k,1},A_{k,2},\ldots,A_{k,2^{k}}\}$ and it satisfies (i) $\om=\cup_{m=1}^{2^k} A_{k,m}$, (ii) $A_{k,m_1}\cap A_{k,m_2}=\emptyset$ for all $m_1\neq m_2$, and (iii) $A_{k,m}=A_{k+1,2m-1}\cup A_{k+1,2m}$ for all $k=1,2,\ldots$ and $m=1,2,\ldots,2^k$. In other words, the partition $\A_{k+1}$ is obtained by dividing each set in $\A^{k}$ into two children, the {\em left child} $A_{k+1,2m-1}$ and the {\em right child} $A_{k+1,2m}$. We shall call $\A^{k}$ the partition at resolution (or scale) $k$. Also, we let
$\A^{(\infty)}=\cup_{k=1}^{\infty} \A^{k}$,  the totality of all partition sets that arise in all resolution levels. The partition sets form a bifurcating tree, so from now on we shall refer to $\A^{(\infty)}$ as the {\em partition tree}, and each $A$ in $\A^{(\infty)}$ as a {\em node}. Because each $A$ corresponds to a particular location and scale in the tree, we also interchangeably refer to a node as a {\em location-scale combination}.

Given a partition tree $\A^{(\infty)}$ that generates the Borel $\sigma$-algebra, 
one can describe a probability distribution $G$ by specifying how probability mass is split between the left and right children on each node $A$. Let $A_l$ and $A_r$ be the left and right children of a node $A$. We define the {\em probability assignment coefficient} (PAC) for $A$ to be the proportion of probability mass assigned to $A_l$, and denote it as $\theta(A)$. So if the total probability mass on $A$ is $G(A)$ then those assigned to the children are $G(A_l)=G(A)\theta(A)$ and $G(A_r)=G(A)(1-\theta(A))$. 
\begin{lem}
\label{lem:uniqueness_of_PACs}
Given a partition tree $\A^{(\infty)}$ that generates the Borel $\sigma$-algebra, every probability distribution $G$ can be mapped to a collection of PACs $\{\theta(A):A\in\A^{(\infty)}\}$, and the mapping is unique on all $A$s such that $G(A)>0$.
\end{lem}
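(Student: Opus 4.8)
The plan is to exhibit the map $G\mapsto\{\theta(A):A\in\A^{(\infty)}\}$ explicitly and then read off both existence and uniqueness directly from the mass-splitting identities $G(A_l)=G(A)\theta(A)$ and $G(A_r)=G(A)(1-\theta(A))$. Since $\A^{(\infty)}$ generates the Borel $\sigma$-algebra, each node $A$ together with its children $A_l$ and $A_r$ is Borel measurable, so $G(A)$, $G(A_l)$, $G(A_r)$ are all well defined; moreover $A=A_l\cup A_r$ is a disjoint union, so finite additivity gives $G(A)=G(A_l)+G(A_r)$ and monotonicity gives $0\le G(A_l)\le G(A)$. With these facts in hand, define, for every node $A$,
\[
\theta(A) = \begin{cases} G(A_l)/G(A) & \text{if } G(A)>0,\\ \text{an arbitrary value in }[0,1] & \text{if } G(A)=0.\end{cases}
\]

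The next step is to check that this is a legitimate collection of PACs, i.e.\ that $\theta(A)\in[0,1]$ and that both mass-splitting identities hold at every node. The range claim is immediate from $0\le G(A_l)\le G(A)$. For the identities: when $G(A)>0$ the first is the definition and the second follows from $G(A_r)=G(A)-G(A_l)=G(A)\bigl(1-\theta(A)\bigr)$; when $G(A)=0$, monotonicity forces $G(A_l)=G(A_r)=0$, so both identities read $0=0$ whatever value we assigned to $\theta(A)$. This shows the map is well defined on the set of all probability measures on $\om$.

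For uniqueness, fix a node $A$ with $G(A)>0$ and let $\tilde\theta(A)\in[0,1]$ be any number with $G(A_l)=G(A)\tilde\theta(A)$; dividing by $G(A)>0$ forces $\tilde\theta(A)=G(A_l)/G(A)=\theta(A)$, so the PAC at such a node is pinned down by $G$ alone, with no arbitrary choice entering. Conversely, on a node with $G(A)=0$ every value in $[0,1]$ satisfies both identities, so the PAC is genuinely unconstrained there; this is precisely why the uniqueness assertion must be restricted to nodes $A$ with $G(A)>0$.

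I do not anticipate a substantive obstacle: each $\theta(A)$ is defined locally at its own node, with no consistency condition to verify across the partition tree, so the only point needing care is the bookkeeping for zero-mass nodes and its bearing on the uniqueness statement. (The assumption that $\A^{(\infty)}$ generates the Borel $\sigma$-algebra is used above only for measurability of the partition sets; it becomes genuinely essential for the converse direction---reconstructing $G$ from a prescribed collection of PACs---where one appeals to uniqueness of the extension of a measure from a generating $\pi$-system, namely the partition sets.)
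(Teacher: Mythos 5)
Your existence argument coincides with the paper's: the explicit formula $\theta(A)=G(A_l)/G(A)$ on nodes with $G(A)>0$ and an arbitrary (the paper takes zero) value elsewhere, with the mass-splitting identities checked locally. Your uniqueness argument---dividing $G(A_l)=G(A)\theta(A)$ by $G(A)>0$---is correct for the reading that the PAC values at positive-mass nodes are forced by $G$. The paper, however, treats the uniqueness clause as the opposite direction of the correspondence: its proof invokes the fact that $\A^{(\infty)}$ is a $\pi$-system generating the Borel $\sigma$-algebra, and by the uniqueness-of-extension theorem concludes that two distributions sharing the same PACs on all positive-mass nodes coincide up to a $\mu$-null set, i.e.\ the PAC representation faithfully determines $G$. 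That is the property the lemma is actually used for in the sequel (``inference on a distribution can be achieved by inferring the PACs''), and in the paper's proof the generating hypothesis is the crux of the argument, not merely a measurability convenience as in yours. You do identify exactly this $\pi$-system/extension argument in your closing parenthetical, but you classify it as a converse lying outside the lemma's scope; to match the paper you would promote that parenthetical into the body of the proof. In short: your division argument cleanly settles that the map $G\mapsto\{\theta(A)\}$ is well defined and pinned down at positive-mass nodes, while the paper's extension-theorem argument supplies the injectivity (recovery of $G$ from its PACs) on which the downstream methodology relies; a complete account of the lemma as the paper intends it needs both, and you have all the ingredients.
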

\vspace{-0.5em}

\noindent Remark: A collection of PACs corresponding to $G$ is given by $\theta(A)=G(A_l)/G(A)$ for all $A$ with $G(A)>0$ and $\theta(A)=0$ otherwise.
\vspace{0.5em}

\ref{fig:mr_decomp}(a) illustrates the transformation of a distribution into PACs. Each PAC specifies the structure of the distribution at a given scale and location.
\begin{figure}[t]
\vspace{-0.5em}
  \begin{center}

    \mbox{
      \subfigure[Transforming a distribution into PACs (red ticks)]{\includegraphics[width=0.57\textwidth,clip=TRUE, trim=55mm 75mm 60mm 45mm]{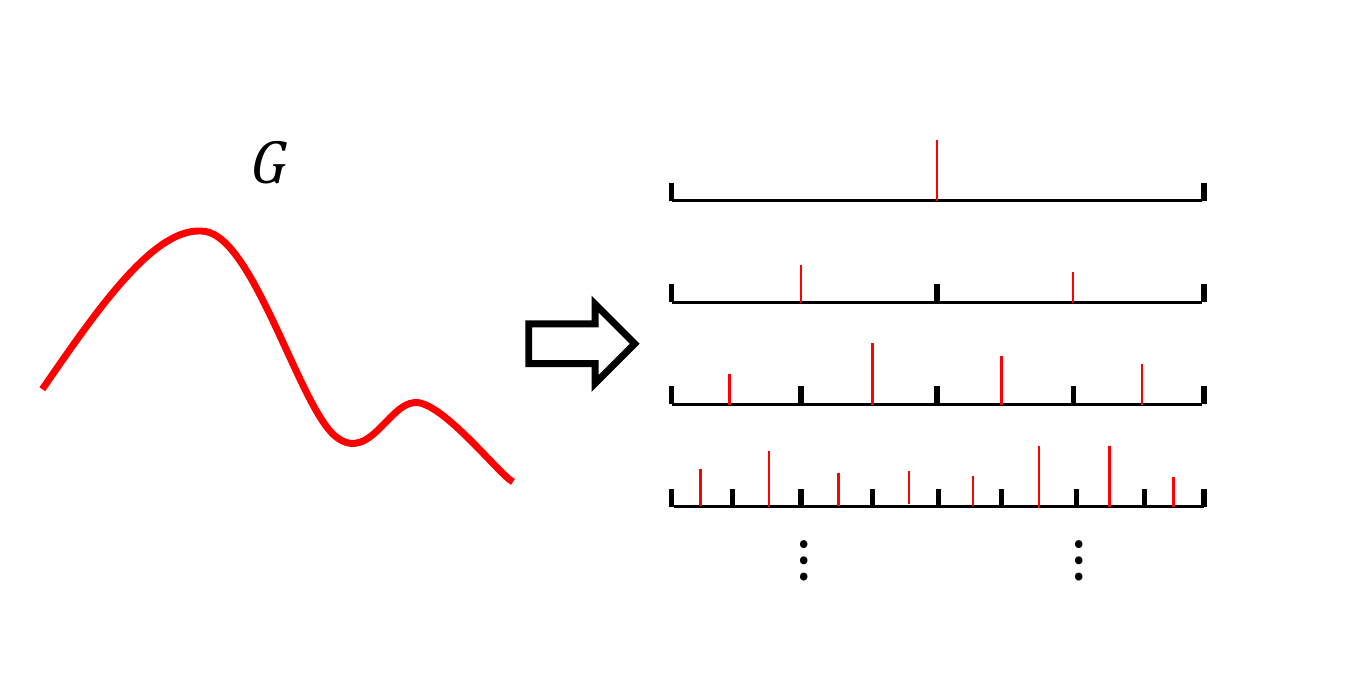}}
      \subfigure[Local binomial experiment]{\includegraphics[width=0.45\textwidth, clip=TRUE,trim=65mm 70mm 70mm 45mm]{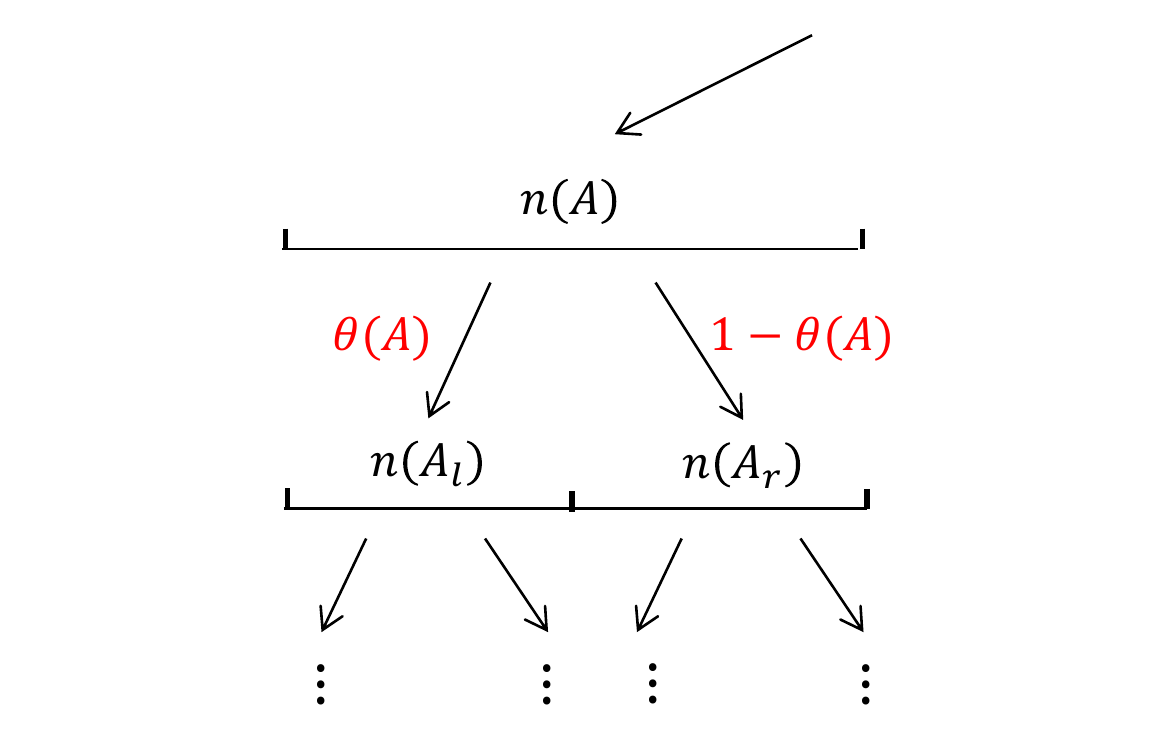}}
   }
    \caption{The divide-and-conquer inference schema.}
    \label{fig:mr_decomp}
  \end{center}
\vspace{-1.5em}
\end{figure}

The lemma implies that inference on a distribution can be achieved by inferring the PACs, which motivates a ``divide-and-conquer'' strategy for nonparametric inference. More specifically, the multi-resolution transformation of $G$ into PACs induces a corresponding decomposition of the statistical experiment that generates an i.i.d.\ sample of size $n$ from $G$. In particular, the experiment is divided into a collection of local binomial experiments carried out sequentially in the order of the resolution: $\left\{ n(A_l) \sim {\rm Binomial}\left(n(A),\theta(A)\right): A\in \A^{k}\right\}$ for $k=1,2,\ldots$ where $n(A)$ is the number of data points in $A$ arising from the binomial experiment on $A$'s parent in the previous resolution $k-1$, except that for $k=1$, $n(A)=n$ by design.  See \ref{fig:mr_decomp}(b) for an illustration of the local binomial experiment.

Accordingly, inferring a distribution through the PACs is divided into inference on the success probabilities of a collection of sequential binomial experiments. Viewed this way, the PT model provides a simple solution to this problem---it places independent conjugate Beta priors on the successes probabilities. The posterior conjugacy of the PT follows immediately from the Beta-binomial conjugacy. Thus inference under PT model is analytically tractable and computationally efficient.

\vspace{-1em}

\subsection{Multi-resolution shrinkage and the adaptive P\'olya tree}
\vspace{-0.5em}

One can now understand the shrinkage property of the PT by viewing each of the binomial experiment from a shrinkage perspective. Under the PT model, $\theta(A)\sim{\rm Beta}(\alpha_l(A),\alpha_r(A))$ for all $A\in\A^{(\infty)}$. (A popular specification has $\alpha_l(A)=\alpha_r(A)=k^2$ for $A\in \A^k$.) We shall prefer an alternative parametrization of Beta distributions in terms of a mean parameter $\theta_0(A)=\alpha_l(A)/(\alpha_l(A)+\alpha_r(A))$ and a precision parameter $\nu(A)=\alpha_l(A)+\alpha_r(A)$. The posterior distribution of $\theta(A)$ is still Beta with mean parameter
\vspace{-1em}

\[
\tilde{\theta}_0(A) = E(\theta(A)|\bx) = \theta_0(A)\cdot \frac{\nu(A)}{\nu(A)+n(A)} + \frac{n(A_l)}{n(A)} \cdot \frac{n(A)}{\nu(A)+n(A)}
\]
and precision parameter $\tilde{\nu}(A)=\nu(A)+n(A)$.
The posterior mean is a weighted average between $\theta_0(A)$, or the prior mean, and $n(A_l)/n(A)$, or the PAC on $A$ of the empirical distribution. The level of shrinkage for $\theta(A)$ is controlled by the precision parameter $\nu(A)$, and thus we shall refer to $\nu(A)$ also as the {\em shrinkage parameter}. 

The prior mean of the PT is the probability distribution corresponding to the collection of PACs $\{\theta_0(A):A\in \A^{(\infty)}\}$, which we call $Q_0$, while its posterior mean is the distribution corresponding to $\{\tilde{\theta}_0(A):A\in\A^{(\infty)}\}$ as the PACs. Intuitively, the posterior mean of a PT is a weighted average between $Q_0$ and the empirical distribution, but the weighting is scale and location dependent. \ref{fig:graphical}(a) provides a graphical model representation of the PT. 

\begin{figure}[t]
  \begin{center}
    \vspace{-2em}

    \mbox{
      \subfigure[P\'olya tree]{\includegraphics[width=0.26\textwidth,clip=TRUE, trim=75mm 28mm 70mm 70mm]{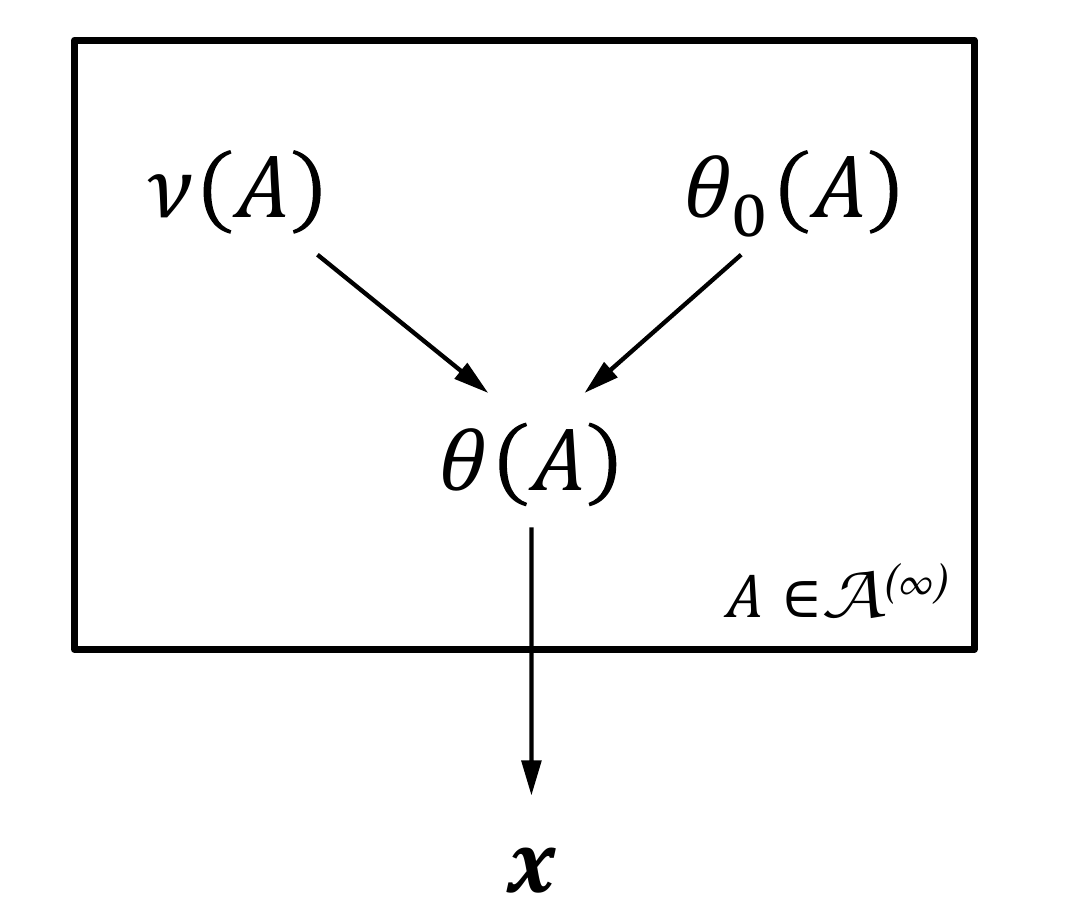}}\hspace{2.8em}

      \subfigure[Adaptive P\'olya tree]{\includegraphics[width=0.26\textwidth,clip=TRUE, trim=75mm 28mm 75mm 50mm]{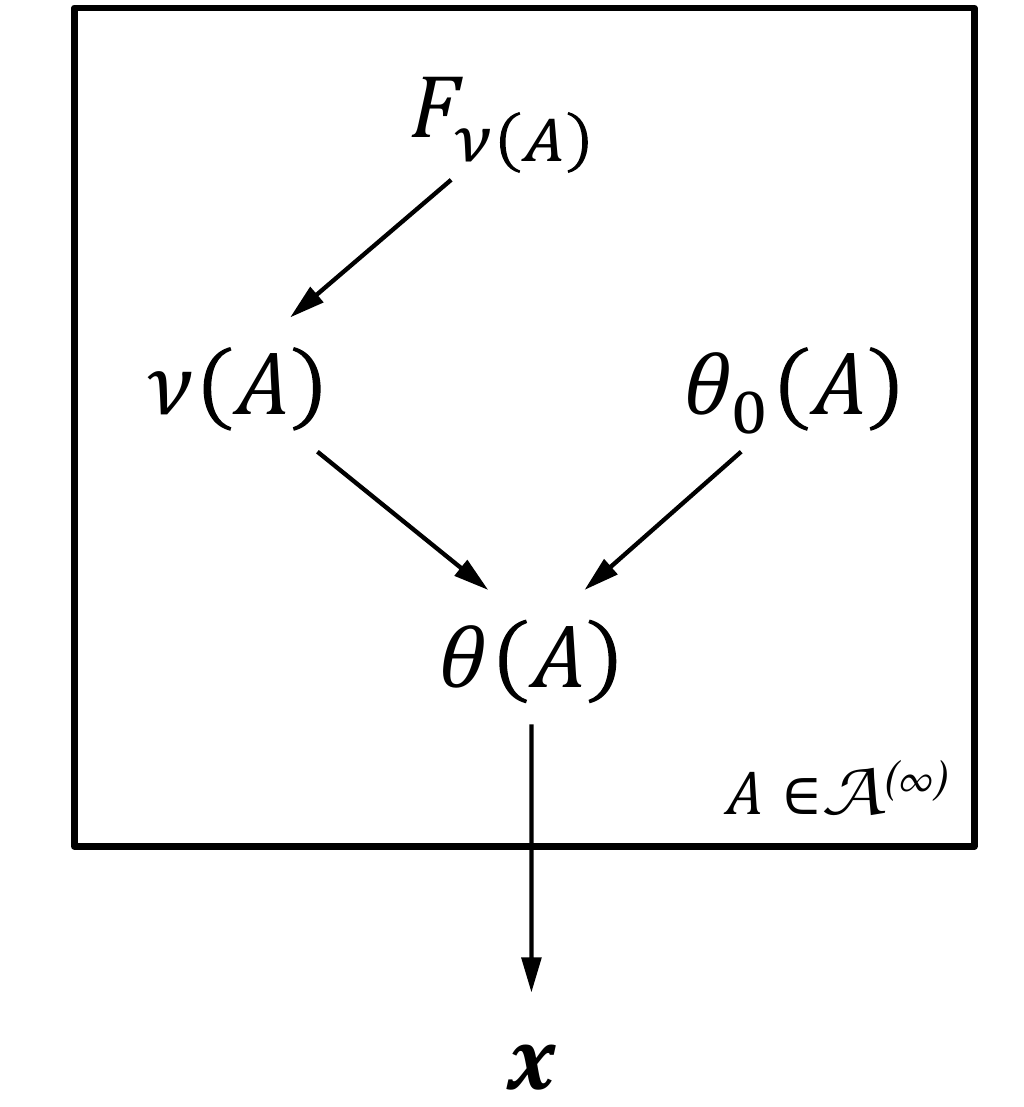}}\hspace{1.8em}

      \subfigure[Markov adaptive P\'olya tree]{\includegraphics[width=0.36\textwidth, clip=TRUE, trim=20mm 30mm 70mm 20mm]{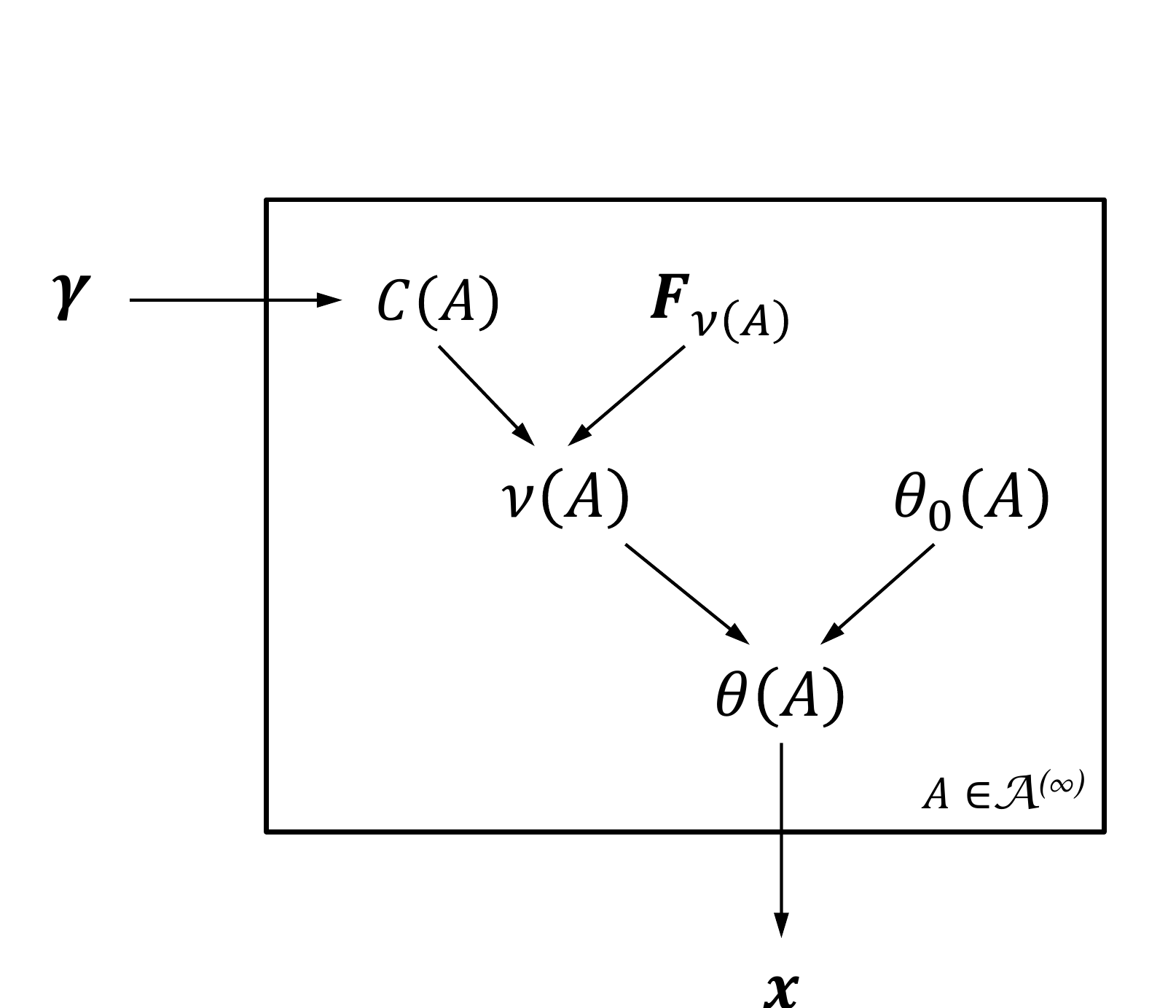}}
   }
    \caption{Graphical representation of PT (left), APT (middle), and Markov-APT (right).}
    \label{fig:graphical}
  \end{center}
\vspace{-1em}
\end{figure}

From a hierarchical Bayesian perspective, we can incorporate adaptivity into the shrinkage by placing a hyperprior $F_{\nu(A)}$ on $\nu(A)$ thereby allowing the appropriate level of shrinkage for $\theta(A)$ to be inferred from the data. This leads to a generative hierarchical model specified by  
$F_{\bnu}=\{F_{\nu(A)}:A\in\A^{(\infty)}\}$ and $Q_0$. 
Using $\bphi=(F_{\bnu},Q_0)$ to represent the totality of all hyperparameters, we can write the model as
\vspace{-3.5em}

\begin{align*}
\nu(A)\,|\,\bphi &\sim F_{\nu(A)}\\ 
\theta(A)\,|\,\bphi,\bm{\nu} & \sim {\rm Beta}(\theta_0(A)\nu(A),(1-\theta_0(A))\nu(A)) 
\end{align*}
\vspace{-3.2em}

\noindent for all $A\in\A^{(\infty)}$, where  
$\bm{\nu}=\{\nu(A):A\in\A^{(\infty)}\}$, the collection of the Beta precision parameters. \ref{fig:graphical}(b) provides a graphical representation of this model.

We consider priors $F_{\nu(A)}$ supported on $(0,\infty]$. 
Note that we allow $\nu(A)=\infty$, in which case ${\rm Beta}(\theta_0(A)\nu(A),(1-\theta_0(A))\nu(A))$ is a point mass at $\theta_0(A)$, corresponding to complete shrinkage of $\theta(A)$ to the prior mean $\theta_0(A)$.
The first natural question is whether this model always generates well-defined probability measures. In other words, given a set of PACs $\{\theta(A):A\in \A^{(\infty)}\}$ arising from the above model, does there (almost surely) exist a distribution $G$ such that $G(A_l|A)=\theta(A)$ for all $A\in \A^{(\infty)}$? 
The answer is positive by Theorem~3.3.2 in \cite{ghosh:2003}. 
Hence we can define this model as a distribution on probability measures.

\begin{defn}[Adaptive P\'olya tree]
A probability measure $Q$ is said to have an {\em adaptive P\'olya tree} (APT) distribution with parameters $\bphi=(\bF_{\bnu},Q_0)$ if the corresponding PACs of $Q$, $\{\theta(A): A\in \A^{(\infty)}\}$, are generated from the above hierarchical model. We write $Q\sim {\rm APT}(\bF_{\bnu},Q_0)$, or equivalently ${\rm APT}(\bF_{\bnu},\bthe_0)$. 
\end{defn}
\noindent Remark: A hidden hyperparameter is the partition tree $\A^{(\infty)}$, which for simplicity we treat as given. One can also treat it as a parameter and even place a further layer of hyperprior on $\A^{(\infty)}$ as in \cite{hanson:2006}, resulting in a class of mixture of APTs.

\vspace{0.5em}

The meaning of $Q_0$ stays the same as for the PT---it is still the mean of the new model. 
\begin{thm}[Mean]
\label{thm:apt_prior_mean}
The mean of an APT distribution is $Q_0$. 
That is, for any Borel set $B\subset \om$, a random measure $Q$ that has the APT distribution satisfies $EQ(B)=Q_0(B)$.
\end{thm}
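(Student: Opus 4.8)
The plan is to verify the identity first on the nodes of the partition tree, where $Q(A)$ has an explicit finite-product form, and then to propagate it to all Borel sets by a uniqueness-of-measures argument. For a node $A\in\A^{(\infty)}$ at resolution $k$, write $\om=B_0\supset B_1\supset\cdots\supset B_k=A$ for its chain of ancestors, each $B_{j+1}$ being the left or right child of $B_j$ (treating $\om$ as the resolution-$0$ node carrying a PAC $\theta(\om)$, which is implicit in the construction). Iterating the relations $Q((B_j)_l)=Q(B_j)\theta(B_j)$ and $Q((B_j)_r)=Q(B_j)(1-\theta(B_j))$ from $Q(\om)=1$ expresses
\[
Q(A)=\prod_{j=0}^{k-1}\theta(B_j)^{\I\{B_{j+1}=(B_j)_l\}}\bigl(1-\theta(B_j)\bigr)^{\I\{B_{j+1}=(B_j)_r\}},
\]
and the same expression with $\theta_0$ in place of $\theta$ equals $Q_0(A)$, since $Q_0$ is by definition the measure with PACs $\{\theta_0(A)\}$. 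So it suffices to show that the expectation of this finite product is $Q_0(A)$ for every node $A$.

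Two facts will do the work. First, conditionally on $\nu(A)$ the law of $\theta(A)$ is $\mathrm{Beta}(\theta_0(A)\nu(A),(1-\theta_0(A))\nu(A))$, whose mean is $\theta_0(A)$ for every finite $\nu(A)$ and which degenerates to the point mass at $\theta_0(A)$ when $\nu(A)=\infty$; hence $E\theta(A)=\theta_0(A)$ unconditionally, and likewise $E(1-\theta(A))=1-\theta_0(A)$. Second, the PACs $\{\theta(A):A\in\A^{(\infty)}\}$ are mutually independent, since each $\theta(A)$ is a function of $\nu(A)$ together with Beta randomness independent across nodes while the $\nu(A)$ are themselves independent draws from $\bF_{\bnu}$; in particular the finitely many factors in the product above are independent. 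Factoring the expectation over these factors and applying the first fact then yields $EQ(A)=Q_0(A)$ for every node $A$.

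It remains to pass from nodes to arbitrary Borel sets. The set function $B\mapsto EQ(B)$ is itself a probability measure on the Borel $\sigma$-algebra: nonnegativity and $EQ(\om)=1$ are immediate, and countable additivity follows by applying the monotone convergence theorem to the partial sums $\sum_{i\le m}Q(B_i)\uparrow Q(\bigcup_i B_i)$, which is legitimate because $Q$ is almost surely countably additive and $EQ(\om)=1<\infty$. The collection $\A^{(\infty)}\cup\{\emptyset,\om\}$ is a $\pi$-system---nestedness of the partitions makes any two partition sets either nested or disjoint, so their intersection lies again in the collection---and it generates the Borel $\sigma$-algebra by hypothesis. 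Since $EQ(\cdot)$ and $Q_0(\cdot)$ are probability measures agreeing on this $\pi$-system by the node computation, the Dynkin $\pi$--$\lambda$ theorem forces $EQ(B)=Q_0(B)$ for all Borel $B$. I expect no genuine obstacle; the only points needing care are keeping the degenerate case $\nu(A)=\infty$ inside the argument (so the Beta-mean identity holds literally, not merely in a limit) and justifying the interchange of expectation with the countable sum, both handled above. One could also avoid the $\pi$--$\lambda$ step by checking that $EQ$ and $Q_0$ agree on the field of finite disjoint unions of nodes and invoking Hahn--Carath\'eodory uniqueness instead.
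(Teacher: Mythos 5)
Your proof is correct, but it takes a more self-contained route than the paper. The paper's argument is two lines: conditional on $\bnu$, the APT reduces to a standard P\'olya tree whose mean is $Q_0$ (a known PT fact, since the conditional Beta means are $\theta_0(A)$ for every value of $\nu(A)$), so $E\bigl(Q(B)\,|\,\bnu\bigr)=Q_0(B)$ and the unconditional statement follows by iterated expectation. You instead re-derive that PT fact from scratch: the product representation of $Q(A)$ along the ancestor chain, mutual independence of the PACs across nodes (which does hold for the APT, since the $\nu(A)$'s are independent and the Beta draws are conditionally independent given $\bnu$), the identity $E\theta(A)=\theta_0(A)$ handled carefully in the degenerate case $\nu(A)=\infty$, and then a $\pi$--$\lambda$ (or Carath\'eodory uniqueness) extension from $\A^{(\infty)}$ to all Borel sets together with the monotone-convergence check that $B\mapsto EQ(B)$ is countably additive. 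What your version buys is transparency and independence from the PT literature---every measure-theoretic step (including the passage from nodes to general $B$, which the paper leaves implicit inside the cited PT result) is made explicit. What the paper's conditioning strategy buys is brevity and reusability: the identical one-line argument, now conditioning on $\C$, immediately gives Theorem~3 for the Markov-APT, whereas your unconditional-independence step would need to be replaced there by a conditional-independence (given $\C$) version, since the Markov tree couples the shrinkage states across nodes.
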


Under very general prior specifications, the APT model has full $L_1$ support and enjoys posterior consistency. We defer the study of these properties to Section~\ref{sec:markov_apt} where we derive these results for a more general model that contains the APT as a special case. The next theorem provides the Bayesian inference recipe for the APT.
\begin{thm}[Posterior]
\label{thm:posterior_apt}
Suppose $Q\sim {\rm APT}(\bF_{\bnu},Q_0)$. Given $n$ i.i.d.\ observations $\bx=(x_1,x_2,\ldots,x_n)$ from $Q$, the joint posterior of $(\bnu,\bthe)$ is given by 
\vspace{-3em}

\begin{align*}
\nu(A)\,|\,\bphi,\bx &\sim \tilde{F}_{\nu(A)}\\ 
\theta(A)\,|\,\bphi,\bm{\nu},\bx & \sim {\rm Beta}(\tilde{\theta}_0(A)\tilde{\nu}(A),(1-\tilde{\theta}_0(A))\tilde{\nu}(A)) 
\end{align*}
\vspace{-3em}

\noindent where $\tilde{\theta}_0(A)$ and $\tilde{\nu}(A)$ are as defined before, while $\tilde{F}_{\nu(A)}$ is given by
\vspace{-1em}

\[
d\tilde{F}_{\nu(A)}(\nu) = dF_{\nu(A)}(\nu)\cdot M_{A}(\bthe_0,\nu)/M_{A}(\bthe_0)
\]
\vspace{-2.5em}

\noindent with 
\vspace{-4em}

\begin{align*}
M_{A}(\bthe_0,\nu)&=\frac{\Gamma(\theta_0(A)\nu+n(A_l))\Gamma((1-\theta_0(A))\nu+n(A_r))\Gamma(\nu)}{\Gamma(\nu+n(A))\Gamma(\theta_0(A)\nu)\Gamma((1-\theta_0(A))\nu)}
\end{align*}
\vspace{-2.5em}

\noindent and
\vspace{-4.5em}

\begin{align*}
M_{A}(\bthe_0)&=\int M_{A}(\bthe_0,\nu)dF_{\nu(A)}(\nu).
\end{align*}
\end{thm}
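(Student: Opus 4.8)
The plan is to exploit the fact that the APT, like the P\'olya tree, is a \emph{tail-free} process, so that two structural features combine to trivialize the posterior computation: (a) the likelihood of $\bx$ factorizes over the nodes of $\A^{(\infty)}$ into local binomial terms, each depending only on the single PAC $\theta(A)$, and (b) the prior on $(\bnu,\bthe)$ is by construction a product of independent node-wise pieces $(\nu(A),\theta(A))$. Together (a) and (b) force the posterior of $(\bnu,\bthe)$ to factorize over nodes, which reduces the theorem to a one-node ``Beta--binomial with a random precision'' calculation that I can carry out explicitly; the two displayed lines in the statement are then just the conditional specification of that product posterior.

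First I would make (a) precise. By the multi-resolution decomposition recorded after Lemma~\ref{lem:uniqueness_of_PACs} and illustrated in \ref{fig:mr_decomp}(b), observing $\bx$ amounts to observing the counts $\{n(A):A\in\A^{(\infty)}\}$, and conditionally on $\bthe$ these are generated by the chain of binomial experiments $n(A_l)\mid n(A),\theta(A)\sim{\rm Binomial}(n(A),\theta(A))$. Collecting factors node by node, the likelihood of $\bthe$ is $\prod_{A\in\A^{(\infty)}}\theta(A)^{n(A_l)}(1-\theta(A))^{n(A_r)}$, up to a multiplicative constant (the binomial coefficients) depending only on $\bx$ and the partition; nodes with $n(A)=0$ contribute the trivial factor $1$. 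Combined with the product form of the prior, Bayes' rule gives the posterior of $(\bnu,\bthe)$ as the product over $A$ of the node-wise posteriors, so it suffices to analyze a fixed node $A$.

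The single-node computation is then routine Beta--binomial algebra. Conditionally on $\nu(A)=\nu$, the prior ${\rm Beta}(\theta_0(A)\nu,(1-\theta_0(A))\nu)$ is conjugate to the factor $\theta(A)^{n(A_l)}(1-\theta(A))^{n(A_r)}$, giving $\theta(A)\mid\bphi,\bnu,\bx\sim{\rm Beta}(\theta_0(A)\nu+n(A_l),(1-\theta_0(A))\nu+n(A_r))$; rewriting this in the mean/precision parametrization and simplifying shows its mean is exactly $\tilde\theta_0(A)$ and its precision is exactly $\tilde\nu(A)=\nu+n(A)$, as defined before. Integrating $\theta(A)$ out against the Beta prior is a single application of the Beta-function (Gamma-ratio) identity and yields precisely the marginal likelihood $M_{A}(\bthe_0,\nu)$ displayed in the statement. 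Finally, Bayes' rule applied at the level of $\nu(A)$ gives $d\tilde F_{\nu(A)}(\nu)\propto M_{A}(\bthe_0,\nu)\,dF_{\nu(A)}(\nu)$ with normalizing constant $M_{A}(\bthe_0)=\int M_{A}(\bthe_0,\nu)\,dF_{\nu(A)}(\nu)$, which is the claimed tilted hyperposterior.

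The remaining work is to dispose of the two degenerate configurations and to pin down the one genuinely delicate point. Since $F_{\nu(A)}$ may place mass at $\nu=\infty$, where $\theta(A)\equiv\theta_0(A)$, I would verify by Stirling on the Gamma ratio that $M_{A}(\bthe_0,\nu)$ extends continuously to $\nu=\infty$ with value $\theta_0(A)^{n(A_l)}(1-\theta_0(A))^{n(A_r)}$, so that $\tilde F_{\nu(A)}$ is a well-defined measure on $(0,\infty]$; and at a node with $n(A)=0$ one checks directly that the formulas collapse to $\tilde F_{\nu(A)}=F_{\nu(A)}$, $\tilde\theta_0(A)=\theta_0(A)$, $\tilde\nu(A)=\nu(A)$, consistent with such a node being uninformed by the data. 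The main obstacle is not any of these computations but the measure-theoretic bookkeeping behind the factorization claim: justifying rigorously that the posterior of the \emph{infinite} collection $(\bnu,\bthe)$ is exactly the product of the node-wise posteriors (rather than merely that each finite-dimensional marginal is), which I would handle by the standard tail-free argument---computing the posterior for the truncation to resolutions below $K$, where everything is finite-dimensional, and passing to the projective limit---along the lines of the classical P\'olya-tree posterior analysis in \cite{ghosh:2003}.
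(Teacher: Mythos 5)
Your proposal is correct and follows essentially the same route as the paper: the paper's proof simply invokes Bayes' rule for the conjugate hierarchical model as spelled out in Section~\ref{sec:post_inf}, i.e.\ the node-wise Beta--binomial conjugacy giving the ${\rm Beta}(\tilde{\theta}_0(A)\tilde{\nu}(A),(1-\tilde{\theta}_0(A))\tilde{\nu}(A))$ conditional and the $M_{A}(\bthe_0,\nu)$-tilted hyperposterior $\tilde{F}_{\nu(A)}$, which is exactly your single-node calculation. Your additional care with the $\nu=\infty$ and $n(A)=0$ cases and the tail-free/truncation justification of the infinite node-wise factorization is a sound elaboration of details the paper leaves implicit, not a different argument.
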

\vspace{-0.7em}

\noindent Remark: $M_{A}(\bthe_0,\nu)$ is the marginal likelihood of the local binomial experiment on $A$ given $\bthe_0$---in particular $\theta_0(A)$---and $\nu(A)=\nu$, while $M_{A}(\bthe_0)$ is the marginal likelihood of the local binomial experiment given $\theta_0(A)$. Note that when $\nu=\infty$, $M_A(\bthe_0,\infty):=\lim_{\nu\uparrow \infty}M_{A}(\bthe_0,\nu)=\theta_0(A)^{n(A_l)}(1-\theta_0(A))^{n(A_r)}$, which is still the likelihood of the local binomial experiment.
\vspace{0.5em}

Because $\nu$ is one-dimensional, it is easy to evaluate $M_{A}(\bthe_0)$ numerically.  Specifically, we evaluate $f_{\nu(A)}=d F_{\nu(A)}/d\mu$ on a grid of different $\nu$ values $\nu_{(1)},\nu_{(2)},\ldots,\nu_{(H)}$ covering the support of $F_{\nu(A)}$, and approximate $M_{A}(\bthe_0)$ using a finite Riemann integral. This approximation becomes particularly straightforward when $F_{\nu(A)}$ is uniform on an interval $[a,b]$ under some transformed scale of $\nu$, such as $\log_{10}(\nu)$. In this case one can choose the grid points to be of equal distance on the transformed scale and $M_{A}(\bthe_0) \approx \frac{1}{H}\sum_{h=1}^{H} M_{A}(\bthe_0,\nu_{(h)})$.

\vspace{-0.5em}

\subsection{Stochastically increasing shrinkage and the Markov-APT}
\label{sec:markov_apt}
\vspace{-0.5em}

Under the APT, the $\nu(A)$'s are mutually independent both {\em a priori} and {\em a posteriori}. The shrinkage level for each PAC is inferred based on just the empirical evidence from the corresponding local binomial experiment. 
Our next objective is to allow borrowing of information across the binomial experiments in determining the shrinkage level for each. 

What information can and should be borrowed across the experiments? The answer very much depends on the inference problem at hand. 
In density estimation, a reasonable assumption adopted (explicitly or implicitly) in all statistical methods is the smoothness of the underlying density. Indeed, even ``jumpy'' densities---those with sharp changes---must be assumed to eventually smooth out at high enough resolution (or at a small enough ``bandwidth'') as opposed to infinitely oscillating in arbitrarily small regions, because otherwise reliable estimation is infeasible. In the multi-resolution framework, smoothness translates into an increase in shrinkage for higher resolutions. This is exactly the motivation for the increasing Beta parameters in the PT \citep{lavine:1992}.

One approach to incorporating increasing shrinkage with resolution in the APT is to choose the prior $F_{\nu(A)}$ such that it is supported on larger values for nodes in deeper levels of the partition tree. For example, one may let $F_{\nu(A)}$ be supported on $[l(k),u(k))]$ for $A\in\A^{k}$ while $l(k)$ and $u(k)$ are fixed increasing functions in $k$. 
However, this strategy will inherit the very same issues of the standard PT as illustrated in the Example~\ref{ex:motive}. First, the proper rate at which shrinkage should increase---in terms of the choice of $l(k)$ and $u(k)$---is typically unknown {\em a priori}. Second, in many applications, the smoothness of the underlying density is {\em heterogeneous} across the sample space---sharp boundaries or sudden jumps may lie in an otherwise smooth density. Such features are particularly common in applications involving anomaly or change-point detection, in which the jumps and boundaries are the main focus of inference. To capture such structures, one need the rate at which shrinkage increases with the resolution to vary across the sample space, which is again impossible to specify {\em a priori}.

Fortunately, one can infer the proper (varying) rate of increasing shrinkage from the data  What we need is a {\em stochastic} model for the increasing rate of shrinkage. To this end, we first introduce a latent mixture representation for $F_{\nu(A)}$. Specifically, we specify $F_{\nu(A)}$ using a mixture of $I$ component distributions in a monotone increasing stochastic order 
\vspace{-1.7em}

\[ F^{1}_{\nu(A)}\prec F^{2}_{\nu(A)}\prec \ldots \prec F^{I}_{\nu(A)}.\] 
\vspace{-2.7em}

\noindent In particular, we choose these components to have non-overlapping supports. For example $F^{i}_{\nu(A)}$ may be 
supported on an interval $(a(i),b(i)]$ where $0\leq a(1)< b(1) \leq a(2)< b(2) \ldots \leq a(I) < b(I) \leq\infty$
is a sequence of increasing support boundary points. 
For now we shall treat the number of components $I$ and each $F^{i}_{\nu(A)}$ as given, but will provide guidelines on choosing them in Section~\ref{sec:prior_spec}. We let $\bF_{\nu(A)}=(F^{1}_{\nu(A)},F^{2}_{\nu(A)},\ldots,F^{I}_{\nu(A)})$, and use $\bF_{\bnu}=\{\bF_{\nu(A)}: A\in\A^{(\infty)}\}$ to denote the totality of all component distributions. 

Moreover, we introduce a latent state variable $C(A)$ for each $A\in\A^{(\infty)}$ that indicates the mixture component $\nu(A)$ comes from: 
\vspace{-1.6em}

\[
\nu(A)\,|\,C(A)=i \,\, \sim\,\, F^{i}_{\nu(A)}.
\]
\vspace{-2.5em}

\noindent for $i=1,2,\ldots,I$. We refer to $C(A)$ as the {\em shrinkage state} on $A$, and let $\C=\{C(A):A\in\A^{(\infty)}\}$ be the collection of all shrinkage states.

Now we can enforce stochastically non-decreasing shrinkage along each branch of $\A^{(\infty)}$ by specifying a joint prior on $\C$ that prevents the shrinkage state from moving lower in any branch. 
That is, if $A_p$ is $A$'s parent in $\A^{(\infty)}$, then we require $C(A)\geq C(A_p)$. 
A simple stochastic model for $\C$ that can help us impose such a constraint is the Markov tree (MT) \citep{crouse:1998}, which links the $C(A)$'s using a Markov process such that the shrinkage state $C(A)$ depends on that of $A_p$ through Markov transition. 
The Markov process is initiated on the root, $\om$, as follows
\vspace{-1.5em}

\[
P(C(\om)=i) = \gamma_{i}(\om) \quad \text{for $i\in\{1,2,\ldots,I\}$},
\]
\vspace{-2.2em}

\noindent where the $\gamma_{i}(\om)$'s are called the {\em initial state probabilities}, and can be put into a vector
\vspace{-1.6em}

\[
\bgam(\om)=(\gamma_{1}(\om),\gamma_{2}(\om),\ldots,\gamma_{I}(\om))
\]
\vspace{-2.5em}

\noindent Then for each $A\neq \om$, $C(A)$ is determined sequentially based on its parent according to 
\vspace{-3.2em}

\begin{align*}
P(C(A)=i'\,|\,C(A_{p})=i)& =\gamma_{i,i'}(A) \quad \text{for $i,i'\in\{1,2,\ldots,I\}$} 
\end{align*}
\vspace{-3.2em}

\noindent where $\gamma_{i,i'}(A)$ is called the {\em state transition probability}, 
which can be organized into a {\em transition probability matrix}
\vspace{-2em}

\[
\bgam(A)=\left( \begin{array}{cccc}
\gamma_{1,1}(A) & \gamma_{1,2}(A) & \cdots & \gamma_{1,I}(A)\\
\gamma_{2,1}(A) & \gamma_{2,2}(A) & \cdots & \gamma_{2,I}(A)\\
\vdots & \vdots & \vdots & \vdots\\
\gamma_{I,1}(A) & \gamma_{I,2}(A) & \cdots & \gamma_{I,I}(A)
\end{array} \right).
\]

The desired stochastically increasing shrinkage is achieved when the transition matrices are all upper-triangular. That is,
\vspace{-1.4em}

\[
\gamma_{i,i'}(A) = 0 \quad \text{if $i>i'$ for all $A\in\A^{(\infty)}$.}
\]
From now on, we shall use $\bgam=\{\bgam(A):A\in\A^{(\infty)}\}$ to denote the collection of all initial state probabilities and transition probability matrices needed for specifying the MT. 

Putting the pieces together, now we have the following hierarchical model for a probability distribution
with hyperparameters $\bphi=(\bgam,\bF_{\bnu},Q_0)$
\vspace{-3em}

\begin{align*}
\C\,|\,\bphi &\sim {\rm MT}(\bgam)\\ 
\nu(A)\,|\, \bphi,\C &\sim \sum_{c=1}^{I} F^{i}_{\nu(A)}\cdot \I_{C(A)=i}\\ 
\theta(A)\,|\, \bphi, \bm{\nu},\C & \sim {\rm Beta}(\theta_0(A)\nu(A),(1-\theta_0(A))\nu(A)) 
\end{align*}
\vspace{-3em}

\noindent for all $A\in\A^{(\infty)}$. A graphical representation of this model is given in \ref{fig:graphical}(c). 
Because this hierarchical model also generates a probability measure with probability~1, one can again define it formally as a distribution on probability measures.
\begin{defn}[Markov adaptive P\'olya tree]
A probability measure $Q$ is said to have a {\em Markov adaptive P\'olya tree} (Markov\text{-}APT) distribution with parameters $\bphi=(\bgam,\bF_{\bnu},Q_0)$ if $Q$ corresponds to the collection of PACs $\{\theta(A): A\in \A^{(\infty)}\}$ generated from the above hierarchical model. We write $Q\sim {\rm Markov\text{-}APT}(\bgam,\bF_{\bnu},Q_0)$. In particular, when $\bgam(A)$ is upper-triangular for all $A\in\A^{(\infty)}$, we say that the Markov-APT is {\em stochastically increasing}.
\end{defn}
\vspace{-0.1em}

The meaning of $Q_0$ is still the same as before.
\vspace{-0.2em}

\begin{thm}[Mean]
\label{thm:mapt_prior_mean}
The mean of an Markov-APT is $Q_0$. That is, for any Borel set $B\subset \om$, a random measure $Q$ that has a Markov-APT distribution satisfies $EQ(B)=Q_0(B)$.
\end{thm}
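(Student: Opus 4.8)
The plan is to reduce to the APT case already handled by Theorem~\ref{thm:apt_prior_mean}, via a conditioning argument on the latent shrinkage states $\C$.

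First I would observe that it suffices to prove $EQ(A)=Q_0(A)$ for every node $A\in\A^{(\infty)}$ (the case $A=\om$ being trivial, as both sides equal $1$). The nodes together with $\om$ and $\emptyset$ form a $\pi$-system generating the Borel $\sigma$-algebra --- any two nodes are either nested or disjoint --- and the collection $\{B:EQ(B)=Q_0(B)\}$ is a $\lambda$-system: it is closed under complementation because $EQ(B^c)=1-EQ(B)$ and likewise for $Q_0$, and under countable disjoint unions by monotone convergence, using that $B\mapsto Q(B)$ is measurable since $Q$ is a random measure. Dynkin's $\pi$-$\lambda$ theorem then lifts the identity from nodes to all Borel sets.

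Next I would condition on $\C=\{C(A):A\in\A^{(\infty)}\}$. Given $\C$, the remaining hierarchy --- $\nu(A)\sim F^{C(A)}_{\nu(A)}$ independently over $A$, then $\theta(A)\mid\nu(A)\sim{\rm Beta}(\theta_0(A)\nu(A),(1-\theta_0(A))\nu(A))$ --- is exactly an APT with precision-prior family $\{F^{C(A)}_{\nu(A)}:A\in\A^{(\infty)}\}$ and the same mean parameter $Q_0$. Hence Theorem~\ref{thm:apt_prior_mean} applies conditionally, giving $E\big(Q(A)\mid\C\big)=Q_0(A)$ for each node $A$; here I rely on the APT mean result holding for an arbitrary node-dependent family $\bF_{\bnu}$. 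Taking expectations over $\C$ by the tower property yields $EQ(A)=E\big(E(Q(A)\mid\C)\big)=Q_0(A)$. If one prefers to avoid citing the APT result, the same conclusion follows directly: writing $A^{(0)}=\om\supset\cdots\supset A^{(k)}=A$ for the ancestral path, $Q(A)=\prod_{j=0}^{k-1}e_j$ with $e_j\in\{\theta(A^{(j)}),\,1-\theta(A^{(j)})\}$, and given $\C$ these factors are independent (the $\nu$'s being conditionally independent and the $\theta$'s conditionally independent given the $\nu$'s) with $E(e_j\mid\C)$ equal to $\theta_0(A^{(j)})$ or $1-\theta_0(A^{(j)})$ irrespective of the value of $\nu(A^{(j)})$ --- including $\nu(A^{(j)})=\infty$, where $\theta(A^{(j)})$ degenerates to $\theta_0(A^{(j)})$ --- so $E(Q(A)\mid\C)=\prod_j E(e_j\mid\C)=Q_0(A)$.

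The computations here are routine; the only points demanding a little care are the measure-theoretic bookkeeping --- interchanging expectation with conditional expectation (legitimate since $0\le Q(A)\le 1$), measurability of $\C\mapsto E(Q(A)\mid\C)$, and checking that Theorem~\ref{thm:apt_prior_mean} is genuinely stated for a general family $\bF_{\bnu}$ rather than a single shared prior. I do not anticipate any substantive obstacle: the whole argument rests on the fact that the Beta mean $\theta_0(A)$ does not depend on the precision $\nu(A)$, so conditioning on $\C$ (and hence on the mixture components, and on $\bnu$) leaves the first moments --- and therefore $EQ$ --- untouched.
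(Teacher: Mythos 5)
Your argument is correct and is essentially the paper's own proof: condition on the latent shrinkage states $\C$, note that the conditional model is an APT with the same mean parameter $Q_0$ so Theorem~\ref{thm:apt_prior_mean} applies, and finish by the law of iterated expectation. The additional material you supply (the $\pi$-$\lambda$ extension from nodes to general Borel sets and the direct product-of-means computation along the ancestral path) is sound but just fills in details the paper leaves implicit.
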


In density estimation, we shall focus on stochastically increasing Markov-APTs. Moreover, in order to model a density, we must ensure that a random measure $Q$ generated from an Markov-APT has a density.
Earlier works in the literature have established two general approaches for achieving absolute continuity for PT-type priors, which could both be adopted for the Markov-APT. The first is to force $\nu(A)$ to increase with the level of $A$ at a sufficiently fast rate \citep{lavine:1992}. For the Markov-APT, this can be achieved by choosing $\bF_{\nu(A)}$ such that for any $A \in\A^{k}$, $\nu(A)>l(k)$ with probability~1  where $l(k)$ is a positive function in $k$ that satisfies $\sum_{k=1}^{\infty} 1/l(k) < \infty$ \citep{kraft:1964}. But this would impose a minimum amount of prespecified shrinkage homogeneously across the sample space, which is exactly the undesirable feature of the PT that we wish to avoid through adaptive shrinkage. For this reason we prefer an alternative strategy for ensuring absolute continuity \citep{wongandma:2010}, which is to include a separate ``complete shrinkage'' state as in the following theorem. 
\begin{thm}[Absolute continuity]
\label{thm:absolute_continuity}
Suppose $Q$ has a stochastically increasing Markov-APT distribution for which $F^{I}_{\nu(A)}=\I_{\infty}$, a point mass at $\infty$, and there exists $\delta>0$ such that for all large enough $k$, the state transition probabilities for any $A\in \A^{k}$ satisfies
\vspace{-1.8em}

\[
 \gamma_{i,I}(A) > \delta \quad \text{for $i=1,2,\ldots,I-1$},
\]
\vspace{-2.5em}

\noindent then with probability~1, $Q\ll Q_0$. In particular, if $Q_0 \ll \mu$, then $Q\ll \mu$.
\end{thm}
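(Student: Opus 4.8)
The plan is to exploit the absorbing structure created by stochastic monotonicity. Since every $\bgam(A)$ is upper-triangular, the top shrinkage state $I$ is absorbing along each branch; and whenever $C(A)=I$ we have $F^{I}_{\nu(A')}=\I_{\infty}$, hence $\nu(A')=\infty$ and $\theta(A')=\theta_0(A')$, for $A'=A$ and every descendant $A'$ of $A$. Thus inside such an $A$ the random PACs of $Q$ coincide with those of $Q_0$, which forces $Q(\,\cdot\cap A)=\{Q(A)/Q_0(A)\}\,Q_0(\,\cdot\cap A)$ as measures on $A$: the two sides are finite measures of equal total mass that agree on every sub-node of $A$, and the sub-nodes of $A$ generate the Borel $\sigma$-algebra of $A$. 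Absolute continuity of $Q$ with respect to $Q_0$ will therefore follow once I show that, almost surely, the region of $\om$ that eventually enters state $I$ carries all of the mass of both $Q$ and $Q_0$.

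First I would formalize that region. For each $k$ let $\mathcal F_k=\{A\in\A^{k}: C(A)=I,\ C(A_p)<I\}$; these are countably many disjoint nodes, and---because a descendant of an absorbed node can never belong to any $\mathcal F_j$---nodes from distinct $\mathcal F_k$'s are disjoint as subsets of $\om$. Set $R=\bigcup_k\bigcup_{A\in\mathcal F_k}A$, a Borel set. For every $x\in\om$ the unique level-$k$ node $A_k(x)$ containing it is well defined (the $\A^{k}$ are genuine partitions), and $R=\{x:K(x)<\infty\}$ with $K(x):=\inf\{k:C(A_k(x))=I\}$. Applying the local identity above over the disjoint pieces of $R$, any Borel $B$ with $Q_0(B)=0$ satisfies $Q(B\cap R)=\sum_k\sum_{A\in\mathcal F_k}\{Q(A)/Q_0(A)\}\,Q_0(B\cap A)=0$ (terms with $Q_0(A)=0$ drop, since then $Q(A)=0$ a.s.\ because $EQ(A)=Q_0(A)$ by \ref{thm:mapt_prior_mean}). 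So the task reduces to proving $Q_0(\om\setminus R)=0$ and $Q(\om\setminus R)=0$ almost surely.

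For $Q_0(\om\setminus R)$: fixing $x$ and conditioning on its branch, $C(A_1(x)),C(A_2(x)),\dots$ is a Markov chain on $\{1,\dots,I\}$ with $I$ absorbing, and the hypothesis supplies $\delta>0$ and $k_0$ with $P(C(A_{k+1}(x))<I\mid C(A_k(x))=i)\le 1-\delta$ for all $i<I$ and $k\ge k_0$; hence $P(K(x)>k)\le(1-\delta)^{k-k_0}\to 0$, so $P(x\notin R)=0$ for each fixed $x$. Since $\I_{x\notin R}$ is jointly measurable (at each resolution it is simple in $x$ with measurable coefficients in the states), Fubini gives $E\,Q_0(\om\setminus R)=\int P(x\notin R)\,Q_0(dx)=0$, so $Q_0(\om\setminus R)=0$ a.s. For $Q(\om\setminus R)$ I would condition on $\C$: the Beta prior for $\theta(A)$ has mean parameter $\theta_0(A)$ irrespective of the value (or infiniteness) of $\nu(A)$, so $E[\theta(A)\mid\C]=\theta_0(A)$; by the conditional independence of the PACs given $\C$ built into the hierarchy, $E[Q(A)\mid\C]=Q_0(A)$ for every node $A$. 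Since $\om\setminus R$ truncated at level $k$ is a union of level-$k$ nodes, summing over them and letting $k\to\infty$ yields $E[Q(\om\setminus R)\mid\C]=Q_0(\om\setminus R)$, hence $E\,Q(\om\setminus R)=E\,Q_0(\om\setminus R)=0$ and $Q(\om\setminus R)=0$ a.s.

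Putting it together: on the intersection of the countably many almost-sure events ``$Q_0(\om\setminus R)=0$'', ``$Q(\om\setminus R)=0$'', and ``$Q(A)=0$ for every node $A$ with $Q_0(A)=0$'', every Borel $B$ with $Q_0(B)=0$ satisfies $Q(B)=Q(B\cap R)+Q(B\setminus R)=0$; thus $Q\ll Q_0$ with probability one, and $Q\ll\mu$ when $Q_0\ll\mu$ by transitivity of absolute continuity. The step I expect to be the main obstacle is controlling $Q(\om\setminus R)$: because $R$ depends on the random states $\C$, \ref{thm:mapt_prior_mean} does not apply directly, and the fix is the observation that conditioning on $\C$ leaves the mean measure equal to $Q_0$---which is precisely the consequence of the mean parameter $\theta_0(A)$ of each Beta prior being decoupled from its precision/shrinkage parameter $\nu(A)$.
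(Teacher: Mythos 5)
Your proof is correct, and it takes a genuinely different (and more self-contained) route than the paper. The paper's argument is short by citation: it forms the level-$k$ truncations $Q^{(k)}$ (which equal $Q_0$ conditionally within each level-$k$ node), invokes the total-variation convergence $Q^{(k)}\to Q$ established in the proof of Theorem~1 of \cite{wongandma:2010} with $Q_0$ in place of $\mu$, notes $Q^{(k)}\ll Q_0$ by construction, and concludes that $Q(B)>0$ forces $Q^{(k)}(B)>0$ for some $k$ and hence $Q_0(B)>0$. You instead make the absorbing structure explicit: the upper-triangular transitions make state $I$ absorbing, so on the region $R$ of first absorption $Q$ is piecewise proportional to $Q_0$ (agreement on the generating $\pi$-system of descendant nodes), and the content of the theorem reduces to showing $Q_0(\om\setminus R)=0$ and $Q(\om\setminus R)=0$ almost surely --- the first by the geometric bound $(1-\delta)^{k-k_0}$ plus Fubini, the second by your key observation that conditioning on $\C$ leaves the mean measure equal to $Q_0$ (the Beta mean $\theta_0(A)$ is decoupled from $\nu(A)$), combined with a monotone/dominated-convergence passage from unions of level-$k$ nodes to $\om\setminus R$; you also correctly patch the nodes with $Q_0(A)=0$ via $EQ(A)=Q_0(A)$ and assemble a single almost-sure event so that $Q(B)=0$ holds simultaneously for all $Q_0$-null $B$. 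What the paper's route buys is brevity, since the geometric-absorption calculation is outsourced to the cited truncation argument; what yours buys is a fully explicit, self-contained proof that exposes exactly where the hypotheses ($F^I_{\nu(A)}=\I_\infty$, upper-triangularity, and $\gamma_{i,I}(A)>\delta$ for large $k$) enter, at the cost of more measure-theoretic bookkeeping.
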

\noindent Remark: The complete shrinkage state eliminates the need for increasing lower bound on the support of $\nu(A)$ to ensure absolute continuity.

Next we establish two theoretical guarantees for inference using the Markov-APT model. The first property regards the flexibility of the model---it shows that Markov-APT enjoys full prior support. Thus inference with the Markov-APT is fully nonparametric.
\begin{thm}[Large prior support]
\label{thm:large_support}
Suppose $Q\sim \text{Markov-APT}(\bgam,\bF_{\bnu},Q_0)$ that satisfies the conditions of Theorem~\ref{thm:absolute_continuity}. In addition, suppose (i) $Q_0\ll \mu$, (ii) $I\geq 2$, (iii) $\exists \delta'>0$ such that for all large enough $k$, $\gamma_{i,I}(A)<1-\delta'$ for all $A\in\A^{k}$ and $i=1,2,\ldots,I-1$, and (iv)
$\exists \epsilon>0$ and $N>0$ such that $F^{i}_{\nu(A)}\bigl((0,N]\bigr) > \epsilon$ for all $i=1,2,\ldots,I-1$ and all $A\in \A^{(\infty)}$. Then for any distribution $G \ll Q_0$ and any $\tau>0$, we have
\vspace{-1em}

\[
\pi\left(Q:\int |q - g| d\,\mu < \tau\right) >0
\]
\vspace{-2em}

\noindent where $q=dQ/d\mu$ and $g=dG/d\mu$ are the corresponding densities with respect to $\mu$.
\end{thm}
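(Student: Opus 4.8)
The strategy is to reduce the $L_1$ statement to the control of only finitely many probability assignment coefficients (PACs), and then to exhibit an explicit event of positive prior probability in the latent-variable representation of the Markov-APT on which $Q$ is $L_1(\mu)$-close to $G$. For clarity I first describe the skeleton assuming $G(A)>0$ for every $A\in\A^{(\infty)}$; the general case additionally requires steering mass away from the cells on which $G$ vanishes by taking the relevant ancestral PACs close to $0$ or $1$, which is routine.

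\emph{Step 1: reduction to finitely many PACs.} Since $G\ll Q_0$, set $h=dG/dQ_0\in L_1(Q_0)$ and let $h_K=E_{Q_0}\!\left(h\mid\sigma(\A^{K})\right)$, the step function taking the value $G(A)/Q_0(A)$ on each $A\in\A^{K}$; write $G_K$ for the measure with $dG_K/dQ_0=h_K$ and $g_K=h_K q_0$ for its $\mu$-density. As $\A^{(\infty)}$ generates the Borel $\sigma$-algebra, the martingale convergence theorem gives $h_K\to h$ in $L_1(Q_0)$, so $\int|g_K-g|\,d\mu=\int|h_K-h|\,dQ_0\to 0$; fix $K$ with $\int|g_K-g|\,d\mu<\tau/2$, enlarging $K$ if necessary so that $K$ exceeds the level thresholds beyond which condition (iii) and the bound $\gamma_{i,I}(A)>\delta$ of Theorem~\ref{thm:absolute_continuity} apply. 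The measure $G_K$ has PACs $\theta^{G}(A)=G(A_l)/G(A)$ at levels $k<K$ and $\theta_0(A)$ at levels $k\ge K$. If a random $Q$ happens to have $\theta(A)=\theta_0(A)$ at all levels $k\ge K$, then $Q$ agrees with $Q_0$ within each $A\in\A^{K}$ up to the total mass $Q(A)$, so $q=\sum_{A\in\A^{K}}\frac{Q(A)}{Q_0(A)}q_0\,\I_A$ and hence $\int|q-g_K|\,d\mu=\sum_{A\in\A^{K}}|Q(A)-G(A)|$. Each $Q(A)$ ($A\in\A^{K}$) is a fixed polynomial in the finitely many PACs at levels $k<K$, so by uniform continuity there is $\eta>0$ such that $|\theta(A')-\theta^{G}(A')|<\eta$ for all $A'$ at levels $k<K$ forces $\sum_{A\in\A^{K}}|Q(A)-G(A)|<\tau/2$. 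By the triangle inequality it thus suffices to show $\pi(E)>0$, where $E$ is the event that $\theta(A)=\theta_0(A)$ at all levels $k\ge K$ and $|\theta(A')-\theta^{G}(A')|<\eta$ at all levels $k<K$.

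\emph{Step 2: a positive-probability latent configuration.} I bound $\pi(E)$ below by the probability of the event on which (a) $C(A')\in\{1,\dots,I-1\}$ for all $A'$ at levels $k<K$ and $C(A)=I$ for all $A\in\A^{K}$; (b) $\nu(A')\in(0,N]$ for all $A'$ at levels $k<K$; and (c) $|\theta(A')-\theta^{G}(A')|<\eta$ for all $A'$ at levels $k<K$. On (a) the complete-shrinkage state $I$ is absorbing --- upper-triangular stochastic rows force $\gamma_{I,I}(A)=1$ --- so every node at level $\ge K$ sits in state $I$, whence $\nu\equiv\infty$ and $\theta\equiv\theta_0$ there; therefore (a)--(c) imply $E$. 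I then bound the probability of (a)--(c) by conditioning outward through the hierarchy. For (a): along each of the finitely many root-to-level-$K$ branches, condition (iii) keeps the chain in states $<I$ with probability at least $\delta'$ at each level past the threshold (the finitely many earlier transitions being non-degenerate by the model specification), and the bound $\gamma_{i,I}(A)>\delta$ moves it into state $I$ at level $K$ with probability at least $\delta$; since the partition tree truncated at level $K$ is finite, $P(\text{(a)})>0$, and here (ii) enters, since at least one non-complete-shrinkage state must be available. Given (a), the $\nu(A')$ are conditionally independent, and by (iv) each satisfies $P\big(\nu(A')\in(0,N]\big)>\epsilon$, so (b) has positive conditional probability. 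Given (a) and (b), the $\theta(A')$ are conditionally independent ${\rm Beta}\big(\theta_0(A')\nu(A'),(1-\theta_0(A'))\nu(A')\big)$; since each $F^{i}_{\nu(A')}$ ($i<I$) is a genuine probability measure on $(0,\infty]$ --- hence supported in a bounded interval $(a(i),b(i)]$ and unable to place all its mass arbitrarily near $0$ --- it assigns positive mass to some compact $[\nu_-,\nu_+]\subset(0,N]$ with $\nu_->0$, on which the Beta density is bounded below by a positive constant on the $\eta$-neighborhood (one-sided if $\theta^{G}(A')\in\{0,1\}$) of $\theta^{G}(A')$, uniformly in $\nu$; consequently each $\theta(A')$ lands in its target neighborhood with positive conditional probability, and multiplying over the finitely many $A'$ gives a positive conditional probability for (c). Chaining the three bounds yields $\pi(E)>0$.

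\emph{Main obstacle.} The delicate point is Step 2: one has to keep the top $K-1$ levels out of the absorbing complete-shrinkage state so that their PACs remain free to be pushed toward $\theta^{G}$, while simultaneously using that very absorbing state to pin every deeper PAC at $\theta_0$ --- it is precisely this that makes the $L_1$ tail disappear and collapses the discrepancy to the finite sum $\sum_{A\in\A^{K}}|Q(A)-G(A)|$. Extracting the uniform Beta lower bound over a $\nu$-range that condition (iv) pins down only inside $(0,N]$ is the other place that needs care, and it is where one uses that $F^{i}_{\nu(A)}$ is a probability measure rather than an arbitrary finite measure. The book-keeping for the cells with $G=0$, for the case $\theta^{G}(A')\in\{0,1\}$, and for the few low-level Markov transitions not covered by the stated bounds, is routine once the skeleton above is in place.
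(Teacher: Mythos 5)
Your proof is correct in substance, and it reaches the result by a partly different route than the paper. The paper works with $\tilde q = q/q_0$ and $\tilde g = g/q_0$, first assumes $\tilde g$ continuous and bounded, approximates it on a compact set by a step function using a modulus-of-continuity argument, shows every step function in a small sup-norm neighborhood $D_\epsilon(\tilde g)$ is $L_1(Q_0)$-close to $\tilde g$, then asserts in one line that the level-$k$ truncations $\tilde q^{(k)}$ lie in $D_\epsilon(\tilde g)$ for all large $k$ with positive prior probability, and finishes with the almost-sure total-variation convergence $Q^{(k)}\to Q$ (from the absolute-continuity argument) and a final approximation of a general $\tilde g$ by a continuous bounded one. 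You instead reduce to a finite level $K$ directly via $L_1(Q_0)$ martingale convergence of $E_{Q_0}(dG/dQ_0\mid\sigma(\A^{K}))$, which avoids the Lusin/modulus-of-continuity step and the two-stage treatment of general $G$, and then you make explicit the latent event that the paper leaves implicit: states below $I$ at levels $<K$ with PACs pushed near $\theta^{G}$, and the absorbing complete-shrinkage state from level $K$ on, so that $Q(\cdot\mid A)=Q_0(\cdot\mid A)$ on every $A\in\A^{K}$ and the $L_1$ discrepancy collapses to $\sum_{A\in\A^{K}}|Q(A)-G(A)|$. In fact the paper's event ``$\tilde q^{(k)}\in D_\epsilon(\tilde g)$ for all large $k$'' forces exactly this complete-shrinkage-below-a-finite-level configuration, so the core mechanism is the same; your write-up supplies the detail precisely where the paper is tersest, namely where conditions (ii)--(iv) and the bound $\gamma_{i,I}(A)>\delta$ are actually used. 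Two small points to tidy: state event (b) as $\nu(A')\in[\nu_-,N]$ (obtained from $F^{i}_{\nu(A')}\bigl((0,N]\bigr)>\epsilon$ by continuity from below, taking the minimum over the finitely many nodes), since the uniform lower bound on the Beta probability of the target neighborhood is over a compact $\nu$-range bounded away from $0$, not over $(0,N]$; and note that positivity of the finitely many transitions (and initial probabilities) at levels below the thresholds in condition (iii) and Theorem~\ref{thm:absolute_continuity} is an implicit non-degeneracy assumption---the paper's own proof glosses over this as well, so it is not a gap relative to the paper, but it deserves the one-sentence acknowledgment you give it.
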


 The next result regards the asymptotic consistency of inference using the Markov-APT and it guarantees that as we get more and more data, the posterior will eventually concentrate into any weak neighborhood of the true density. For any probability measures $P_0$ on $\om$, a {\em weak neighborhood} $U$ of $P_0$ is a set of probability measures on $\om$ of the form
\vspace{-0.8em}

\[
U = \Biggl\{ Q: \Big|\int f_i(\cdot) dQ - \int f_i(\cdot) d P_0\Big| < \epsilon_i, \text{ for $i=1,2,\ldots,K$} \Biggr\}
\]
\vspace{-1.7em}

\noindent for any bounded continuous functions $f_i$'s and non-negative constants $\epsilon_i$'s.

\begin{thm}[Posterior consistency under weak topology]
\label{thm:post_consistency}
Suppose $X_1,X_2,\ldots,X_n,\ldots$ are i.i.d.\ data from $Q$, and let $\pi(\cdot)$ be a stochastically increasing Markov-APT prior on $Q$ that satisfies the conditions in Theorem~\ref{thm:large_support} and let $\pi(\cdot|X_1,X_2,\ldots,X_n)$ be the corresponding posterior. Then for any $P_0\ll Q_0$ with bounded density $dP_0/dQ_0$, we have 
\vspace{-1.5em}

\[
\pi(U\,|\,X_1,X_2,\ldots,X_n) \longrightarrow 1 \quad \text{as $n\rightarrow \infty$}
\]
\vspace{-2em}

\noindent with $P_0^{(\infty)}$ probability~1 for any weak neighborhood $U$ of $P_0$. 
\end{thm}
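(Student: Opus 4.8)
The plan is to reduce the theorem to a Kullback--Leibler (KL) support statement and then verify it using the structure of the complete‑shrinkage state. Since uniformly exponentially consistent tests always exist for weak neighborhoods, Schwartz's theorem reduces everything to establishing the \emph{KL property} of the prior at $P_0$: for every $\epsilon>0$,
\[
\pi\Bigl(Q:\ \int p_0\log(p_0/q)\,d\mu<\epsilon\Bigr)>0,\qquad p_0=\frac{dP_0}{d\mu},\ \ q=\frac{dQ}{d\mu},
\]
which yields convergence of the posterior of every weak neighborhood to $1$ with $P_0^{(\infty)}$‑probability one. The boundedness of $r:=dP_0/dQ_0\le M$ enters immediately here, since it gives $\int p_0\log(p_0/q_0)\,d\mu\le\log M<\infty$.

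I would first approximate $P_0$ by the measures $G_k$ whose PACs equal $\theta_{P_0}(A)$ at every node of resolution $<k$ and equal $\theta_0(A)$ at every node of resolution $\ge k$; thus $G_k$ matches the $P_0$‑mass of each cell $A\in\A^{k}$ and, restricted and renormalized within each such cell, coincides with $Q_0$. The chain rule for KL along the refining partitions gives
\[
\int p_0\log(p_0/g_k)\,d\mu=\int p_0\log(p_0/q_0)\,d\mu-\sum_{A\in\A^{k}}P_0(A)\log\frac{P_0(A)}{Q_0(A)},
\]
and since $\A^{(\infty)}$ generates the Borel $\sigma$‑algebra, the subtracted term increases to $\int p_0\log(p_0/q_0)\,d\mu<\infty$ by monotone convergence of relative entropy along a generating filtration. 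Hence $\int p_0\log(p_0/g_k)\,d\mu\to0$; fix $k$ with this quantity $<\epsilon/2$.

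The remaining, and main, task is to show $\pi$ charges a set of $Q$ with $\int p_0\log(p_0/q)\,d\mu<\int p_0\log(p_0/g_k)\,d\mu+\epsilon/2$. Since the difference is $\int p_0\log(g_k/q)\,d\mu$, I would work on the prior event $E$ defined by: (a) $C(A)\ne I$, $\nu(A)\le N$, and $|\theta(A)-\theta_{P_0}(A)|<\eta$ at every node $A$ of resolution $<k$; and (b) $C(A)=I$ at every node of resolution exactly $k$. By the upper‑triangular transition structure, (b) forces complete shrinkage ($\nu\equiv\infty$, $\theta\equiv\theta_0$) on every subtree below level $k$, so within each cell $A\in\A^{k}$ the renormalized $Q$ is exactly $Q_0$, hence exactly $G_k$; thus $g_k/q$ is constant on $A$, equal to $P_0(A)/Q(A)$. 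Conditions in (a) then force each $Q(A)$ ($A\in\A^{k}$ with $P_0(A)>0$) to within a multiplicative factor of $P_0(A)$ tending to $1$ as $\eta\downarrow0$, uniformly over the finitely many such cells (cells with $P_0(A)=0$ carry $p_0\equiv0$ and drop out). Hence $\sup_{\,{\rm supp}\,p_0}|\log(g_k/q)|$, and so $\int p_0\log(g_k/q)\,d\mu$, can be forced below $\epsilon/2$ by taking $\eta$ small. Finally $\pi(E)>0$: the event involves only the finitely many nodes of resolution $\le k$; its shrinkage‑state part has positive probability because entering state $I$ at a node has probability $\ge\delta>0$ (the hypothesis inherited from Theorem~\ref{thm:absolute_continuity}) while staying out of state $I$ has probability $\ge\delta'>0$ (condition (iii)); and, given admissible states with $\nu(A)\le N$ (positive probability by condition (iv), using $I\ge2$), each $\theta(A)$ has a Beta law assigning positive mass to every subinterval of $[0,1]$, so $P(|\theta(A)-\theta_{P_0}(A)|<\eta)>0$ whether $\theta_{P_0}(A)$ is interior or on the boundary. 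The product of these finitely many positive factors is positive, giving the KL property and hence the theorem.

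The crux is the last step---controlling the deep‑resolution ($\ge k$) tail of the KL divergence, which the prior leaves essentially free. The device that resolves it is the complete‑shrinkage state $F^{I}_{\nu(A)}=\I_\infty$: conditioning on reaching it at level $k$ collapses $Q$ onto $Q_0$ within every level‑$k$ cell and pins the deep part of the KL to that of $G_k$. The delicate point is that this conditioning event must retain strictly positive prior probability on a tree of finite depth $k$, which is exactly what the two‑sided control $\delta<\gamma_{i,I}(A)<1-\delta'$, together with $I\ge2$ and condition (iv), is designed to ensure; the boundary cases (cells on which $P_0$ or $Q_0$ vanishes, or $\theta_{P_0}(A)\in\{0,1\}$) need only a line of separate but routine attention.
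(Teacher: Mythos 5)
Your proof is correct in substance, but it verifies the Kullback--Leibler support property by a genuinely different route than the paper. Both arguments reduce the theorem to K-L support plus Schwartz's theorem. The paper then works with $\tilde p_0=dP_0/dQ_0$ directly: it invokes Lusin's theorem to get a compact $E$ on which $\tilde p_0$ is continuous, builds step functions on a level-$k$ partition that \emph{dominate} $\tilde p_0$ on $E$ and are bounded below by $\epsilon'$ everywhere (so the log ratio is nonpositive on $E$ and controlled on $E^c$), and bounds ${\rm KL}_{Q_0}(\tilde p_0,\tilde g)$ by terms in the modulus of continuity and $\epsilon'$; positivity of the prior mass near these step functions is left implicit, resting on the same truncation/complete-shrinkage mechanism used for Theorem~\ref{thm:large_support}. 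You instead take $g_k$ to be the exact level-$k$ coarsening of $P_0$ relative to $Q_0$, use the chain-rule identity and the convergence of the discretized relative entropy $\sum_{A\in\A^k}P_0(A)\log\bigl(P_0(A)/Q_0(A)\bigr)$ to ${\rm KL}(P_0\Vert Q_0)$ along the generating filtration (where boundedness of $dP_0/dQ_0$ gives both finiteness and the uniform integrability needed for the limit), and then make the prior-positivity step fully explicit via a finite-depth event: non-complete shrinkage with $\theta(A)$ pinned near $\theta_{P_0}(A)$ above level $k$, and state $I$ at level $k$ exactly, which collapses $Q$ onto $G_k$ below and reduces the residual KL term to a finite sum over level-$k$ cells. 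What the paper's construction buys is automatic control of regions where $\tilde p_0$ is small (the approximant sits above $\tilde p_0$); what yours buys is that no continuity or Lusin argument is needed and the prior-mass step is spelled out rather than implicit. One shared caution: the hypotheses $\gamma_{i,I}(A)>\delta$ and $\gamma_{i,I}(A)<1-\delta'$ hold only ``for all large enough $k$,'' so you should take your truncation level beyond that threshold (harmless, since ${\rm KL}(p_0,g_k)$ is nonincreasing in $k$) and note, as the paper also implicitly must, that the finitely many low-resolution nodes and the initial state probabilities leave positive mass on non-complete-shrinkage states; together with your remark on cells where $P_0$ or $Q_0$ vanish, these are routine fixes rather than gaps.
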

\vspace{-1.7em}

\subsection{Bayesian inference with the Markov-APT} 
\label{sec:post_inf}
\vspace{-0.5em}

Next we address how to carry out posterior inference for the Markov-APT. 
We show that the full posterior can be derived following a general recipe for hierarchical models \citep[Sec.~5.3]{gelman:2013} and inference can proceed in a usual manner through drawing a sample from the posterior and/or computing some summary statistic such as the posterior mean. In particular, the posterior is available analytically and so no Markov Chain Monte Carlo (MCMC) is needed. In particular, the full posterior $\pi(\C,\bnu,\bthe\,|\,\bphi,\bx)$ can be described in three pieces: (i)~$\pi(\bthe\,|\,\bphi,\bnu,\C,\bx)$, (ii)~$\pi(\bnu\,|\,\bphi,\C,\bx)$, and (iii)~$\pi(\C\,|\,\bphi,\bx)$ as follows. 
\vspace{0.5em}

(i)~$\pi(\bthe\,|\,\bphi,\bnu,\C,\bx)$. This conditional posterior follows directly from the Beta-binomial conjugacy. Specifically, we have
\vspace{-3.2em}

\begin{align*}
\theta(A)\,|\,\bphi,\bm{\nu},\bx &\sim {\rm Beta}(\tilde{\theta}_0(A)\tilde{\nu}(A),(1-\tilde{\theta}_0(A))\tilde{\nu}(A)) \quad \text{for all $A\in\A^{(\infty)}$,}
\end{align*}
\vspace{-3.2em}

\noindent where as before $\tilde{\nu}(A)=\nu(A)+n(A)$ and $\tilde{\theta}(A)=\left(\theta_0(A)\nu(A)+n(A_l)\right)/\tilde{\nu}(A)$. 
\vspace{0.5em}

(ii)~$\pi(\bnu\,|\,\bphi,\C,\bx)$. 
Due to the conjugacy of finite mixture models, 
the conditional posterior for $\nu(A)$ is still an $I$-component mixture
\vspace{-1.5em}

\[
\nu(A)\,|\,\bphi,\C,\bx \sim \sum_{c=1}^{I} \tilde{F}^{i}_{\nu(A)} \cdot \I_{C(A)=i}
\]
\vspace{-2.2em}

\noindent where each new mixture component distribution $\tilde{F}^{i}_{\nu(A)}$ has density
\vspace{-1.6em}

\[
d\tilde{F}^{i}_{\nu(A)}(\nu) = d F^{i}_{\nu(A)}(\nu)\cdot M_{A}(\bthe_0,\nu)/M_{A}^{i}(\bthe_0)
\]
\vspace{-2.5em}

\noindent for $i=1,2,\ldots,I$. The normalizing constant (or partition function) in the above density
\vspace{-1.2em}

\[
M^{i}_{A}(\bthe_0)=\int M_{A}(\bthe_0,\nu) d F^{i}_{\nu(A)}(\nu)
\]
\vspace{-2em}

\noindent is the marginal likelihood of the local binomial experiment on $A$ given $\theta_0(A)$ and $C(A)=i$, which can be numerically evaluated as described following Theorem~\ref{thm:posterior_apt} for $M_A(\bthe_0)$.

\vspace{0.5em}

(iii)~$\pi(\C\,|\,\bphi,\bx)$. The last piece is the marginal posterior on $\C$, 
which follows again from the conjugacy of finite mixtures (of which the MT is a special case) to be an MT with initial and transition probabilities computable analytically through a forward-backward algorithm~\citep{liu:2001}. The forward step, or the summation step, is a bottom-up (leaf-to-root) recursion on the partition tree and the backward step, or the sampling step, a top-down recursion. 

To describe the algorithm, we first define a mapping $\xi_{A}$ for each $A\in\A^{(\infty)}$ as follows
\vspace{-1em}

\[
\xi_{A}(i,\bphi):=\begin{cases}
\int q(\bx | A)\pi(dq\,|\,\bphi,C(A_p)=i) & \text{if $A\in\A^{(\infty)}\backslash\{\om\}$}\\
\int q(\bx | A)\pi(dq\,|\,\bphi) & \text{if $A=\om$}
\end{cases}
\] 
for $i=1,2,\ldots,I$, where $q(\bx|A):=\prod_{x\in A} \frac{q(x)}{Q(A)}$ with $q=dQ/d\mu$, and $A_p$ is the parent of $A$ in $\A^{(\infty)}$. This mapping gives the marginal likelihood of the ``submodel'' on $A$---the Markov-APT with $A$ being the sample space---given that the shrinkage state on $A_p$ is $i$. Note that when $A=\om$, it does not have a parent, and $\xi_{\om}(i,\bphi)$ is equal for all $i$ to the overall marginal likelihood of the Markov-APT. The following lemma provides a bottom-up recursive recipe for computing $\xi_{A}(i,\bphi)$.

\begin{lem}[Forward-summation]
\label{lem:forward}
For $A\in\A^{(\infty)}\backslash\{\om\}$,
\vspace{-2.5em}

\[
\xi_{A}(i,\bphi)=\begin{cases}
 \sum_{i'=1}^{I}\gamma_{i,i'}(A)\cdot M^{i'}_{A}(\bthe_0)\cdot\xi_{A_l}(i',\bphi)\xi_{A_r}(i',\bphi) 
& \text{if $n(A)>1$}\\
q_0(\bx|A) & \text{if $n(A)=1$ or $A$ has no children}\\
1 & \text{if $n(A)=0$}
\end{cases}
\]
\vspace{-1.8em}

\noindent where $q_0(\bx|A)=\prod_{x_i\in A}q_0(x_i)/Q_0(A)$ with $q_0=dQ_0/d\mu$. For $A=\om$, we simply replace $\gamma_{i,i'}(A)$ with $\gamma_{i'}(\om)$ in the above equation. In particular, $\xi_{\om}(1,\bphi)$ is the overall marginal likelihood, as a function of the hyperparameters $\bphi$.
\end{lem}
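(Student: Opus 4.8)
The plan is to derive the recursion from a single structural decomposition of the within-node likelihood at $A$, then dispatch the three terminal cases, and finally specialize to the root. The starting point is a pointwise factorization valid for any fixed probability measure $Q$ with density $q=dQ/d\mu$: at a node $A$ with children $A_l,A_r$, since $Q(A_l)=Q(A)\theta(A)$ and $Q(A_r)=Q(A)(1-\theta(A))$, splitting each factor $q(x)/Q(A)$ according to which child contains $x$ and multiplying over $x\in A$ gives
\[
q(\bx\,|\,A) \;=\; \theta(A)^{n(A_l)}\bigl(1-\theta(A)\bigr)^{n(A_r)}\; q(\bx\,|\,A_l)\; q(\bx\,|\,A_r),
\]
with empty products read as $1$ (so this also covers $n(A_l)=0$ or $n(A_r)=0$).

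For $n(A)>1$ --- in which case $A$ has children and the first line of the claim is the target --- I would condition on the shrinkage state at $A$. From the definition of $\xi_A(i,\bphi)$ and $P(C(A)=i'\,|\,C(A_p)=i)=\gamma_{i,i'}(A)$ we get $\xi_A(i,\bphi)=\sum_{i'=1}^I \gamma_{i,i'}(A)\int q(\bx\,|\,A)\,\pi(dq\,|\,\bphi,C(A)=i')$. The key point is that, conditional on $C(A)=i'$, the three objects $\theta(A)$, the restriction of $Q$ to $A_l$ (equivalently the PACs $\{\theta(B):B\subseteq A_l\}$), and the restriction of $Q$ to $A_r$ are mutually independent --- this is exactly the Markov-tree structure, since given $C(A)=i'$ the states $C(A_l),C(A_r)$ are drawn independently and each spawns its own subtree of independent $\nu$- and $\theta$-draws, while $(\nu(A),\theta(A))$ depends only on $C(A)$. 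Feeding the factorization above into this conditional independence splits the integral into three factors: (i) $\int \theta^{n(A_l)}(1-\theta)^{n(A_r)}\,\pi(d\theta(A)\,|\,\bphi,C(A)=i')$, which by the Beta--binomial integral of Theorem~\ref{thm:posterior_apt} (using $\nu(A)\sim F^{i'}_{\nu(A)}$ and $\theta(A)\,|\,\nu(A)\sim{\rm Beta}(\theta_0(A)\nu(A),(1-\theta_0(A))\nu(A))$) equals $\int M_A(\bthe_0,\nu)\,dF^{i'}_{\nu(A)}(\nu)=M^{i'}_A(\bthe_0)$; (ii) $\int q(\bx\,|\,A_l)\,\pi(dq\,|\,\bphi,C(A)=i')$, which is $\xi_{A_l}(i',\bphi)$ because $A=(A_l)_p$ and, by the Markov property, the conditional law of $Q|_{A_l}$ depends on the state at $A_l$'s parent only; and (iii) likewise $\xi_{A_r}(i',\bphi)$. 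This is the $n(A)>1$ line. Since all integrands are non-negative, the interchange of $\sum_{i'}$ with the integral and the factorization of the product over the two subtrees are justified by Tonelli with no extra hypotheses.

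For the terminal cases: if $n(A)=0$ the product $q(\bx\,|\,A)$ is empty, so $\xi_A(i,\bphi)=1$ (consistently, the $n(A)>1$ formula degenerates to $\sum_{i'}\gamma_{i,i'}(A)=1$). If $A$ has no children, the submodel on $A$ places all its mass at $Q_0(\cdot\,|\,A)$, hence $q(\bx\,|\,A)=q_0(\bx\,|\,A)$ deterministically. If $n(A)=1$, say $x\in A$: the submodel on $A$ conditional on $C(A_p)=i$ is itself a Markov-APT on sample space $A$, whose mean is $Q_0(\cdot\,|\,A)$ by Theorem~\ref{thm:mapt_prior_mean}, and the marginal law of a single observation equals that mean, so $\xi_A(i,\bphi)=\int \tfrac{q(x)}{Q(A)}\,\pi(dq\,|\,\bphi,C(A_p)=i)=\tfrac{q_0(x)}{Q_0(A)}=q_0(\bx\,|\,A)$. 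Concretely, along the nested sets $B_j\ni x$ (with $B_0=A$) the partial products $R_J=\prod_{j<J}\rho_j$, where $\rho_j$ is $\theta(B_j)/\theta_0(B_j)$ or $(1-\theta(B_j))/(1-\theta_0(B_j))$ according to which child of $B_j$ contains $x$, form a non-negative martingale of constant expectation $1$ (each $\rho_j$ has conditional mean $1$ given the precisions, and the $\theta$'s are conditionally independent given $\bnu$), uniformly integrable under the absolute-continuity hypotheses of Theorem~\ref{thm:absolute_continuity}, and converging a.s.\ to $\bigl(q(x)/Q(A)\bigr)\big/\bigl(q_0(x)/Q_0(A)\bigr)$; conditioning on $C(A_p)=i$ changes none of this. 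Finally, for $A=\om$ there is no parent and $C(\om)$ is drawn directly from $\bgam(\om)$, so the identical derivation applies with $\gamma_{i,i'}(\om)$ replaced by $\gamma_{i'}(\om)$, giving a value independent of $i$ equal to $\int q(\bx\,|\,\om)\,\pi(dq\,|\,\bphi)$, the overall marginal likelihood.

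The recursion over the tree is well founded because $n<\infty$ forces $n(\cdot)$ to drop to $0$ or $1$ at finite depth along every branch, so only finitely many $n(A)>1$ steps are ever chained; the combinatorics is not the difficulty. The delicate points are analytic: that $q(\bx\,|\,A)$ is an a.s.\ well-defined measurable functional of the random $Q$ --- which I would obtain from the absolute continuity of Theorem~\ref{thm:absolute_continuity} together with Lebesgue differentiation along the shrinking partition sets --- and the $n(A)=1$ case, where the limit defining the density must be exchanged with the expectation. I expect the latter to be the main obstacle, but it is precisely the computation already carried out in the proof of Theorem~\ref{thm:mapt_prior_mean} (the partial products are a non-negative, constant-expectation martingale, hence uniformly integrable under the stated conditions and convergent in $L^1$), so I would simply invoke that result; everywhere else, non-negativity of the integrands makes Tonelli applicable and no further care is needed.
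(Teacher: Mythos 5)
Your proposal is correct and follows essentially the same route as the paper's proof: condition on the state $C(A)$ via $\gamma_{i,i'}(A)$, use the Markov-tree conditional independence to factor the integral into the local Beta--binomial marginal $M^{i'}_{A}(\bthe_0)$ times $\xi_{A_l}(i',\bphi)\xi_{A_r}(i',\bphi)$, handle $n(A)\leq 1$ and childless nodes by definition and the mean theorem, and swap in $\gamma_{i'}(\om)$ at the root. The extra detail you supply (the explicit likelihood factorization, Tonelli, and the martingale/uniform-integrability justification of the $n(A)=1$ predictive-density step) only fills in points the paper states without proof.
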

\noindent Remark I: A node $A$ may have no children if $\om$ is discrete and $A$ is an atom.
\vspace{0.3em}

\noindent Remark II: This lemma shows that 
one can compute the mapping for $A$ based on those for its children, $A_l$ and~$A_r$ (hence bottom-up). Moreover, one can start the recursion from those $A$s such that $n(A)\leq 1$ but $n(A_p)\geq 2$, because all descendants of such $A$s have no more than one data point and so the mapping is known there.
\vspace{0.5em}

After computing $\{\xi_{A}(i,\bphi): A\in\A^{(\infty)}\text{ and }i=1,2,\ldots,I\}$, we can then carry out a backward (top-down) recursion to derive the marginal posterior of $\C$. 
\begin{thm}[Backward-sampling]
\label{thm:backward}
The marginal posterior of the shrinkage states is
\vspace{-1.5em}

\[
\C\,|\,\bphi,\bx \sim {\rm MT}(\bm{\tgam})
\]
\vspace{-2.5em}

\noindent whose initial state and transition probabilities $\bm{\tgam}=\{\bm{\tgam}(A):A\in\A^{(\infty)}\}$ are as follows.
\bi
\item The initial state probability vector:
\vspace{-2em}

\[
\bm{\tgam}(\om) = \bm{\gamma}(\om) \bm{D}''(\om)/\xi_{\om}(1,\bphi)
\]
\vspace{-3em}

\item The state transition probability matrix:
\vspace{-2em}

\[
\bm{\tgam}(A) = \bm{D}'(A)^{-1}\bm{\gamma}(A) \bm{D}''(A)
\]
\vspace{-3em}

\noindent for all $A\in \A^{(\infty)}\backslash \{\om\}$,
\ei
where for all $A\in\A^{(\infty)}$, $\bm{D}'(A)$ is the $I\times I$ diagonal matrix with the diagonal elements being $\xi_{A}(i,\bphi)$ for $i=1,2,\ldots,I$,
and $\bm{D}''(A)$ is the $I\times I$ diagonal matrix with the diagonal elements being $M^{i}_{A}(\bthe_0)\xi_{A_l}(i,\bphi)\xi_{A_r}(i,\bphi)$ if $A$ has children and $M^{i}_{A}(\bthe_0)$ if not for $i=1,2,\ldots,I$.
\end{thm}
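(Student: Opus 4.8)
The plan is to show that, conditionally on the state of its parent, each $C(A)$ has exactly the posterior transition law in the statement, and that the whole family $\C$ remains a Markov tree a posteriori; the normalizing constants that appear in these one-step conditionals turn out to be precisely the quantities $\xi_A(\cdot,\bphi)$ already computed by the forward recursion of Lemma~\ref{lem:forward}. Concretely, I would establish three things in turn: (a) that conditioning on the data preserves the Markov tree structure of $\C$; (b) a local Bayes computation identifying the one-step conditionals; and (c) the matching of these to the matrix formulas $\bm{\tgam}(A)=\bm{D}'(A)^{-1}\bgam(A)\bm{D}''(A)$ and $\bm{\tgam}(\om)=\bgam(\om)\bm{D}''(\om)/\xi_\om(1,\bphi)$.

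For (a): in the generative model, conditionally on $C(A_p)=i$, the ``world at and below $A$''---the shrinkage states, the $\nu(A')$'s and $\theta(A')$'s on $A$ and its descendants, and the observations falling in $A$---is independent of everything else (the observations outside $A$ and the shrinkage states outside the subtree rooted at $A$). This follows from the Markov tree property of $\C$, the conditional independence of the pairs $(\nu(A'),\theta(A'))$ across $A'$ given $\C$, and the fact that, given $Q$, the points in $A$ form an i.i.d.\ sample from $Q$ conditioned to lie in $A$, hence are determined by $\{\theta(A'):A'\subseteq A\}$. Consequently $\pi(C(A)=i'\,|\,C(A_p)=i,\bphi,\bx)=\pi(C(A)=i'\,|\,C(A_p)=i,\bphi,\bx_A)$, where $\bx_A$ is the subsample falling in $A$, and $\C\,|\,\bphi,\bx$ is again a Markov tree.

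For (b) and (c): by Bayes' theorem and the independence noted in (a), $\pi(C(A)=i'\,|\,C(A_p)=i,\bphi,\bx)\propto_{i'}\gamma_{i,i'}(A)\cdot p(\bx_A\,|\,C(A)=i',\bphi)$, where $p(\bx_A\,|\,C(A)=i',\bphi)=E[q(\bx|A)\mid\bphi,C(A)=i']$ is the marginal conditional likelihood of the subsample in $A$, with $q(\bx|A)=\prod_{x\in A}q(x)/Q(A)$. The one-step decomposition $q(\bx|A)=\theta(A)^{n(A_l)}(1-\theta(A))^{n(A_r)}\,q(\bx|A_l)\,q(\bx|A_r)$, together with the fact that, given $C(A)=i'$, $\theta(A)$ and the subtrees rooted at $A_l$ and at $A_r$ are mutually independent, gives
\[
p(\bx_A\,|\,C(A)=i',\bphi)=M^{i'}_A(\bthe_0)\,\xi_{A_l}(i',\bphi)\,\xi_{A_r}(i',\bphi),
\]
since $M^{i'}_A(\bthe_0)=E[\theta(A)^{n(A_l)}(1-\theta(A))^{n(A_r)}\mid\bphi,C(A)=i']$ and $\xi_{A_c}(i',\bphi)=E[q(\bx|A_c)\mid\bphi,C(A)=i']$ for $c\in\{l,r\}$ by the very definition of $\xi$ (recall $(A_c)_p=A$). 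The normalizer over $i'$ is $\sum_{i'}\gamma_{i,i'}(A)M^{i'}_A(\bthe_0)\xi_{A_l}(i',\bphi)\xi_{A_r}(i',\bphi)=\xi_A(i,\bphi)$ by Lemma~\ref{lem:forward}, whence
\[
\tgam_{i,i'}(A)=\frac{\gamma_{i,i'}(A)\,M^{i'}_A(\bthe_0)\,\xi_{A_l}(i',\bphi)\,\xi_{A_r}(i',\bphi)}{\xi_A(i,\bphi)},
\]
which is exactly the $(i,i')$ entry of $\bm{D}'(A)^{-1}\bgam(A)\bm{D}''(A)$; summing over $i'$ shows its rows sum to $1$, so $\bm{\tgam}(A)$ is stochastic. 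The identical computation at the root, where $\gamma_{i'}(\om)$ replaces $\gamma_{i,i'}(\om)$ and the common normalizer is $\xi_\om(1,\bphi)$, yields $\bm{\tgam}(\om)=\bgam(\om)\bm{D}''(\om)/\xi_\om(1,\bphi)$. Finally, at a node with $n(A)\le 1$ (in particular one with no children) the factor $p(\bx_A\,|\,C(A)=i',\bphi)$ does not depend on $i'$---e.g.\ it equals $q_0(\bx|A)$ when $n(A)=1$---so $\bm{D}'(A)=\bm{D}''(A)$ and $\bm{\tgam}(A)=\bgam(A)$ there, in agreement with the formula and with the recursion terminating at such nodes.

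The main obstacle is making the conditional-independence statement in (a) and the limiting identities in (b) fully rigorous on the infinite partition tree: one must argue that conditioning on the (downstream) data keeps $\C$ Markov, and that the expectations defining $p(\bx_A\,|\,C(A)=i',\bphi)$ and $\xi_A$---which involve the densities $q=dQ/d\mu$ and hence limits over infinitely many resolutions---are finite and behave as asserted. This is where the absolute-continuity theory (Theorem~\ref{thm:absolute_continuity}) and the paper's conventions for the complete-shrinkage state $\nu(A)=\infty$, for atoms, and for the terminal nodes $n(A)\le 1$ are invoked; the remaining work (matching $\tgam$ to the $\bm{D}',\bm{D}''$ matrices and verifying stochasticity of $\bm{\tgam}$) is routine algebra given Lemma~\ref{lem:forward}.
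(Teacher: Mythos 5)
Your proposal is correct and follows essentially the same route as the paper's proof: a node-by-node application of Bayes' rule in which the subtree marginal likelihood given $C(A)=i'$ factors as $M^{i'}_{A}(\bthe_0)\,\xi_{A_l}(i',\bphi)\,\xi_{A_r}(i',\bphi)$, the normalizer is identified with $\xi_{A}(i,\bphi)$ via Lemma~\ref{lem:forward}, and the root is handled analogously with $\xi_{\om}(1,\bphi)$. The only difference is that you spell out the conditional-independence argument showing the posterior on $\C$ remains a Markov tree and note the $n(A)\le 1$ degeneration, points the paper leaves implicit (the latter appearing only as a remark after the theorem).
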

\vspace{0.1em}

\noindent Remark: In particular, for any $A$ with $n(A)\leq 1$, by the theorem $\bm{\tgam}(A)=\bgam(A)$. So {\em a posteriori} the MT on $A$ with no more than one observation is the same as the prior MT. 

We have completely described the three components of the full posterior for $(\bthe,\bnu,\C)$. One can draw from the joint posterior by sampling in the order of $\pi(\C\,|\,\bphi,\bx)$, $\pi(\bnu\,|\,\bphi,\C,\bx)$, and $\pi(\bthe\,|\,\bphi,\bnu,\C,\bx)$. The forward-backward recursion given in Lemma~\ref{lem:forward} and Theorem~\ref{thm:backward} is the most computationally intense step in the posterior inference. Fortunately, for any given data set and prior specification, this recursion only needs to be carried out {\em once and for all}, because $\tilde{\bgam}$ stays the same for all the posterior draws.

After drawing $K$ posterior samples 
$(\bthe^{(1)},\bnu^{(1)},\C^{(1)}),(\bthe^{(2)},\bnu^{(2)},\C^{(2)}),\ldots,(\bthe^{(K)},\bnu^{(K)},\C^{(K)})$, one can carry out Bayesian inference in the usual manner. In particular, when one is interested in the unknown distribution $Q$, one can use $\bthe^{(1)},\bthe^{(2)},\ldots,\bthe^{(K)}$ to obtain a posterior sample $Q^{(1)},Q^{(2)},\ldots,Q^{(K)}$ while discarding the other variables. 

In fact, some posterior summaries can be evaluated analytically without resorting to posterior sampling at all. In particular, the PPD at any $x^{*}\in\om$ is equal to $\xi_{\om}^{*}(1,\bphi)/\xi_{\om}(1,\bphi)$, where $\xi_{\om}^{*}(1,\bphi)$ is the overall marginal likelihood computed according to Lemma~\ref{lem:forward} for a data set that contains the original data plus an additional point at $x^{*}$. This is particularly useful in applications such as density estimation because it avoids Monte Carlo errors in computing an estimator. In our numerical examples in \ref{sec:numerical_examples}, we compute all PPDs this way. Note that after computing the $\xi_{A}(i,\bphi)$ mappings for the original data set, the corresponding mappings $\xi_{\om}^{*}(i,\bphi)$ for the new data set can be very quickly obtained by updating only the branch of partition tree in which the new data point $x^{*}$ falls into, because the mapping stays the same on all other nodes. 

\vspace{-1em}

\subsection{Prior specification for Markov-APTs}
\label{sec:prior_spec}
\vspace{-0.5em}

Next we provide guidelines for specifying an Markov-APT, i.e.\ how to choose the hyperparameters $\{(\theta_0(A),\bgam(A),\bF_{\nu(A)}): A\in\A^{(\infty)}\}$, in the context of density estimation. The objective is to balance the dual-goal of robustness for a variety of distributional features, and parsimony---involving only a small number of hyperparameters to be set (i.e.\ the tuning parameters). We give an empirical Bayes strategy to set the tuning parameters adaptively. 
\vspace{0.5em}

{\em Prior choice of $\theta_0(A)$.} $\bthe_0$ is the PACs corresponding to $Q_0$, the prior mean of the model. That is, $\theta_0(A)=Q_0(A_l)/Q_0(A)$ for all $A\in\A^{(\infty)}$. Depending on the application, one may or may not have relevant prior knowledge regarding the prior mean. In lack of such knowledge, a simple default choice is to let $Q_0$ be uniform over a wide enough interval.  Another common situation is that one wants to center the Markov-APT around some parametric family such as the Gaussian location-scale family as in \cite{berger:2001,hanson:2006}, without specifying exactly which member of that family $Q_0$ is. This can be achieved by placing another layer of hierarchical prior on $Q_0$, e.g.\ on the location and scale parameters \citep{hanson:2006}, forming a mixture of Markov-APTs.
\vspace{0.5em}

{\em Prior choice of $\bgam(A)$.} In density estimation, we shall choose $\bgam(A)$ to be upper-triangular to achieve adaptive smoothing through stochastically increasing shrinkage. The most simple choice for the transition probabilities that satisfies this condition is
\vspace{-1em}

\[
\gamma_{i,i'}(A) = \begin{cases}
1/(I-i) & \text{if $i\leq i'$}\\
0 & \text{if $i > i'$}
\end{cases}
\]
for all $i,i'\in\{1,2,\ldots,I\}$ and $A\in\A^{(\infty)}\backslash\{\om\}$. In other words, given $A$'s parent is in shrinkage state $c$, then $A$ can take any higher shrinkage state (including $c$) with equal probability. 

This ``uniform'' transition probability specification is a special case of a more flexible kernel specification. Specifically, we can choose a kernel function $k(i,i')$ such that $k(i,i')$ is non-increasing in $|i-i'|$, and set the transition probability
\vspace{-1.5em}

\[
\gamma_{i,i'}(A) \propto k(i,i')\I_{i\leq i'}.
\]
\vspace{-2.2em}

\noindent A kernel $k(i,i')$ strictly decreasing in $|i-i'|$ will introduce ``stickiness'' into the shrinkage levels between a node and its parent (and thus also with its siblings and other relatives to a lesser degree). It encourages the shrinkage level to change gradually among nearby nodes in the partition tree, which is particularly useful when the smoothness (or lack of smoothness) of the underlying distribution tends to be similar for places closeby in the sample space. 

Of course, typically one does not know {\em a priori} whether and to what extent such sticky shrinakge is needed. Thus it is useful to allow the stickiness to be adaptively determined. One natural way to achieving this additional adaptivity is to choose a kernel that contains a tuning parameter for the stickiness, and use empirical Bayes to choose its value. For example, consider the exponential kernel
\vspace{-2.5em}

\[
k(i,i') = e^{-\beta|i-i'|},
\]
\vspace{-3em}

\noindent where $\beta\geq 0$ is the stickiness parameter. Note that $\beta=0$ corresponds to the uniform transition probabilities described above, while a large positive value of $\beta$ corresponds to strong stickiness in shrinkage. Finally, the initial state probabilities can be simply set to
\vspace{-1.5em}

\[
\gamma_{1}(\om) = \gamma_{2}(\om) =\cdots = \gamma_{I}(\om) = \frac{1}{I}. 
\]

{\em Prior choice of $\bF_{\bnu}$.}  Following a common practice \citep[Sec.~5.3]{gelman:2013} in Bayesian hierarchical modeling for Beta-binomial experiments, we specify prior on $\nu(A)$ on the log scale. First, we determine a {\em global} support for $\log_{10}\nu(A)$, i.e.\ the union of the supports of all $F_{\nu(A)}^i$. A convenient choice of the global support, aside from the complete shrinkage state included to ensure absolute continuity, is a finite interval $[L,U]$. 

A simple and robust strategy for choosing the interval that works in a wide variety of situations is to choose a wide enough range that covers all reasonable shrinkage levels and yet not so wide as to induce excessive prior probability in extremely strong or weak shrinkage levels. 
We recommend setting $[L,U]=[-1,4]$. On one end, $\log_{10}\nu(A)=-1$ corresponds to a prior sample size of $10^{-1}=0.1$, enforcing little shrinkage, while on the other $\log_{10}\nu(A)=4$ corresponds to shrinkage equivalent to about 10,000 prior ``observations'' for the local binomial experiment, resulting in very strong shrinkage. We have experimented with treating $L$ and $U$ as tuning parameters and choosing their values in a data dependent fashion using empirical Bayes (described below), but that resulted in little improvement over the very simple choice of [-1,4] in all of the numerical scenarios we investigated. 

Given the global support of $\log_{10}\nu(A)$, $[L,U]\cup\{\infty\}$, where $\infty$ is included for the complete shrinkage state, we now 
divide this support into $I$ non-overlapping intervals. Specifically, we let the first $(I-1)$ intervals evenly divide $[L,U]$ and the last being $\{\infty\}$. That is, we have
\vspace{-1.5em}

\[
[a(1),a(2)),\,\,\, [a(2),a(3)),\,\,\,\ldots,\,\,\, [a(I-1),a(I)),\,\,\, \{\infty\}.
\]
\vspace{-2.5em}

\noindent Then we let
\vspace{-2em}

\[
F^{i}_{\nu(A)}:\,\,\, \log_{10}\nu(A) \sim {\rm Uniform}(a(i),a(i+1))
\]
\vspace{-2.4em}

\noindent for $i=1,2,\ldots,I-1$ and $F^{I}_{\nu(A)}$ being a point mass at $\infty$. 
\vspace{0.5em}

{\em Choosing the tuning hyperparameters by empirical Bayes.} Our default prior specification is parsimonious in that it reduces the number of free parameters down to two---the number of shrinkage states $I$ and the stickiness parameter $\beta$ for the transition kernel. One can set these two tuning parameters in a data-adaptive manner by empirical Bayes. In particular, Lemma~\ref{lem:forward} provides the recipe for computing $\xi_{\om}(1,\bphi)$, the overall marginal likelihood. We can then compute the marginal likelihood as a function of the tuning parameters $\xi_{\om}(1,\bphi(I,\beta))$. Maximizing this likelihood over a grid of allowed values produces the maximum marginal likelihood estimate (MMLE) $(\hat{I},\hat{\beta})$, which one can then keep fixed in the inference.

\vspace{-1.2em}

\section{Performance evaluation in density estimation}
\label{sec:numerical_examples}
\vspace{-0.5em}

In this section we evaluate the performance of the Markov-APT in density estimation under five schematic simulation scenarios (\ref{fig:true_dens}). Each scenario corresponds to an underlying density with a particular type of structure commonly encountered in real applications. For each scenario, we simulate data sets of six different sample sizes---125, 250, 500, 750, 1000, and 1250. We compare the performance of Markov-APT to that of three other nonparametric models---namely, the PT, the OPT, and the DPM of normals \citep{escobar:1995}. We fit the DPM in {\tt R} using the library {\tt DPpackage} \citep{jara:2007,jara:2011b}. Details on the specification of the DPM are given in Supplementary Materials~S2.

For each method, we use the PPD, denoted by $\hat{f}$, as an estimator. To measure performance, we adopt the $L_1$ loss, i.e.\ the $L_1$ distance between $\hat{f}$ and the true density $f_0$, $||\hat{f}-f_0||_1 = \int |\hat{f}-f_0| d\mu$. For each simulation scenario and sample size, we generate $K=200$ data sets, with $\hat{f}^{(k)}$ being the estimate for the $k$th data set. We numerically calculate $||\hat{f}^{(k)} - f_0 ||_1$ using Riemann integral for all four methods. To make a comparison, for each of the competitors---PT, OPT, and DPM---we compute the percentage difference between its $L_1$ loss and that of the Markov-APT:
\vspace{-1em}

\[
\frac{||\hat{f}_{\text{Competitor}}^{(k)} - f_0 ||_1 - ||\hat{f}_{\text{Markov-APT}}^{(k)} - f_0 ||_1}{||\hat{f}_{\text{Markov-APT}}^{(k)} - f_0 ||_1} \times 100\%.
\]
A positive percentage increase indicates an outperformance of the Markov-APT over the competitor, with larger values indicating more effectiveness of the Markov-APT. Computing this measure of relative performance for each simulation allows us to evaluate both the average performance increase and the variability in the improvement across repeated experiments. In addition, we also estimate the average $L_1$ loss, i.e.\ the $L_1$ risk, $R_{L_1}(f_0,\hat{f})=E_{f_0} ||\hat{f}-f_0||_1$ for each method under each simulation setting using the Monte Carlo average
\vspace{-1.2em}

\[ 
\widehat{R}_{L_1}(f_0,\hat{f})= \frac{1}{K} \sum_{k=1}^{K} ||\hat{f}^{(k)} - f_0||_1.
\]

\vspace{-0.2em}

In all of the simulation settings, we adopt the prior specification recommended in Section~\ref{sec:prior_spec} with an exponential transition kernel, and use empirical Bayes to set the tuning parameters $(I,\beta)$. The range of tuning parameter values over which we maximize the marginal likelihood is $\{2,3,\ldots,11\}\times[0,2]$. The OPT also involves a tuning parameter $\rho_0$, the prior ``stopping'' probability \citep{wongandma:2010}, which we set by empirical Bayes using MMLE over $[0,1]$.
We implement the Markov-APT, PT, and OPT models up to the 12th level in the partition tree. Deeper partition trees result in little numerical difference.

\ref{fig:true_dens} presents the true densities for all scenarios from which the data are simulated. In each scenario, the underlying density is supported on the interval [0,1].  Our proposed model does not require the support to be a bounded interval, and this choice here is to simplify the numerical evaluation of the $L_1$ loss. This causes no loss of generality in our simulation because any density on $\real$ can be transformed onto [0,1] after applying, say, a cdf transformation. In \ref{fig:relative_L1} we present histograms of the percentage increase in $L_1$ loss for the three competitor methods relative to Markov-APT for each simulation setting. For easier comparison, we overlay the histograms for three different sample sizes---125 (small), 500 (medium), and 1000 (large)---to show how the relative performance changes for different sample sizes. (We have chosen to only show the three sample sizes in this figure rather than all six sample sizes because overlaying six histograms makes the plot illegible while using the three sample sizes is sufficient to convey the main finding.) \ref{fig:L1} presents the $L_1$ risks for all methods and scenarios versus sample size. Finally, to help understand why each method performs well or poorly in each scenario, in \ref{fig:pred_dens} we plot a typical PPD for each model under each scenario for a sample size that well differentiates the performance of the~methods. 

\bi
\item {\em Scenario 1:\ Spiky local structures.} In this case, the true distribution is
\vspace{-3.5em}

\[
\hspace{-2em} 0.2\, {\rm U}(0,1) + 0.2\,{\rm U}(0.2,0.205) + 0.2\,{\rm U}(0.4,0.405) + 0.2\,{\rm U}(0.6,0.605) + 0.2\, {\rm U}(0.8,0.805).
\]
\ei
\vspace{-1.5em}

\noindent See \ref{fig:true_dens}(a) for the true density. This represents the case when the underlying distribution has a few spiky structures in the midst of a flat background. In this case the key is to effectively determine the size (or height) of those spikes and pin down their boundaries.

Multi-resolution inference methods such as wavelets, PT, OPT, and our Markov-APT are particularly effective in capturing abrupt changes such as spikes and sharp boundaries in the nonparametric quantity of interest. Thus scenarios where the underlying distribution predominantly consists of spiky structures are the most favorable scenario for such methods in comparison to mixture-based method such as the DPM. 

Indeed, as shown in \ref{fig:relative_L1}(a) and \ref{fig:L1}(a), the Markov-APT 
performs substantially better than DPM, especially for medium and large sample sizes. It may first appear surprising that the PT, being a multi-resolution approach, performs the worst among all, in fact substantially worse than DPM. But this can be expected because the PT is unable to amply capture the height of the spikes due to over shrinkage at high-resolutions. In contrast, the amount of shrinkage under the Markov-APT is adaptive and thus automatically adjusts to low levels in and around the spikes. Interestingly, the OPT, which only allows no shrinkage or complete shrinkage, performs even slightly better than the Markov-APT. But in fact this is not surprising at all. When the underlying density is a step function as in the current scenario, the only appropriate shrinkage levels are indeed no shrinkage and complete shrinkage. Therefore the OPT is in fact an ``oracle'' in this case and one should expect it to perform the best. It is reassuring to see that the Markov-APT, while allowing much more flexible shrinkage levels, did not lose much efficiency relative to the ``oracle''.

From the PPDs in \ref{fig:pred_dens}(a) we see that the DPM tends to overestimate the height of the spikes---this is likely because in order to the characterize the sharp boundaries the DPM needs to ``squeeze'' the mixture component to have very thin tails, and thus making the mode of the mixture component much taller than the truth. 
\vspace{-0.5em}

\bi
\item {\em Scenario 2:\ Non-overlapping structures of different scales.} The true distribution is
\vspace{-1.7em}

\[
0.1\,{\rm U}(0,1) + 0.3\,{\rm U}(0.25,0.5) + 0.4\,{\rm Beta}_{(0.25,0.5)}(2,2) + 0.2\,{\rm Beta}(6000,4000)
\]
\ei
\vspace{-0.7em}

\noindent See \ref{fig:true_dens}(b) for the true density. This is the scenario given earlier in Example~\ref{ex:motive}. The underlying density has two bumps of different scales---one large and the other small---that are not overlapping with each other. The high-resolution, spiky structure now has a smooth boundary in contrast to the abrupt boundaries in the previous scenario. This is a favorable scenario for kernel mixture methods such as the DPM. By allowing the local kernel to have varying variance, the DPM is also able to adapt to the different scales of the two bumps. Thus one would expect the DPM to perform well.

From \ref{fig:relative_L1}(b) and \ref{fig:L1}(b), we see that the Markov-APT achieves better performance than the DPM at sample size 125, and comparable performance at larger sample sizes. The performance gain of Markov-APT over the OPT and PT is substantial at all sample sizes. \ref{fig:pred_dens}(b) shows that the OPT substantially oversmooths the large-scale feature while capturing the small-scale feature well, and the PT oversmooths the small-scale feature while undersmooths at high-resolutions within the large-scale feature.

\bi
\item {\em Scenario 3:\ Overlapping structures of different scales.} The true distribution is
\vspace{-2em}

\[
0.1\,{\rm U}(0,1) + 0.3\,{\rm U}(0.25,0.5) + 0.4\,{\rm Beta}_{(0.25,0.5)}(2,2) + 0.2\,{\rm Beta}(4000,6000).
\]
\ei
\vspace{-1em}

\noindent See \ref{fig:true_dens}(c) for the true density. This case is similar to the previous except that now the spiky local structure lies inside the smooth large-scale structure. From \ref{fig:L1}(c), we see that the performance of the Markov-APT is essentially unchanged from the case where the structures are non-overlapping. In contrast, \ref{fig:relative_L1}(c) and \ref{fig:L1}(c) show that the other methods all perform quite differently in this scenario. In particular, there is a substantial decay in performance for the DPM at smaller sample sizes compared to the case with non-overlapping structures, whereas the OPT and PT perform better for the current scenario. The dramatic change in the performance of PT compared to that in Scenario~2 illustrates the importance of adaptivity in achieving robust inference. Non-adaptive methods may be performing well in some situations but very poorly in another with only small or modest changes in the underlying distribution.

From \ref{fig:pred_dens}(c) we see that the three multi-resolution methods are all capable of characterizing the jump in the density even with small sample sizes. The sudden improvement in PT's performance is readily explained in \ref{fig:pred_dens}(c). The estimation error from PT in this and the previous scenario comes from two sources---the under-smoothing (i.e.\ under-shrinkage) in the large-scale smooth structure and over-smoothing (i.e.\ over-shrinkage) in the local spiky structure. By moving the spiky structure into the smooth structure, the error that comes from the under-smoothing in the large-scale structure is reduced because the smooth portion of the large-scale structure now accounts for a smaller proportion of the total probability mass, while the error that comes from over-smoothing the local structure is also reduced because now the local structure corresponds to more probability mass and hence more data. Note also in \ref{fig:relative_L1}(c) and \ref{fig:L1}(c) that the relative performance gain of Markov-APT over PT increases with the sample size.

\vspace{-0.5em}

\bi
\item {\em Scenario 4:\ Sharp boundaries.} The true distribution is
\vspace{-3.5em}

\[
\hspace{-2.5em} 0.1{\rm Beta}(2,2) + 0.25{\rm U}(0.3,0.55) + 0.05 {\rm Beta}_{(0.3,0.55)}(2,2) + 0.55 {\rm U}(0.55,0.8) + 0.05 {\rm Beta}(0.55,0.8).
\]
\ei
\vspace{-1.5em}

\noindent See \ref{fig:true_dens}(d) for the true density. In this scenario the underlying density is a couple of smooth structures with sharp boundaries separating them. In order to characterize the sharp jumps, the DPM introduces several mixture components, resulting in the loss of performance. \ref{fig:relative_L1}(d) and \ref{fig:L1}(d) show that the Markov-APT and the OPT perform the best and in very similar way over the entire range of sample sizes. From \ref{fig:pred_dens}(d) we see that the Markov-APT is able to both capture the sharp boundaries and the smooth modes, while the OPT again tends to oversmooth the two local modes, but is able to capture the sharp boundaries accurately. The $L_1$ loss in this example is predominantly contributed from the boundaries, and so the oversmoothing does not impair much of the performance of the OPT.  On the other hand, the PT again performs significantly worse than any other method due to its substantial undersmoothing. At small sample sizes, the DPM achieves comparable performance as the Markov-APT, but the relative performance gain of the Markov-APT over the DPM is widened for larger sample sizes. From \ref{fig:pred_dens}(d) we see that in order to characterize the sharp boundaries, the DPM needs to introduce a number of additional mixture components, resulting in overfitting in regions away from the boundaries.

\vspace{-0.5em}

\bi
\item {\em Scenario 5:\ Globally smooth structure.} The true distribution is ${\rm Beta}(10,20)$.
\ei
\vspace{-0.5em}

\noindent See \ref{fig:true_dens}(e) for the true density. In this case the underlying density is an approximately Gaussian smooth distribution, and DPM with a Gaussian kernel is essentially the true model for this scenario, and unsurprisingly performs the best. Such globally smooth distributions are the ``least favorable'' scenario for the three PT-type multi-resolution methods. \ref{fig:relative_L1}(e) and \ref{fig:L1}(e) show that for all sample sizes, the $L_1$ loss is on average about 50\% smaller under the DPM vs the Markov-APT. Among the multi-resolution methods, the Markov-APT substantially outperforms the OPT and the PT, and the performance gain widens for larger sample sizes.

\begin{figure}[p]
  \begin{center}
    \mbox{
      \subfigure[Scenario~1]{\includegraphics[width=0.35\textwidth]{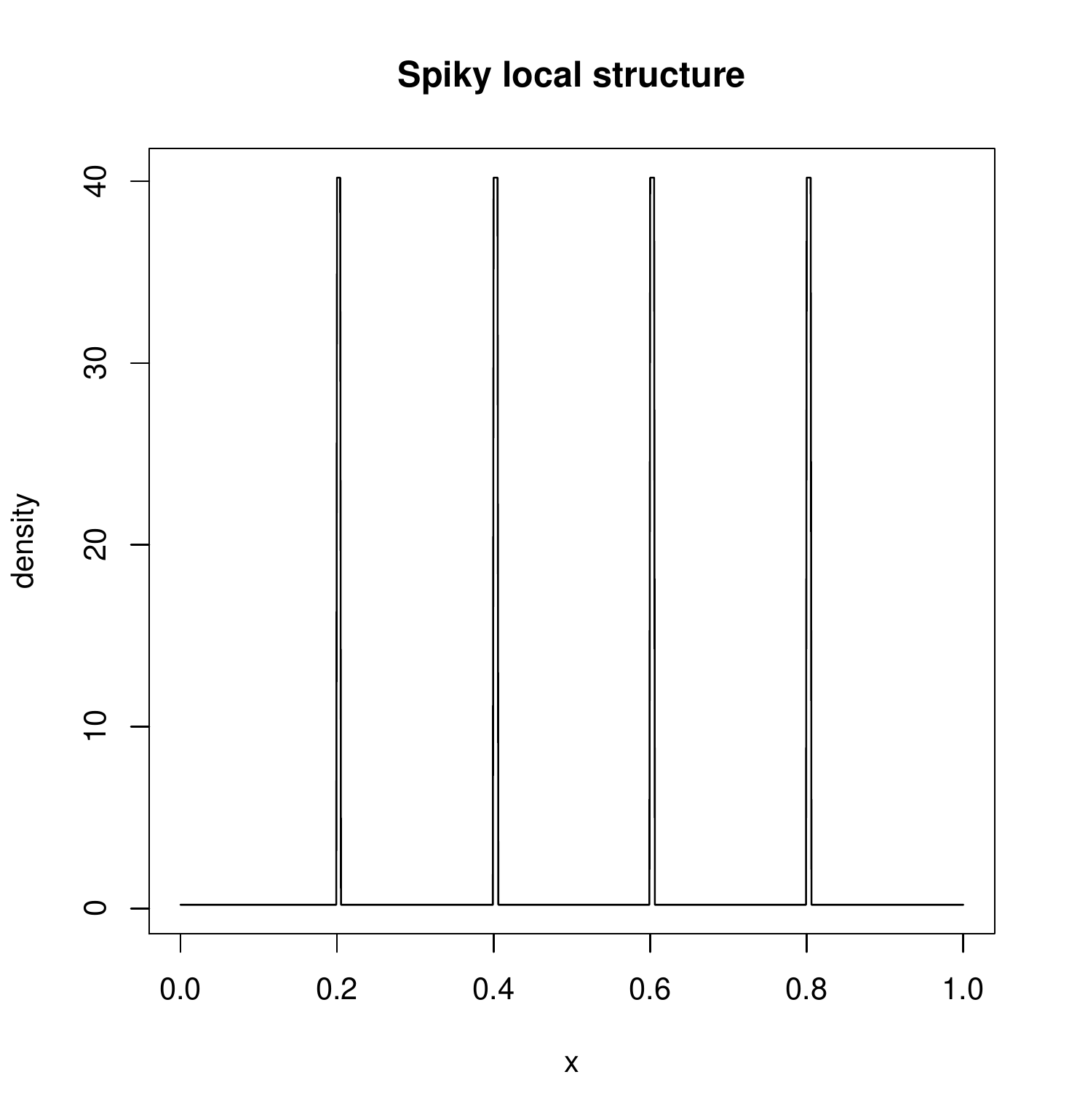}}
      \subfigure[Scenario~2]{\includegraphics[width=0.35\textwidth]{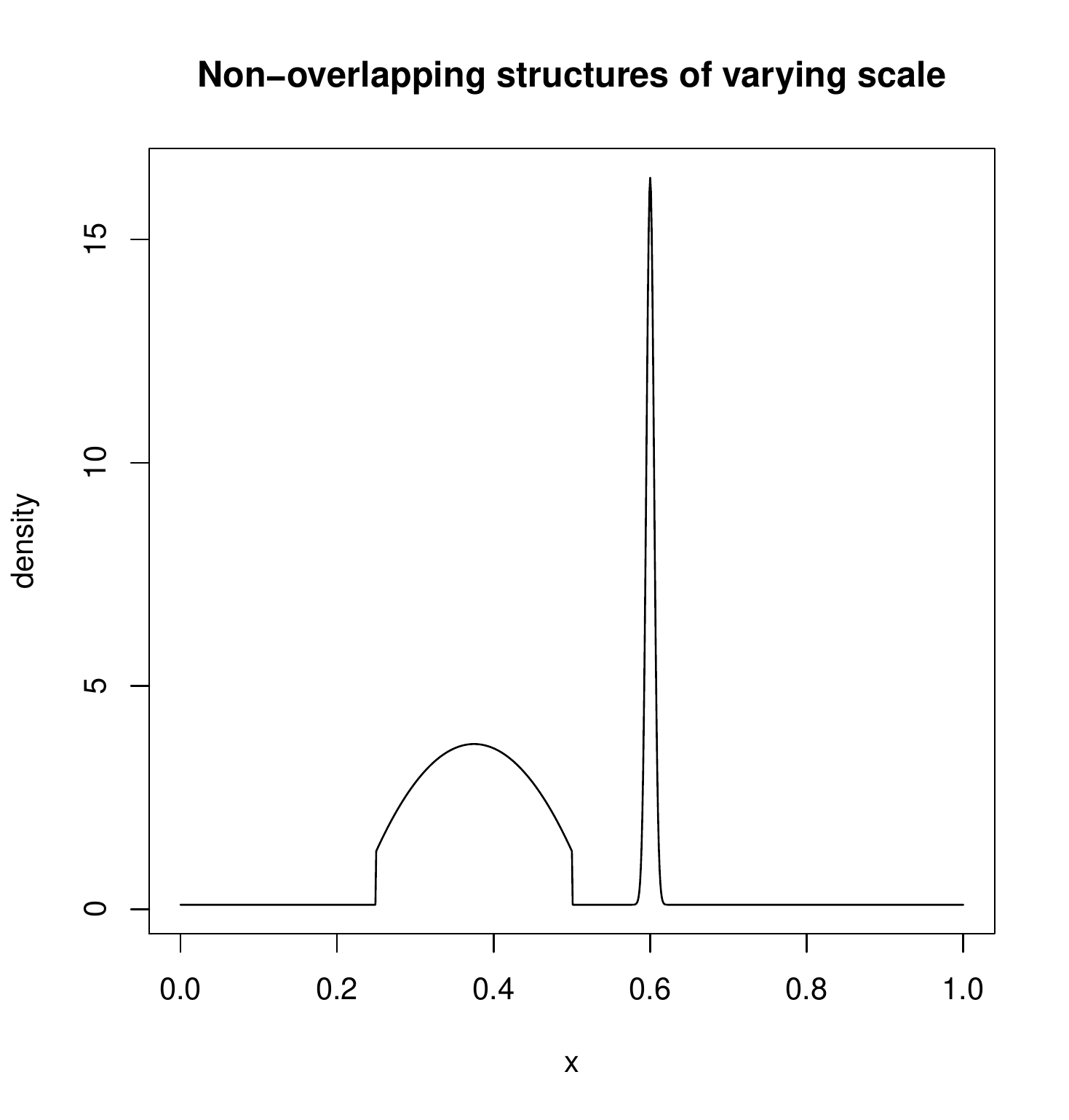}}
   }

    \mbox{
      \subfigure[Scenario~3]{\includegraphics[width=0.35\textwidth]{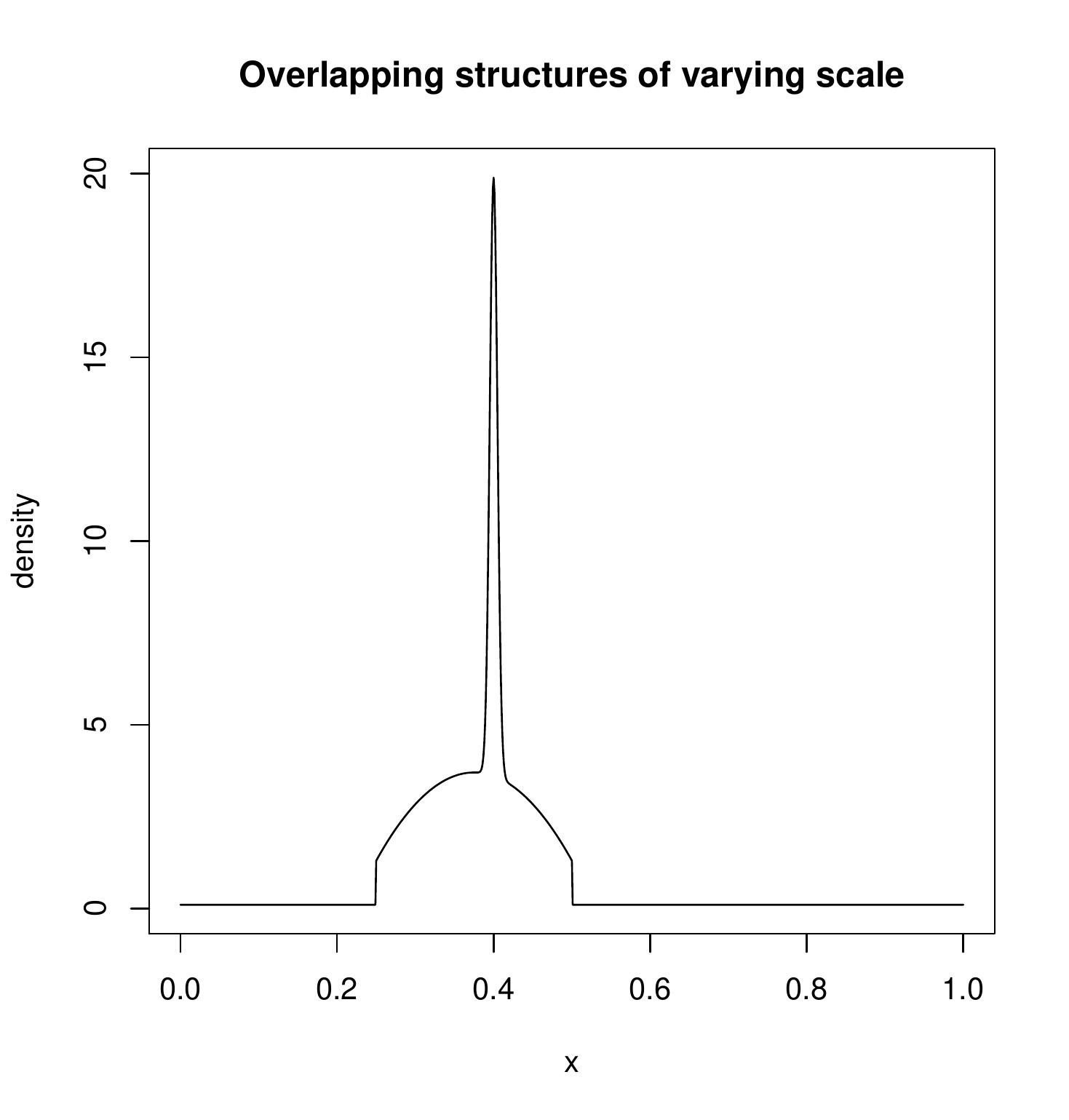}}
      \subfigure[Scenario~4]{\includegraphics[width=0.35\textwidth]{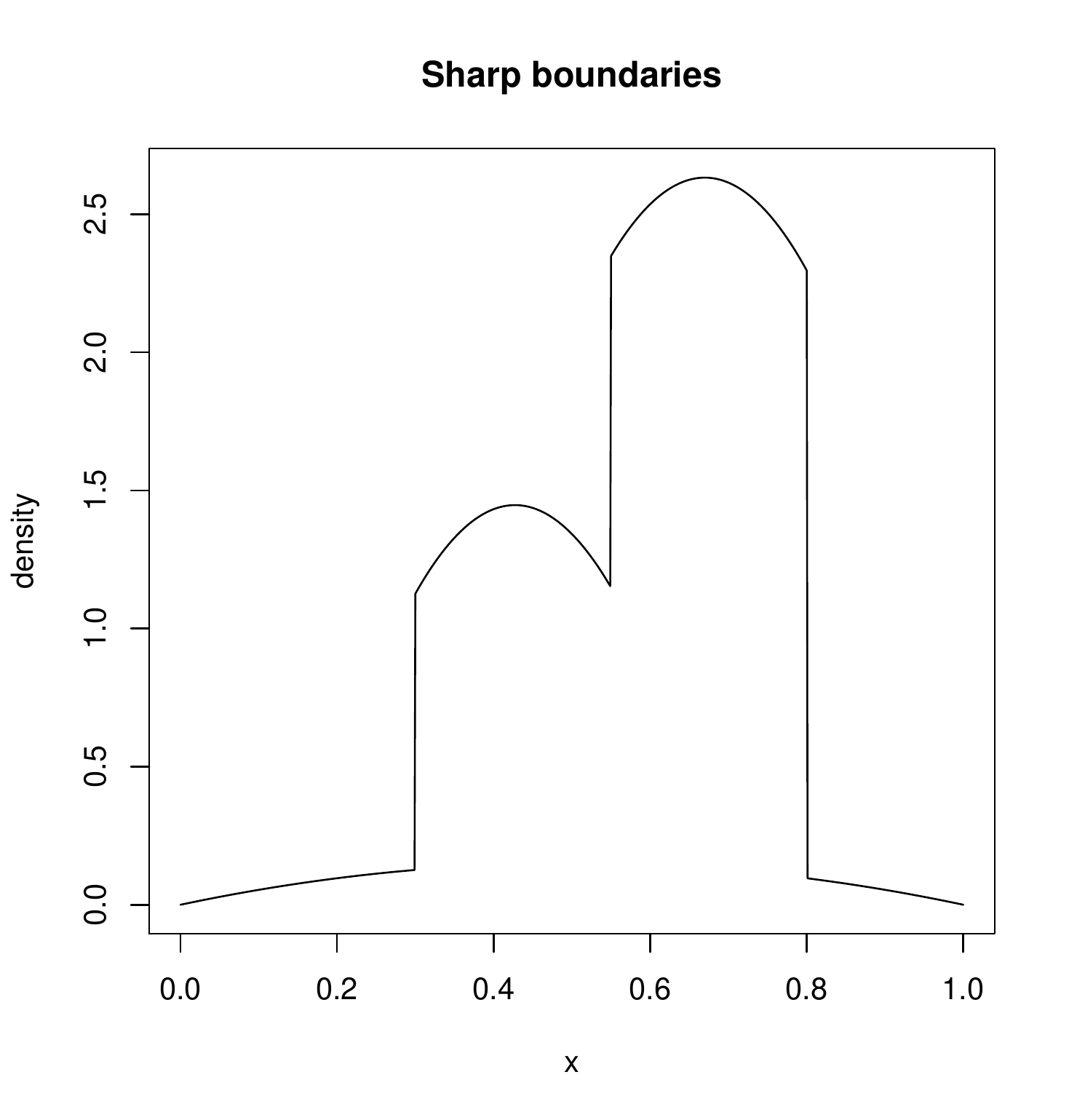}}

    }

      \subfigure[Scenario~5]{\includegraphics[width=0.35\textwidth]{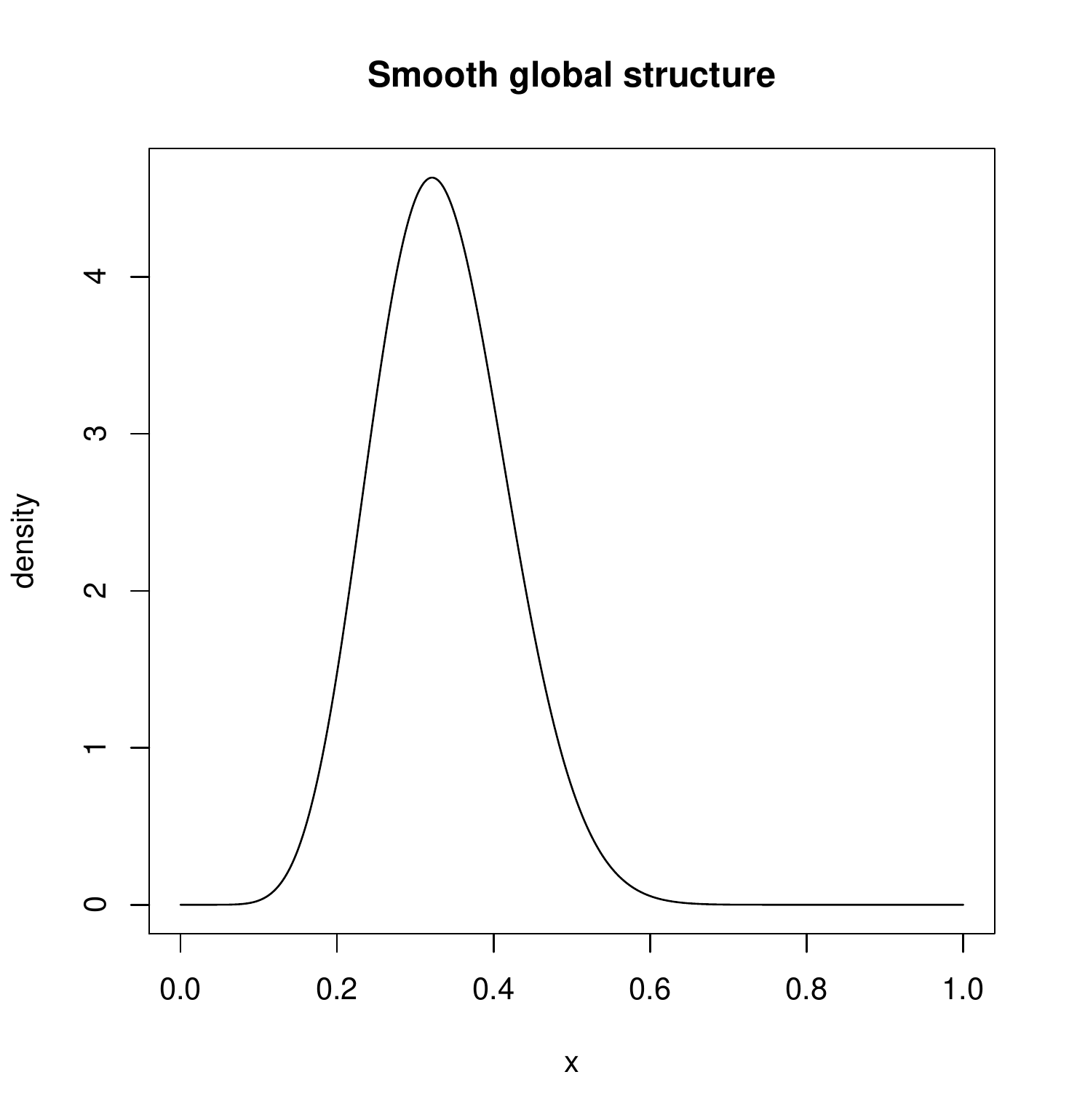}}
    \mbox{

   }
    \caption{True densities of the five simulation scenarios}
    \label{fig:true_dens}
  \end{center}
\end{figure}

\begin{figure}[p]
  \begin{center}
    \mbox{
      \subfigure[Scenario~1. Spiky local structure.]{\includegraphics[width=0.8\textwidth]{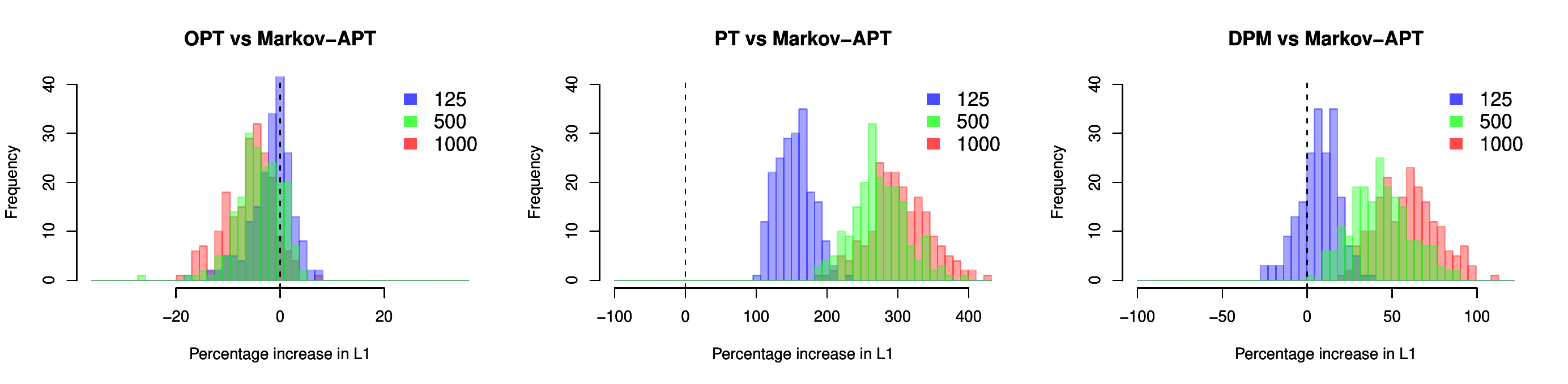}}
    }
    \mbox{
      \subfigure[Scenario~2. Non-overlapping structures of varying scale.]{\includegraphics[width=0.8\textwidth]{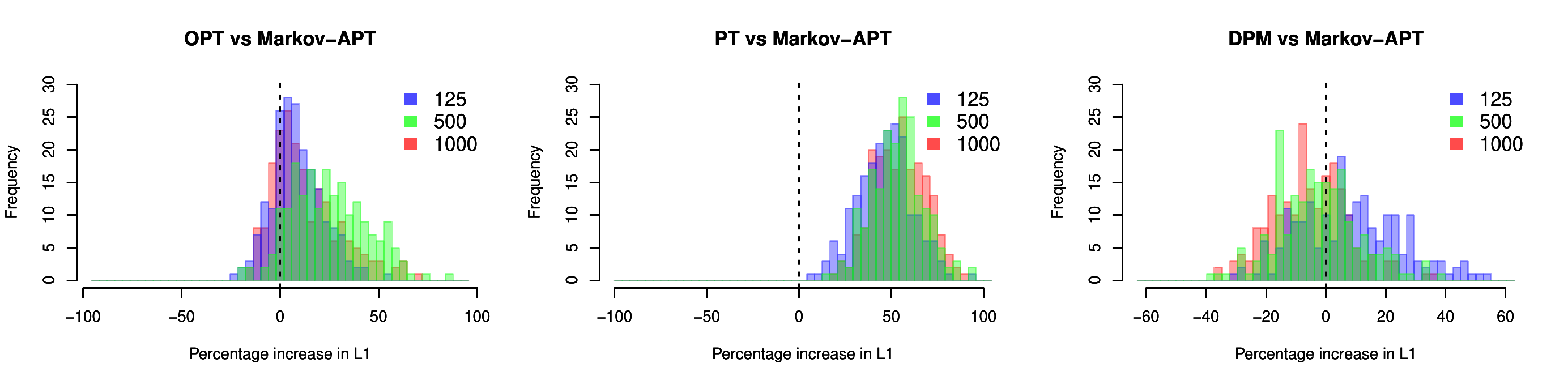}}
    }
    \mbox{
      \subfigure[Scenario~3. Overlapping structures of varying scale.]{\includegraphics[width=0.8\textwidth]{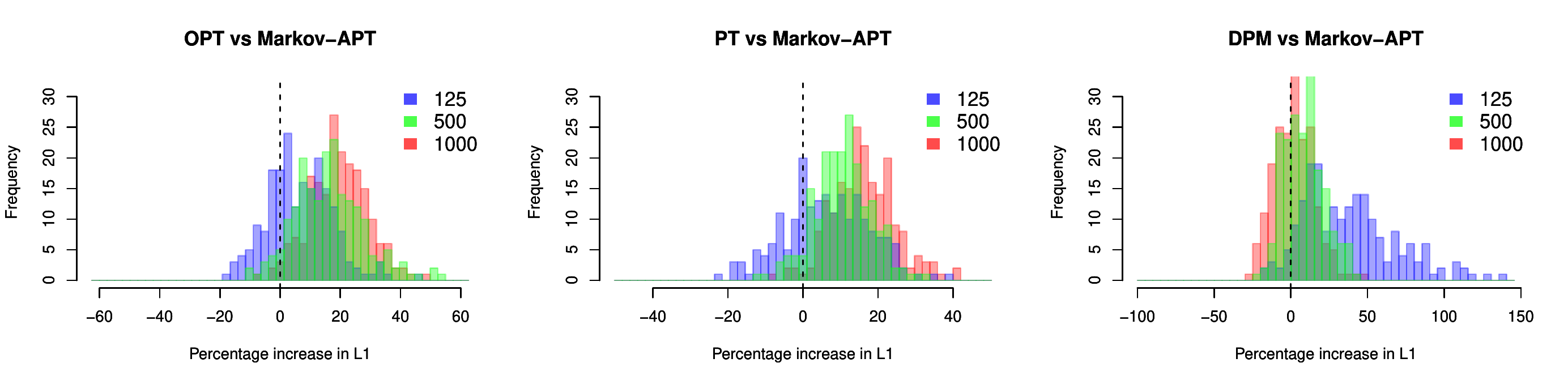}}
    }
    \mbox{
      \subfigure[Scenario~4. Sharp boundaries.]{\includegraphics[width=0.8\textwidth]{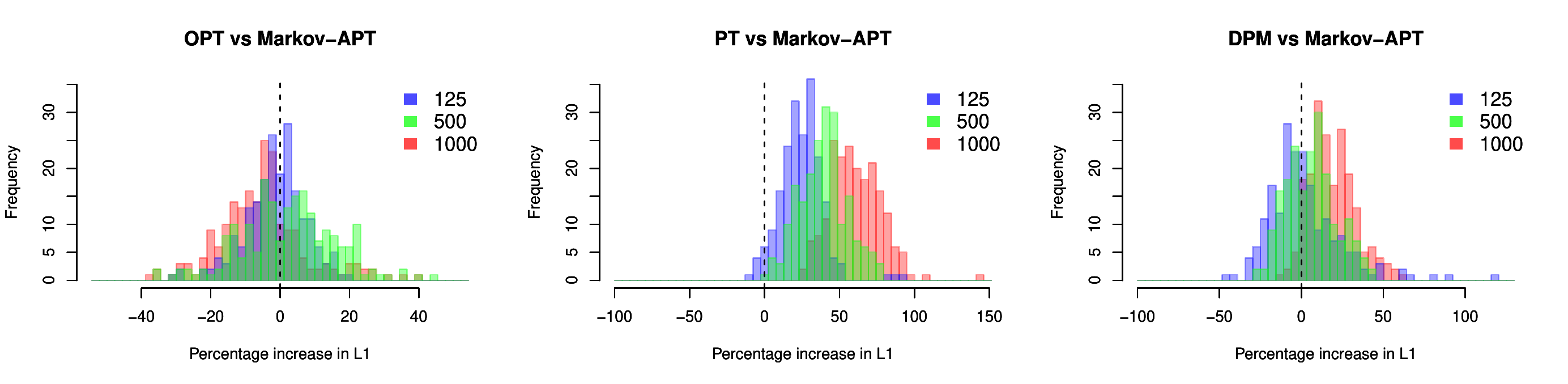}}
    }
    \mbox{
      \subfigure[Scenario~5. Smooth global structure.]{\includegraphics[width=0.8\textwidth]{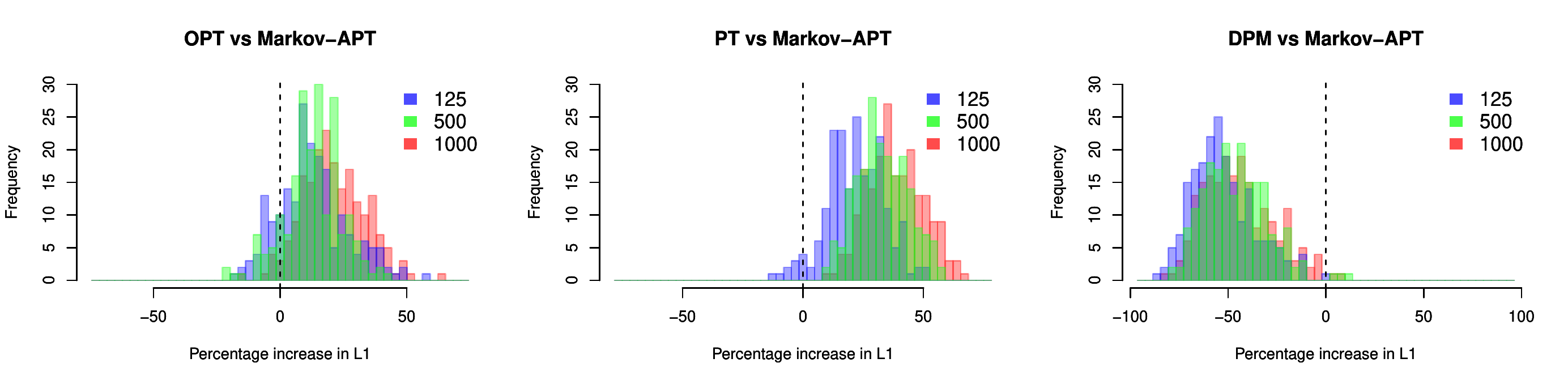}}
    }
    \caption{Histograms of percentage increase in $L_1$ risk for three methods compared to Markov-APT over three sample sizes---125, 500, and 1000. }
    \label{fig:relative_L1}
  \end{center}
\end{figure}

\begin{figure}[p]
  \begin{center}
    \mbox{
      \subfigure[Scenario~1]{\includegraphics[width=0.4\textwidth]{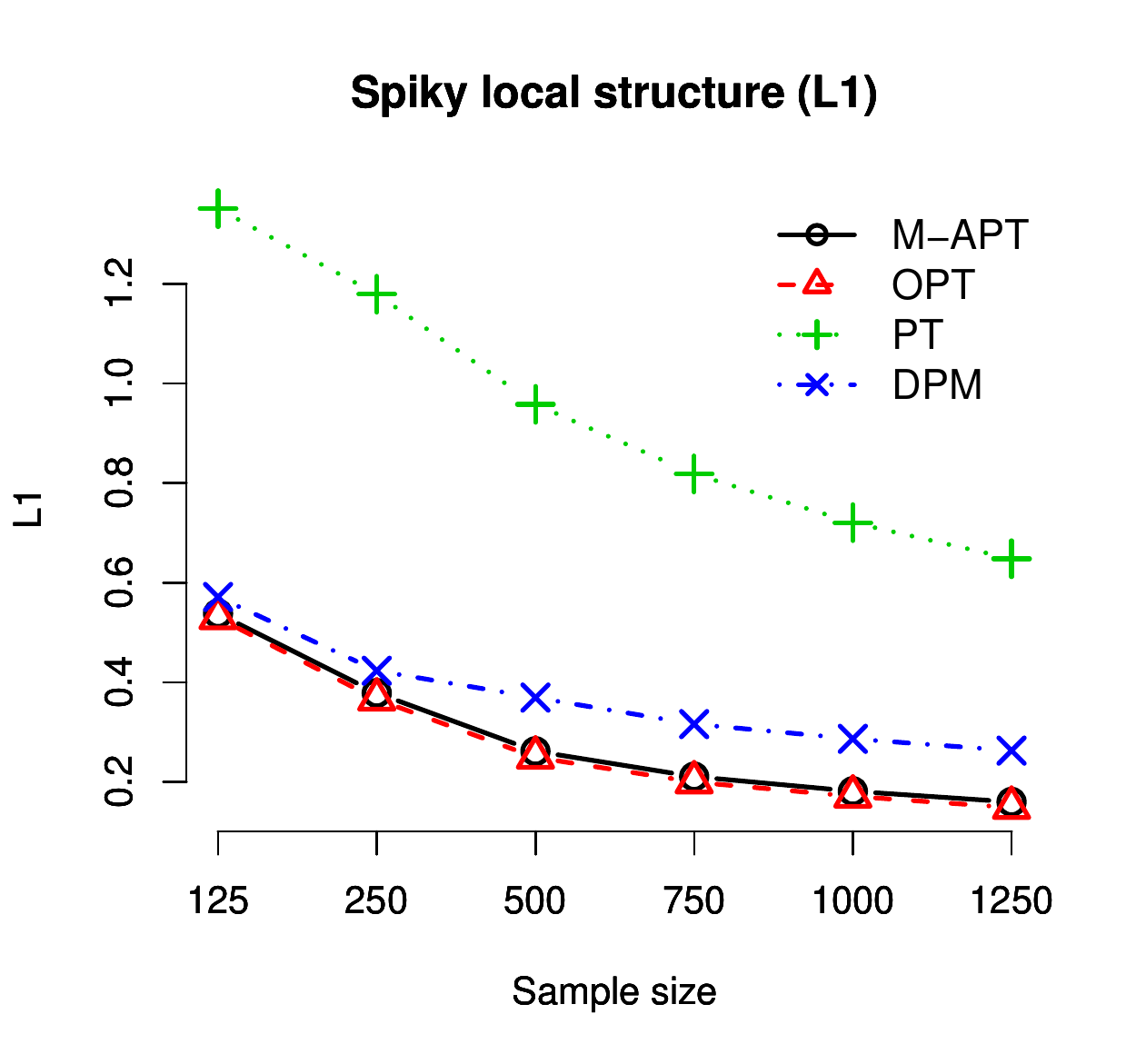}}
      \subfigure[Scenario~2]{\includegraphics[width=0.4\textwidth]{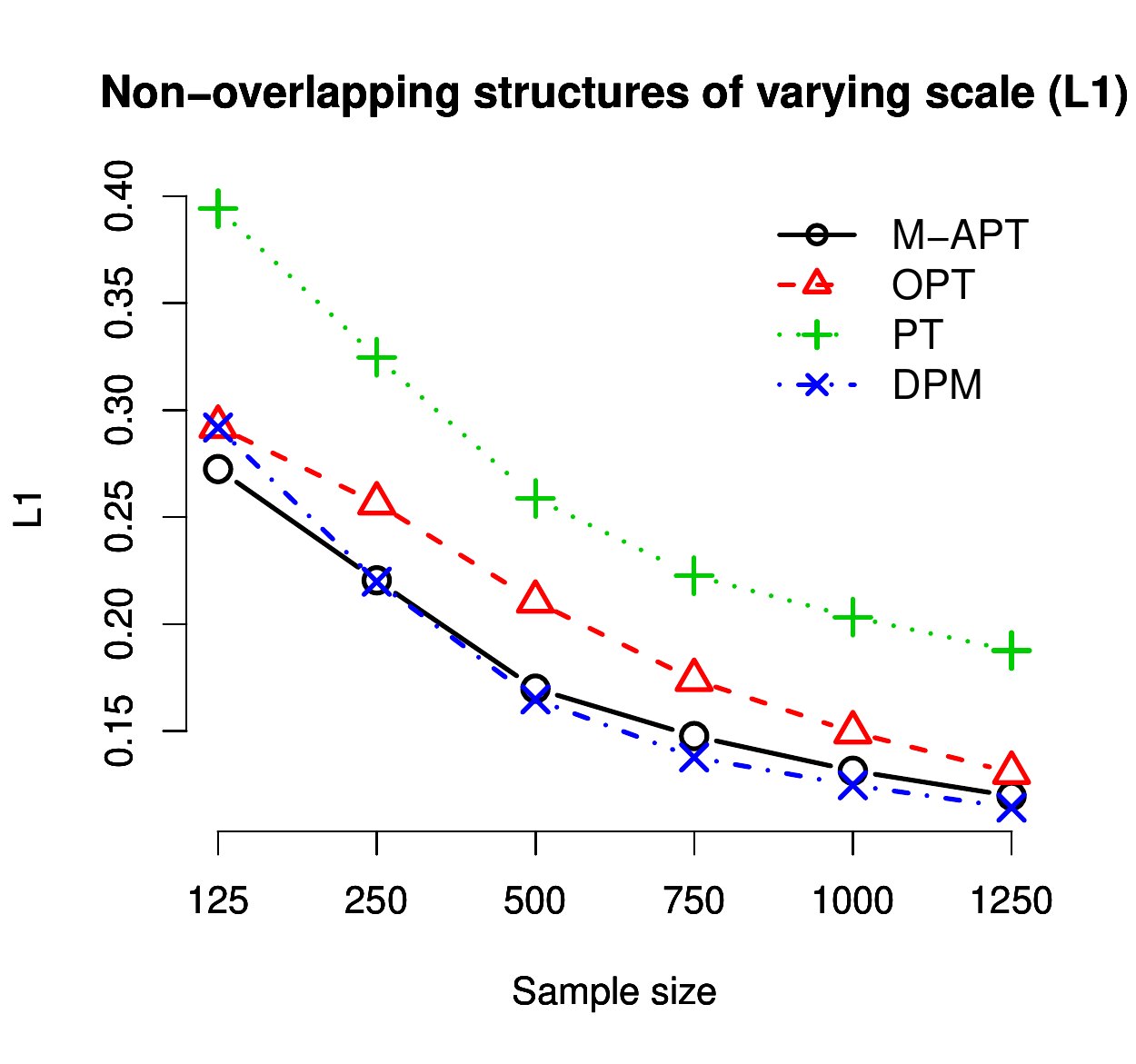}}
   }

    \mbox{
      \subfigure[Scenario~3]{\includegraphics[width=0.4\textwidth]{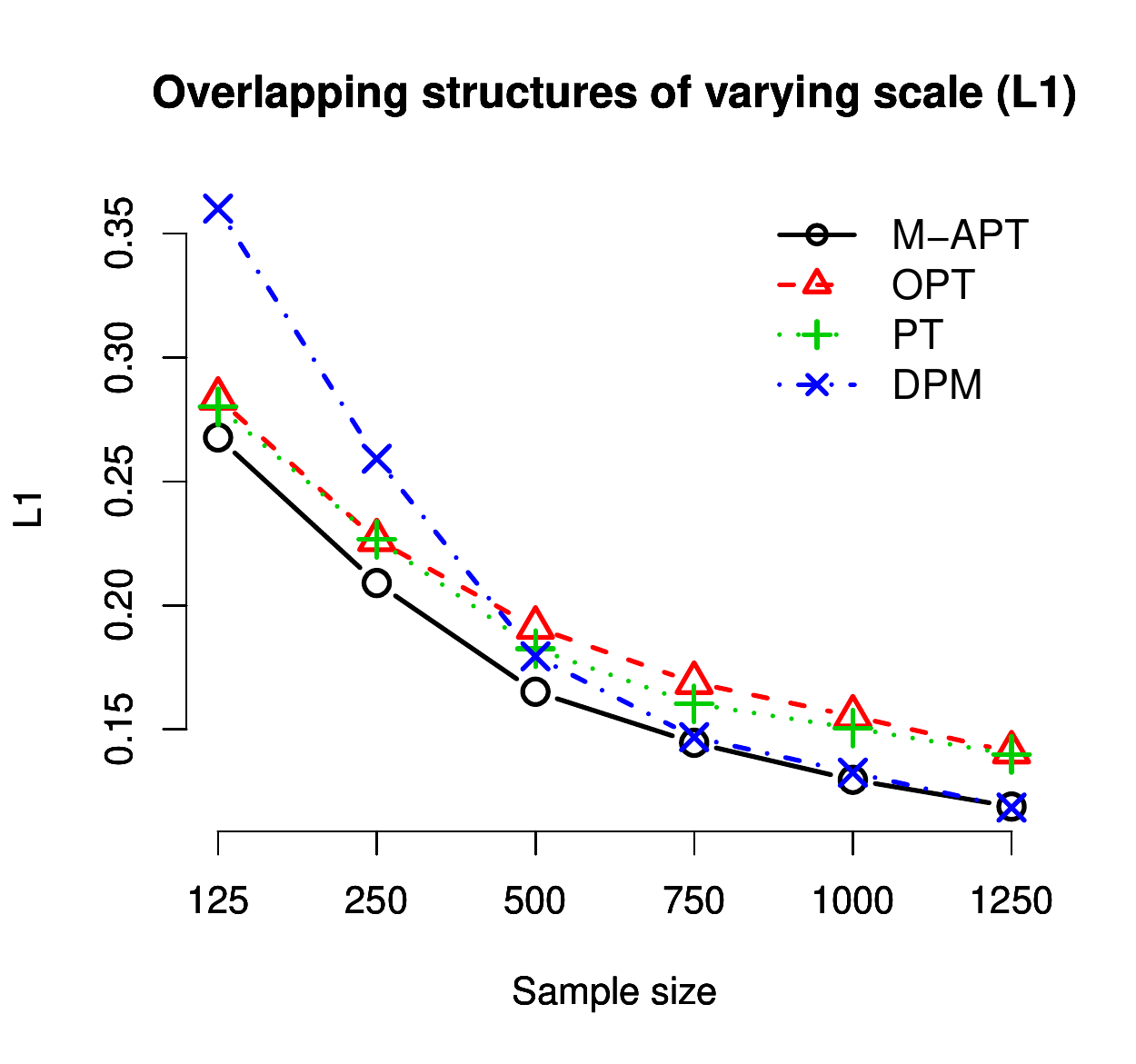}}
      \subfigure[Scenario~4]{\includegraphics[width=0.4\textwidth]{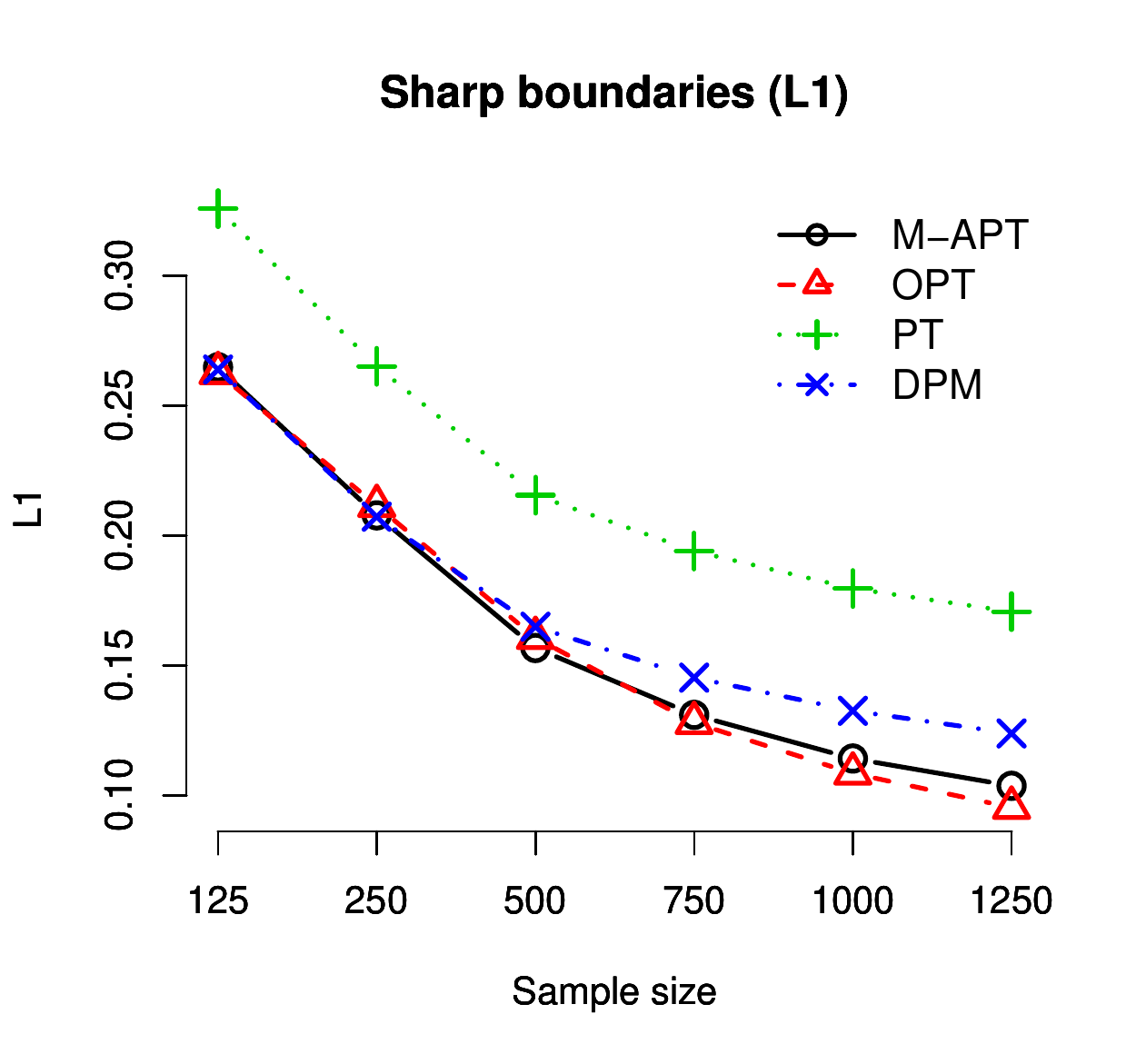}}

    }

    \mbox{
      \subfigure[Scenario~5]{\includegraphics[width=0.4\textwidth]{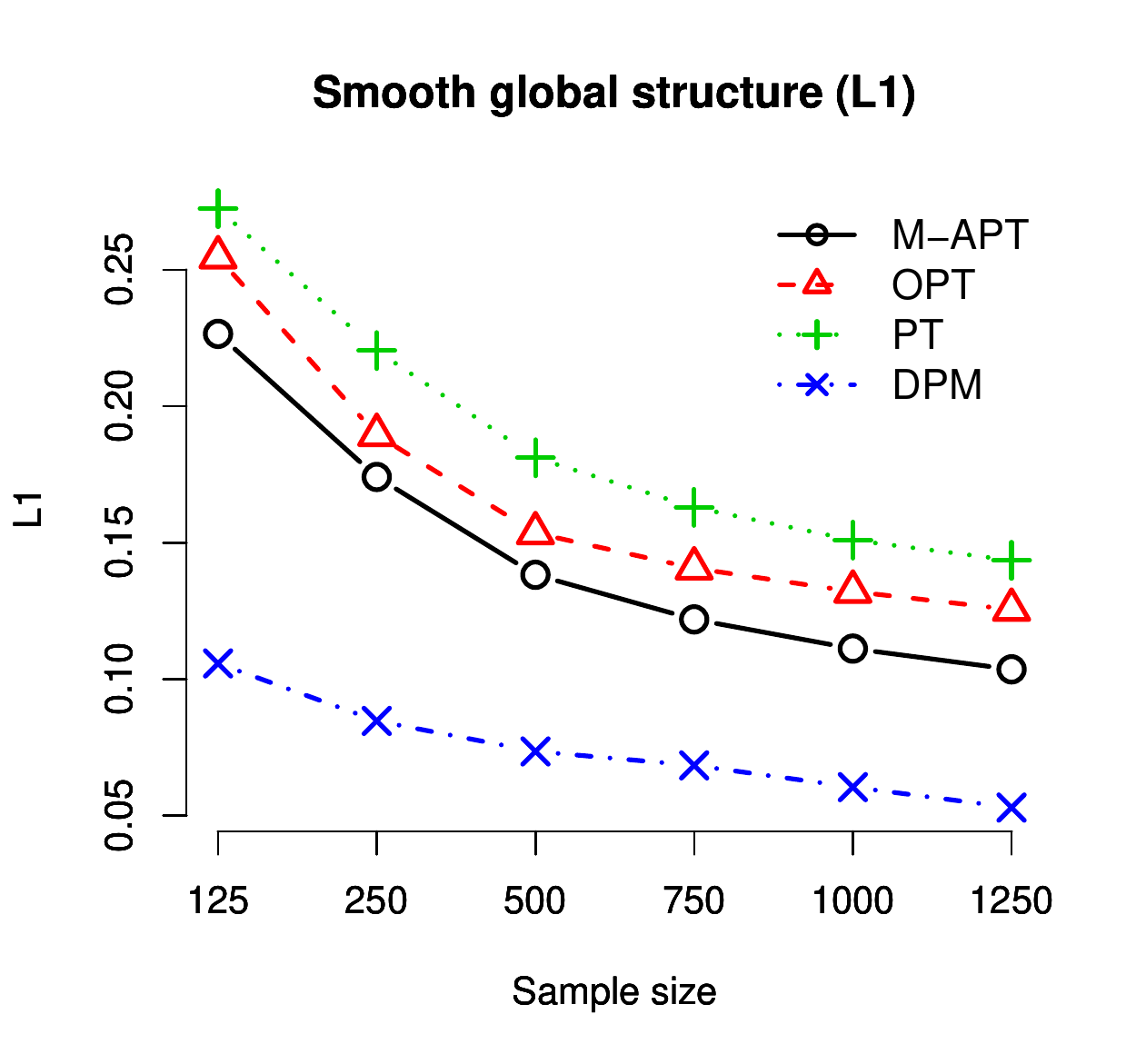}}

   }
    \caption{Estimated $L_1$ risk of four methods by sample size for the five simulation scenarios.} 
    \label{fig:L1}
  \end{center}
\end{figure}

\begin{figure}[p]
  \begin{center}
    \mbox{
      \subfigure[Scenario~1. $n=500$. Markov-APT: $\hat{I}=4$ and $\hat{\beta}=1.2$. OPT: $\hat{\rho}=0.26$.]{\includegraphics[width=0.8\textwidth]{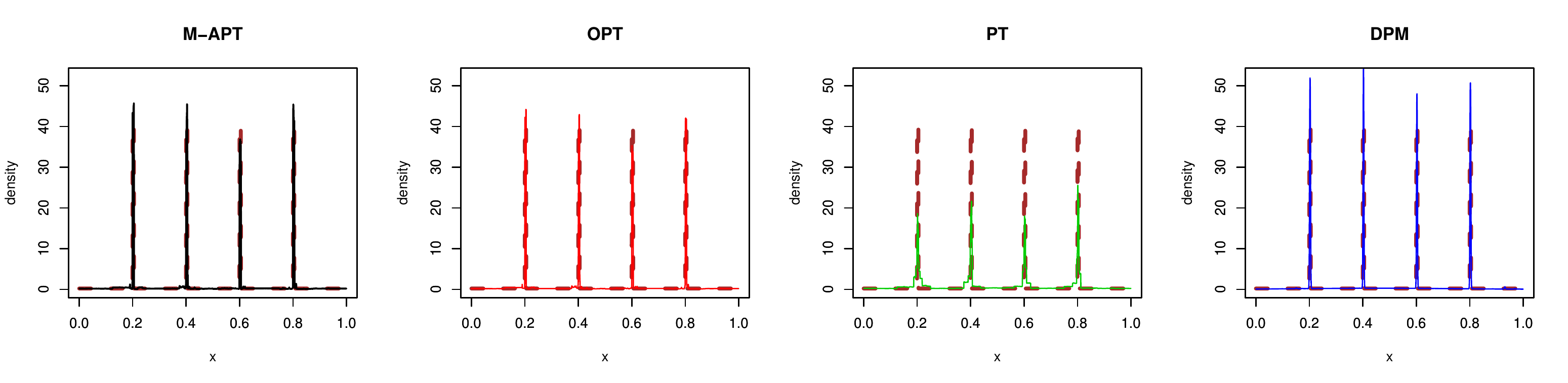}}
    }
    \mbox{
      \subfigure[Scenario~2. $n=500$. Markov-APT: $\hat{I}=8$ and $\hat{\beta}=0.70$. OPT: $\hat{\rho}=0.38$.]{\includegraphics[width=0.8\textwidth]{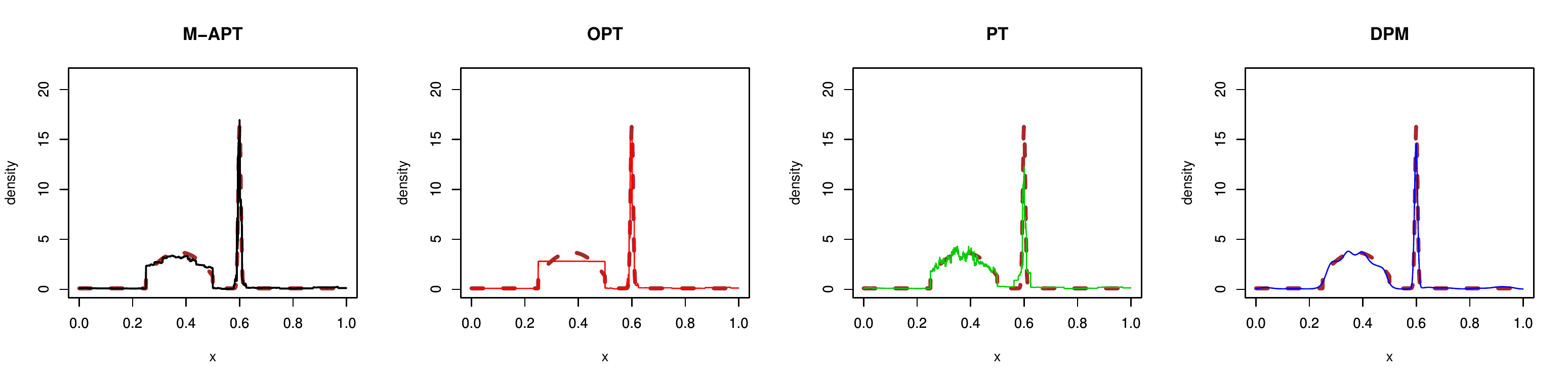}}
    }
    \mbox{
      \subfigure[Scenario~3. $n=500$. Markov-APT: $\hat{I}=11$ and $\hat{\beta}=0.50$. OPT: $\hat{\rho}=0.50$.]{\includegraphics[width=0.8\textwidth]{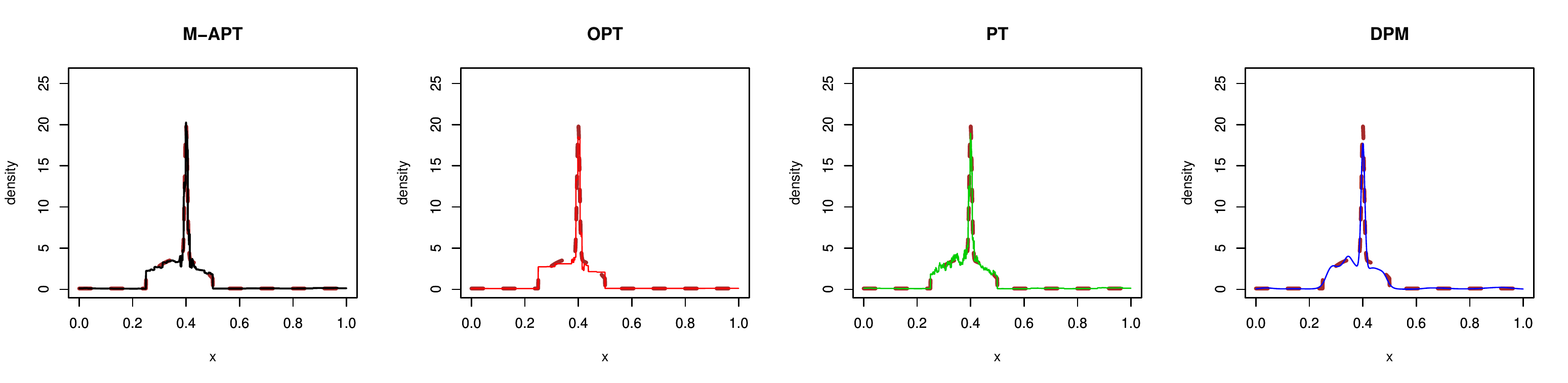}}
    }
    \mbox{
      \subfigure[Scenario~4. $n=1000$. Markov-APT: $\hat{I}=6$ and $\hat{\beta}=0.75$. OPT: $\hat{\rho}=0.42$.]{\includegraphics[width=0.8\textwidth]{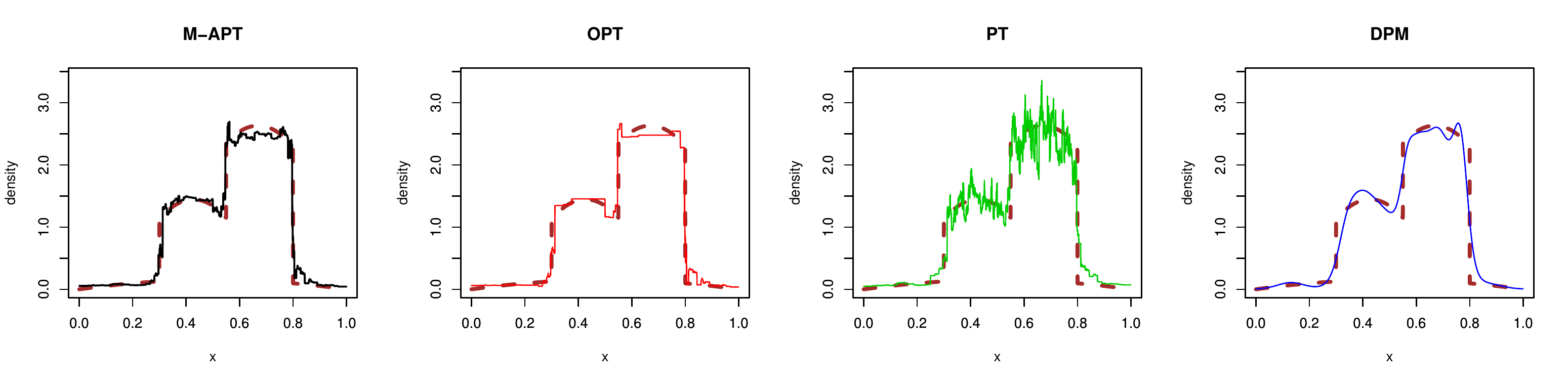}}
    }
    \mbox{
      \subfigure[Scenario~5. $n=500$. Markov-APT: $\hat{I}=11$ and $\hat{\beta}=0.60$. OPT: $\hat{\rho}=0.38$.]{\includegraphics[width=0.8\textwidth]{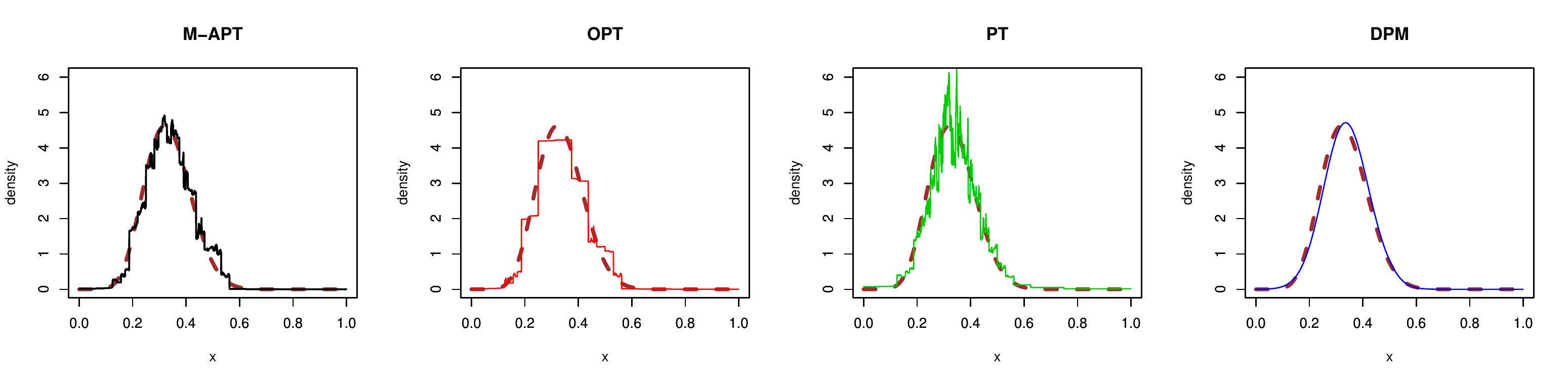}}
    }
    \caption{Typical PPDs (solid) of four methods and true density (dashed) for the five simulation scenarios. Sample sizes and tuning parameter values chosen by MMLE are given.}
    \label{fig:pred_dens}
  \end{center}
\end{figure}

\vspace{-1.5em}

\section{Discussion}
\label{sec:discussion}
\vspace{-0.7em}

We have showed that inference under PT-type multi-resolution models can be understood from a shrinkage perspective, and have introduced a hierarchical Bayesian approach to incorporating adaptive shrinkage. The APT and Markov-APT models can be easily applied in hypothesis testing such as for testing a parametric null versus a nonparametric alternative as previously studied in \cite{berger:2001} and testing two-sample differences as studied in \cite{holmes:2009,maandwong:2011,chen:2014}. Under this framework, the testing of features of the underlying distribution is transformed into testing a collection of local hypotheses organized on a partition tree, one for each node in the tree. As such, stochastically increasing shrinkage should be of less importance (as a motivation for choosing the appropriate probability transition matrix of the MT) than proper adjustment of multiple testing. Thus in such applications the strategy for prior specification is different. 

Last but very importantly, inference under PT-type multi-resolution models such as PT, OPT, APT and Markov-APT is extremely computationally efficient due to the conjugate hierarchical design and the forward-backward algorithm. 
In particular, for the most sophisticated model in this class, the Markov-APT, the dominating step in computing the posterior and PPD is computing the mappings $\xi_{A}(i,\bphi)$ through Lemma~\ref{lem:forward}, but it 
takes less than 0.1 second even with sample size 1250 for all of our numerical scenarios on a single Intel Core-i7 3.6Ghz CPU core with 400 Mbs of RAM. Moreoever, the computing time and required RAM stay essentially constant for sample sizes in the typical range (from tens to tens of thousands). In contrast, fitting the DPM model using MCMC is much more computationally expensive both in time and in memory. In particular, fitting the DPM in {\tt R} using the {\tt DPpackage} for each of the five scenarios at sample size 1250 takes about 4 to 7 minutes on the same machine and requires about 1.5 Gbs of RAM.

We believe that PT-type multi-resolution methods has tremendous potential for applications where the underlying distribution involves abrupt changes such as spikes or sharp boundaries, as well as where computational efficiency is of critical importance, such as in real-time change-point detection, online applications, and those applications with very large sample sizes. Therefore, additional effort is worthwhile to study the theory and further improve the statistical and computational performance of this class of methods.

\section*{Acknowledgment}
This research is supported by NSF grant DMS-1309057.

\vspace{-0.5em}

\bibliography{apt}

\newpage

\section*{Supplementary Materials}
\subsection*{S1.~Technical proofs}
\begin{proof}[Proof of Lemma~\ref{lem:uniqueness_of_PACs}]
The existence of a collection of PACs for $G$ follows immediately from the definition of PACs by letting $\theta(A)=G(A_l)/G(A)$ for each $A$ such that $G(A)>0$ and $\theta(A)=0$ otherwise. The uniqueness follows because $\A^{(\infty)}$ forms a $\pi$-system that generates the Borel $\sigma$-algebra. So by the extension theorem, two distributions with the same PACs on all $A$s such that $G(A)>0$ must be the same up to a set of $\mu$-measure 0.
\end{proof}

\begin{proof}[Proof of Theorem~\ref{thm:apt_prior_mean}]
Given $\bnu$, $Q$ has a PT distribution with mean $Q_0$. That is $E(Q(B)|\bnu)=Q_0(B)$. The result follows immediately by the law of iterated expectation.
\end{proof}

\begin{proof}[Proof of Theorem~\ref{thm:posterior_apt}]
This theorem follows immediately from applying the Bayes rule to the conjugate hierarchical model, and is described in detail in Section~\ref{sec:post_inf}. 
\end{proof}

\begin{proof}[Proof of Theorem~\ref{thm:mapt_prior_mean}]
Given $\C$, $Q$ has an APT distribution with mean $Q_0$ by Theorem~\ref{thm:apt_prior_mean}. That is $E(Q(B)|\C)=Q_0(B)$. The result follows immediately by the law of iterated expectation.
\end{proof}

\begin{proof}[Proof of Theorem~\ref{thm:absolute_continuity}]
Let $Q^{(k)}$ be the level-$k$ truncated version of $Q$. That is, $Q^{(k)}(A)=Q(A)$ for all $A\in\A^{(k)}$ and $Q^{(k)}(\cdot|A)=Q_0(\cdot|A)$ for all $A\in \A^{k}$. By the same argument as in the proof of Theorem~1 in \cite{wongandma:2010} (with $Q_0$ replacing $\mu$), we know that $Q^{(k)}$ converges in total variational distance to $Q$ as $k\rightarrow \infty$. By construction, $Q^{(k)}\ll Q_0$ for all $k$. Now for any set $B$ such that $Q(B)>0$, then there must exist some $k$ such that $Q^{(k)}(B)>0$, and therefore $Q_0(B)>0$. Hence $Q\ll Q_0$.
\end{proof}

\begin{proof}[Proof of Theorem~\ref{thm:large_support}]
Let $\tilde{q}=q/q_0$ and $\tilde{g}=g/q_0$ where $q_0=dQ_0/d\mu$. Our goal is to prove that for any $\tau > 0$,
\[
P\left(\int|\tilde{q}-\tilde{g}| dQ_0 < \tau\right) > 0.
\]
First we assume that $\tilde{g}$ is continuous and bounded, and let $M$ be a finite upperbound of $\tilde{g}$. For any $\sigma>0$, there exists a compact set $E$ such that there is a partition $\om= \cup_{i} A_{i}$ such that the diameter of each $A_{i}\cap E$ is less than $\sigma$. By the absolute continuity of $G$ w.r.t $Q_0$, there exists $\beta(\sigma)>0$ such that $G(E^c)<\beta(\sigma)$ if $Q_0(E^c)<\sigma$ and $\beta(\sigma)\downarrow 0$ as $\sigma\downarrow 0$. We define the modulus of continuity of $\tilde{g}$ on $E$ as
\[
\delta_{E}(\epsilon) = \sup_{x,y\in E: |x-y| < \epsilon} |\tilde{g}(x)-\tilde{g}(y)|.
\]
Note that by the continuity of $\tilde{g}$ and the compactness of $E$, $\delta_{E}(\epsilon)\downarrow 0$ as $\epsilon \downarrow 0$. Now we approximate $\tilde{g}$ by a step function $\tilde{g}^{*}(x)=\sum_{i} \tilde{g}_{i}^{*} I_{A_{i}}$ where $\tilde{g}_{i}^{*} = \int_{A_{i}\cap E} \tilde{g} dQ_0/Q_0(A_{i}\cap E)$. Let $D_{\epsilon}(\tilde{g})$ be the set of step functions $h(\cdot)=\sum_{i} h_i I_{A_{i}}(\cdot)$ such that $\sup_{i}|h_i-\tilde{g}_{i}^{*}|< \delta_{E}(\epsilon)+M\sigma$.

Suppose $h\in D_{\epsilon}(\tilde{g})$. For any $B\in \B$, the Borel sets, we have $B_{i}=B\cap A_{i}$. Then
\begin{align*}
&\Big| \int_{B} (h-\tilde{g}) dQ_0 \Big| \leq \sum_{i} |h_i-\tilde{g}_{i}^{*}| Q_0(B_{i}) + \sum_{i}\Big| \tilde{g}_{i}^{*}\, Q_0(B_{i}) - \int_{B_{i}} \tilde{g}dQ_0 \Big| \\
\leq& (\delta_{E}(\epsilon)+M\sigma) Q_0(B) + \sum_{i}\Big| \tilde{g}_{i}^{*}\, Q_0(B_{i}\cap E) - \int_{B_{i}\cap E} \tilde{g}dQ_0 \Big| + \sum_{i}\Big| \tilde{g}_{i}^{*}\, Q_0(B_{i}\cap E^c) - \int_{B_{i}\cap E^c} \tilde{g}dQ_0 \Big|\\
\leq& (\delta_{E}(\epsilon)+M\sigma)Q_0(B) + \sum_{i} r_{i} + 2 M \cdot Q_0(E^c)\\
<& (\delta_{E}(\epsilon)+M\sigma) Q_0(B) + \sum_{i} r_{i} + 2M\sigma
\end{align*}
where
\begin{align*}
r_{i} &= Q_0(B_{i}\cap E) \Bigg|\frac{\int_{A_{i}\cap E} \tilde{g} dQ_0}{Q_0(A_{i}\cap E)} - \frac{\int_{B_{i}\cap E}\tilde{g} dQ_0}{Q_0(B_{i}\cap E)} \Bigg|\\
&= Q_0(B_{i}\cap E) \Bigg|\frac{\int_{A_{i}\cap E} \left(\tilde{g}(x)  - \tilde{g}(x_{i})\right)q_0(x)\, dx}{Q_0(A_{i}\cap E)} - \frac{\int_{B_{i}\cap E}\left(\tilde{g}(x) - \tilde{g}(x_i)\right) q_0(x)\, dx}{Q_0(B_{i}\cap E)} \Bigg|
\end{align*}
for some $x_{i} \in B_{i}$. Thus
\[
|r_{i}| < 2 \delta_{E}(\epsilon) Q_0(B_{i})
\]
and so
\[
\Big| \int_{B} (h-\tilde{g}) dQ_0 \Big| < 3 \delta_{E}(\epsilon) Q_0(B) + 3M\sigma\quad \text{for all $B\in \B$.}
\]
Therefore by taking $B=\{x:h>\tilde{g}\}$ and $B=\{x:h\leq \tilde{g}\}$, we get
\[
\int  | h - \tilde{g} | dQ_0 < 3 \delta_{E}(\epsilon) + 6M\sigma.
\]
Now we let $Q^{(k)}$ be the level-$k$ truncated version of $Q$. That is, $Q^{(k)}(A)=Q(A)$ for all $A\in\A^{(k)}$ and $Q^{(k)}(\cdot|A)=Q_0(\cdot|A)$ for all $A\in \A^{k}$.
By the conditions in the theorem, we have for $\tilde{q}^{(k)}= q^{(k)}/q_0$ where $q^{(k)}=d Q^{(k)}/d\mu$,
\[
P\left(\tilde{q}^{(k)} \in D_{\epsilon}(\tilde{g})\text{ for all large $k$}\right) > 0.
\]
Thus
\[
P\left(\int |\tilde{q}^{(k)}-\tilde{g}|d Q_0 < 3\delta_{E}(\epsilon)+6M\sigma  \text{ \,\,for all large $k$}\right) > 0.
\]
But since 
\[
P\left(\int |\tilde{q}^{(k)} - \tilde{q}| dQ_0 \rightarrow 0\right) = 1,
\]
combining these we get
\[
P\left(\int |\tilde{q}-\tilde{g}|d Q_0 < 4 \delta_{E}(\epsilon)+6 M\sigma \right) > 0.
\]
The result follows by letting $\epsilon \downarrow 0$ and $\sigma\downarrow 0$.

Finally, if $\tilde{g}$ is not continuous and bounded, then since $Q_0$ is a probability measure, $\tilde{g}$ can be approximately arbitrarily well in $L_1$ w.r.t $Q_0$ by a continuous bounded density.
\end{proof}

\begin{proof}[Proof of Theorem~\ref{thm:post_consistency}]
Let $p_0=dP_0/d\mu$, $q_0=dQ_0/d\mu$, $\tilde{p}_0=dP_0/dQ_0$, and for any $Q\ll Q_0$, $\tilde{q}=dQ/dQ_0$. Let $M$ be a finite upperbound of $\tilde{p}_0$. Then the Kullback-Leibler (K-L) distance between $p_0$ and $q$ is given by
\[
{\rm KL}_{\mu}(p_0,q) = \int p_0\log(p_0/q)d\mu = \int \tilde{p}_0 \log(\tilde{p}_0/\tilde{q}) dQ_0 = {\rm KL}_{Q_0}(\tilde{p}_0,\tilde{q}).
\]
By Lusin's theorem we have a compact $E\subset \om$ with $Q_0(E^c)<\epsilon'$ such that $\tilde{p}_0$ is continuous on $E$. This $E$ can be chosen such that for every $\epsilon>0$, there exists a partition, $\om=\cup_{i} A_i$ with all $A_i \in \A^{(k)}$ for some $k$, such that the diameter of each $A_i\cap E$ is less than $\epsilon$.
We define
\[
\delta_{E}(\epsilon) =  \sup_{x,y\in E :|x-y|<\epsilon} |\tilde{p}_0(x)-\tilde{p}_0(y)|
\quad \text{and} \quad
d_i = \max\left(\sup_{A_i\cap E} \tilde{p}_0(x) + \delta_{E}(\epsilon),\epsilon'\right)
\]
and let $D_{\epsilon}(\tilde{p}_0)$ be the collection of step functions $g(x)=\sum_{i}g_i \I_{A_i}(x)$ with $d_i \leq g_i < d_i + \delta_{E}(\epsilon)$.
For every $g\in D_{\epsilon}(\tilde{p}_0)$, let $\tilde{g}$ be the normalized version of $g$, that is $\tilde{g}=g/\int g dQ_0$. Then
\[
\int_{E} (g-\tilde{p}_0) d Q_0 -  \int_{E^c}|g-\tilde{p}_0| dQ_0  \leq \int (g-\tilde{p}_0) d Q_0 \leq \int_{E} (g-\tilde{p}_0) d Q_0 + \int_{E^c}|g-\tilde{p}_0| dQ_0,
\]
and so
\[
\delta_{E}(\epsilon) - (2M+\epsilon') \epsilon' \leq \int (g-\tilde{p}_0) d Q_0 \leq 3\delta_{E}(\epsilon) + (2M+\epsilon')\epsilon'.
\]
Thus for any fixed $\epsilon$, when $\epsilon'$ is small enough, we have $\int (g-\tilde{p}_0) d Q_0 \geq 0$, and thus,
\[
\log\left(\int g dQ_0\right) = \log\left(1+ \int (g-\tilde{p}_0) d Q_0 \right) \leq 3\delta_E(\epsilon) + (2M+\epsilon')\epsilon'.
\]
Now,
\begin{align*}
0 \leq {\rm KL}_{Q_0}(\tilde{p}_0,\tilde{g}) &= \int \tilde{p}_0\log(\tilde{p}_0/\tilde{g}) dQ_0\\
&=\int_{E}\tilde{p}_0\log(\tilde{p}_0/g)dQ_0 + \int_{E^c}\tilde{p}_0\log(\tilde{p}_0/g)dQ_0 + \log\left(\int g dQ_0\right)\\
&\leq M\log(M/\epsilon')\epsilon' + 3\delta_E(\epsilon) + (2M+\epsilon') \epsilon'.
\end{align*}
By first choosing $\epsilon'$ and then $\epsilon$ small enough, we can make ${\rm KL}_{Q_0}(\tilde{p}_0,\tilde{g})$ arbitrarily small. So $p_0$ lies in the K-L support of $\pi$. Therefore, by Schwartz's theorem, we have posterior consistency at $p_0$ under the weak topology.
\end{proof}

\begin{proof}[Proof of Lemma~\ref{lem:forward}]
If $n(A)=0$ then by definition $\xi_{A}(i,\bphi)=1$. If $A$ has no children, then also by definition $\xi_{A}(i,\bphi)=q_0(\bx|A)$. If $n(A)=1$, then $\xi_{A}(i,\bphi)$ becomes the conditional prior predictive density on $A$ valued at $x$, which is just $q_0(x|A)$ since the Markov-APT conditional $A$ is still an Markov-tree and by Theorem~\ref{thm:mapt_prior_mean} its predictive density is $Q_0$. Finally we consider the case when $A$ has children and $n(A)\geq 2$. For $A\in\A^{(\infty)}\backslash \om$,
\begin{align*}
\xi_{A}(i,\bphi) &= \int q(\bx | A)\pi(dq\,|\,\bphi,C(A_p)=i) \\
&= \int q(\bx | A)  \pi(dq\,|\,\bphi,C(A)=i')\gamma_{i,i'}(A)\\
&= \sum_{i'=1}^{I} \int q(\bx\,|\, A)  \pi(dq\,|\,\bphi,C(A)=i')\gamma_{i,i'}(A)\\
&= \sum_{i'=1}^{I} \gamma_{i,i'}(A) \int \theta(A)^{n(A_l)} (1-\theta(A))^{n(A_r)} \pi(\theta(A)\,|\,C(A)=i') \times \\
&\hspace{7em}  \int q(\bx\,|\,A_l)q(\bx\,|\,A_r) \pi(dq\,|\,\bphi,C(A)=i')\\
&= \sum_{i'=1}^{I} \gamma_{i,i'}(A) M_{A}^{i'}(\bthe_0) \xi_{A_l}(i',\bphi)\xi_{A_r}(i',\bphi).
\end{align*}
For $A=\om$, the derivation follows similarly with $\gamma_{i,i'}(A)$ replaced by $\gamma_{i'}(A)$.
\end{proof}

\begin{proof}[Proof of Theorem~\ref{thm:backward}]
This theorem follows by two applications of Bayes rule. For $A\in\A^{(\infty)}\backslash\om$, the posterior transition probability is
\begin{align*}
\tilde{\gamma}_{i,i'}(A) &= P(C(A)=i'\,|\,C(A_p)=i,\bx(A))\\
&=\int q(\bx\,|\,A) \gamma_{i,i'}(A) \pi(dq\,|\,C(A)=i')\big/\int q(\bx\,|\,A)\pi(dq\,|\,C(A_p)=i)\\
&=\begin{cases}
\gamma_{i,i'}(A)M_{A}^{i'}(\bthe_0) \xi_{A_l}(i',\bphi)\xi_{A_r}(i',\bphi)/\xi_{A}(i,\bphi) & \text{if $A$ has children}\\
\gamma_{i,i'}(A)M_{A}^{i'}(\bthe_0)/\xi_{A}(i,\bphi) & \text{otherwise}.
\end{cases}
\end{align*}
Therefore the state transition probability matrix is
\[
\bm{\tgam}(A) = \bm{D}'(A)^{-1}\bm{\gamma}(A) \bm{D}''(A).
\]
For $A=\om$, the proof for the initial state probability vector is similar. The expression follows because the overall marginal likelihood is $\xi_{\om}(1,\bphi)$.
\end{proof}

\subsection*{S2.~Prior specification of the DPM of normals}
We use the {\tt DPpackage} function {\tt DPdensity} to carry out density estimation using the Dirichlet process mixture (DPM) of normals. The model formulation and the hyperparameter values are as follows, which follows an example in the user manual for {\tt DPpackage}
\begin{align*}
x_i\,|\,\mu_i,\Sigma_i &\sim {\rm N}(\mu_i,\Sigma_i) \quad \text{for $i=1,2,\ldots,n$}\\
(\mu_i,\Sigma_i)\,|\,H &\sim H\\
H\,|\,\alpha,H_0 &\sim {\rm DP}(\alpha H_0)\\
H_0 &= {\rm N}(\mu|m_1,\Sigma/k_0)\times {\rm IW}(\Sigma|\nu_1,\psi_1) \\
\alpha\,|\,a_0,b_0 &\sim {\rm Gamma}(a_0,b_0)\\
m_1\,|\,m_2,s_2 &\sim {\rm N}(m_2,s_2)\\
k_0\,|\,\tau_1,\tau_2 &\sim {\rm Gamma}(\tau_1/2,\tau_2/2)\\
\psi_1\,|\,\nu_2,\psi_2 &\sim {\rm IW}(\nu_2,\psi_2)
\end{align*}
where $a_0=2$, $b_0=1$, $m_2=0$, $s_2=10^5$, $\psi_2={\rm diagonal}(0.5,1)$, $\nu_1=4$, $\nu_2=4$, $\tau_1=1$, and $\tau_2=100$. We draw 5,000 posterior samples using 1,000 burn-in iterations and a 10-iteration thinning window.
\end{document}